\newcommand{\interior}{\ensuremath{\textrm{int}}}
\newcommand{\R}{\ensuremath{\mathbb{R}}}
\newcommand{\N}{\ensuremath{\mathbb{N}}}
\newcommand{\Sph}{\ensuremath{\mathbb{S}}}
\newcommand{\Atlas}[1]{\ensuremath{\mathcal{A}\left({#1}\right)}}
\newcommand{\dom}{\ensuremath{\textrm{dom}}}
\newcommand{\ran}{\ensuremath{\textrm{ran}}}
\newcommand{\BP}[1]{\ensuremath{\textrm{BP}\left({#1}\right)}}
\newcommand{\BPM}[1]{\ensuremath{\textrm{B}_{\textrm{ch}}\left({#1}\right)}}
\newcommand{\ABBPM}[1]{\ensuremath{\mathcal{B}_{\textrm{ch}}\left({#1}\right)}}
\newcommand{\EX}[1]{\ensuremath{\textrm{EX}\left({#1}\right)}}
\newcommand{\rest}[2]{\ensuremath{{#1}\circ{#2}}}
\newcommand{\cur}[1]{\ensuremath{\mathcal{#1}}}
\newcommand{\covers}{\ensuremath{\rhd}}
\newcommand{\contact}{\ensuremath{\perp}}
\newtheorem{theorem}{Theorem}[section]
\newtheorem{definition}[theorem]{Definition}
\newtheorem{lemma}[theorem]{Lemma}
\newtheorem{corollary}[theorem]{Corollary}
\newtheorem{proposition}[theorem]{Proposition}
\theoremstyle{plain}
\newtheorem{example}[theorem]{Example}
\begin{document}

\title{The chart based approach to studying the global 
structure of a spacetime 
induces a coordinate invariant boundary.}

\author{B. E. Whale\hspace{1pt}\textsuperscript{1}}

\maketitle

\footnotetext[1]{Department of Mathematics 
and Statistics, University
of Otago, P.O. Box 56, Dunedin 9054, New Zealand\\
Tel.:+64-3-479-7758,
Fax.:+64-3-479-8427\\
\texttt{bwhale@maths.otago.ac.nz}
}
\begin{abstract}
	I demonstrate that
  the chart based approach to the study of the global structure of
  Lorentzian manifolds induces a homeomorphism of the manifold 
  into a topological space as an open dense set. 
  The topological boundary of 
  this homeomorphism is a
  chart 
  independent boundary of ideal points equipped with
  a topological structure and a
  physically motivated classification.
  I show that this new boundary contains all other boundaries that can
  be presented as the topological boundary
  of an envelopment. Hence, in particular, it is a generalisation of
  Penrose's conformal boundary.
  I provide three detailed examples:
  the conformal compactification
  of Minkowski spacetime, Scott and Szekeres' analysis of the Curzon
  singularity and Beyer and Hennig's analysis of smooth Gowdy 
  symmetric generalised Taub-NUT spacetimes.   
\end{abstract}

\tableofcontents

\section{Introduction}\label{sec.introduction}

	When presented with a spacetime and asked what its global structure is the
	usual approach is to construct charts that
	`extend to the edge of the manifold' and then analyse the behaviour of 
	coordinate invariant quantities
	in the charts, e.g.\ the Kretschmann scalar or a 
  congruence of geodesics.
	The charts are then altered so that the invariant quantities 
	have suitably nice behaviour in the limit to `the edge of
  the manifold'. The construction
	of the chart $\beta$, 
  described in Section \ref{sec:curzon}, is an example
	of this.
	The structure of these altered charts `on the edge' is then taken to be a 
	boundary
	for the spacetime, with justification coming from the nice behaviour of 
	the particular invariants used.
	Minkowski spacetime, equipped with Penrose's conformal boundary, viewed
	as a collection of coordinate transformations, is a
	well known example, see Section \ref{sec:Minkowski}.
	The original justification was based on the
	conformal invariance of the zero rest-mass free-field equation for any
	spin value, \cite{citeulike:8978534,citeulike:9600515,citeulike:10075981}.
	
	Unfortunately this approach is not necessarily chart independent. Could other 
	charts that
	also allow for nice behaviour of the invariants `on the edge' exist? 
	Could these 
	charts give boundaries that have different topologies? Given a
	chart that `extends to the edge' of a manifold and 
  an invariant which has nice behaviour
	`on the edge' do there exist other invariants
	which do not have nice behaviour `on the edge'? One of
	the motivators in the field of boundary constructions is the desire for a
	boundary that generalizes Penrose's conformal boundary and avoids these
	coordinate dependence problems, \cite{Geroch1968Local}.
  See \cite[Section 6]{citeulike:12270518} for further discussion of this
  and a review of boundary constructions in general.

	Penrose's conformal boundary, while well adapted to describing causal
	structure, fails to describe the gravitational field of an asymptotically
	flat space time which has non-zero ADM mass, 
	\cite[page 569]{citeulike:10075981}. The gravitational field is singular
	at both spacelike and past / future timelike infinities, 
	\cite[pages 568--569]{citeulike:10075981}. In the context
	of the discussion above: the conformal boundary is given by charts 
	for which the 
	invariant causal structure has nice behaviour `on the edge'.
  The same charts, however, fail to provide
	nice behaviour for the metric `on the edge', when
	the spacetime has non-zero ADM mass.
	This is one of the reasons underlying  Friedrich's modifications
	of Penrose's conformal boundary, \cite{citeulike:8879192,citeulike:8973162}.
	
	The definition of a coordinate
  independent boundary, that is easy to construct, fits with physical
	intuition and generalises some appropriate properties of
  Penrose's conformal
  boundary, has proven to be very difficult, 
	\cite[Section 2]{Ashley2002a}, \cite[Section 2.3]{Whale2010} and 
  \cite{citeulike:12270518}.
	The literature 
	contains numerous suggested
	boundaries, examples of how they behave badly and their subsequent 
	alterations, \cite[Section 2]{Ashley2002a} and \cite[Section 2.3]{Whale2010}. 
	As a result the boundaries that are still being developed
	are sophisticated and 
	technical, 
	\cite{citeulike:12270518,ScottSzekeres1994,Flores2010Final,%
  citeulike:8664207}. 

	The apparently necessary technicalities of 
	boundary constructions divorces them from the more natural 
  chart based approach. 
	As a consequence of this the field of boundary constructions
	has become separated from the main stream of research in Relativity.
	Penrose's conformal boundary is a notable exception. It has been successfully 
	applied in many areas, \cite{citeulike:12270518}.
	This paper is motivated by two of the reasons why; (i) it
	can be constructed using charts and (ii) it can be presented as
  the topological boundary of an envelopment of the manifold.
  This gives the boundary a convenient 
	construction and a differential structure. In particular, calculations
  on the conformal boundary can be performed using charts in some larger
  manifold. Coordinate independence is then ensured by requiring that
  quantities defined on the boundary transform in the appropriate way
  with respect to the charts of the larger manifold.

  In this paper I attempt to reconcile this divorce by showing that
  the chart based approach to the study of global structure equips
  the manifold with a chart independent boundary which; (i)
  contains (and therefore is a generalisation of) Penrose's conformal
  boundary, see Section \ref{sec.compatiblecharts} and in particular
  Proposition \ref{prop.Q(M)fromphi},
  (ii) has a coordinate description, see Section \ref{sec.cicib} and the
  discussion following
  Definition \ref{def.quotientmap},
  and (iii)
  allows for the use of differential structure on the boundary, see
  for example Proposition \ref{lem:embedding given by an extension}
  and the discussion following Definition \ref{def.quotientmap}.

  The boundary is based on the following idea. Each chart 
  $\alpha:U\subset M\to \R^n$, with a suitable
  assumption on its behaviour, Definition \ref{def:extension},
  can be associated with some larger subset, $V$, of $\R^n$
  so that $\alpha(U)\subset V$.
  The topological boundary of $\alpha(U)$ relative 
  to $V$, $\partial_V\alpha(U)$, can then be considered
  as a local representation of ``the boundary of $M$''.
  Because of the assumption on $\alpha$,
  the set $V$ can be interpreted as the domain of
  some larger chart in some enveloping manifold of $M$, see 
  Proposition \ref{lem:embedding given by an extension}.
  This equips the local representation of the boundary of $M$,
  $\partial_V\alpha(U)$,
  with a differential structure.
  These details are described in Section \ref{sec.chart_boundary}
  and the discussion after Definition \ref{def.quotientmap}.

  Calculations performed in these charts can now be said to describe
  coordinate dependent quantities on local representations of
  the boundary of $M$.
  If the quantity transforms the right way under 
  ``coordinate transformations on the boundary'' then it is possible
  to claim that the coordinate dependent quantities describe, with respect
  to the chosen chart, a coordinate independent object on the boundary.
  This idea is explored in Sections \ref{sec.relbetweenbound} and
  \ref{sec.cicib} where a topological space is constructed,
  Definition \ref{def.quotientmap} (which gives
  a concrete realisation of the boundary of $M$), and a precise
  definition of the coordinate transformations on the boundary
  is given, Proposition \ref{prop.coordtransonboundarydef}.

  The topological boundary, $\ABBPM{M}$, of $M$ in this topological space is
  the chart independent object that the local representations,
  $\partial_V\alpha(U)$,
  represent. Given some quantity that is well defined on some local 
  representation of $\ABBPM{M}$, it is very likely that there will exist
  some other local representation in which that quantity is not
  well defined. This can be thought of as a manifestation of
  the same issue that the conformal boundary has with gravitational 
  fields of spacetimes with non-zero ADM mass. 
  However, the freedom to choose local representations of the boundary that
  are appropriate for the quantity being studied allow for $\ABBPM{M}$
  to cope with this issue.

  This is such an important point that a digression is warranted.
  The construction of $\ABBPM{M}$ makes only one
  topological restriction (Definition \ref{def:extension})
  on the charts used. In particular, no geometric or physical
  information is used. This ensures
  that $\ABBPM{M}$ is uniquely associated to any manifold and frees it
  from claims of arbitrariness. The downside is that, in general,
  explicit construction of $\ABBPM{M}$
  can never be completed. This is, however, not needed as calculations
  on local representations of $\ABBPM{M}$, once checked for
  coordinate independence on $\ABBPM{M}$, are sufficient to produce
  well defined quantities on $\ABBPM{M}$. In this way the standard chart
  based approach to the study of global structure can be freed from
  claims of coordinate dependence. Because the construction
  of $\ABBPM{M}$ makes only 
  a mild
  topological restriction on charts this boundary
  provides no guidance on which charts are suitable
  for the calculation of particular quantities. This is a point
  that I shall discuss in more detail below and in 
  Section \ref{sec:Examples}.

  Section \ref{sec.relbetweenbound} defines 
  an
  equivalence relation, Definition \ref{def.equiv},
  which describes how the different local representations of the 
  boundary of $M$ are related to each other.
  Given two charts, $\alpha:U\subset M\to \R^n$ and 
  $\beta:V\subset M\to \R^n$
  and $x\in U\cap V$ then $x$ has the two representations
  $p=\alpha(x)\in \R^n$ and $q=\beta(x)\in\R^n$. We know that
  $p$ and $q$ are two different representations of
  some third, chart independent, point $x$.
  If we did not have the object $x$, but only the functions
  $\beta$ and $\alpha$ as well as the points $p$ and $q$, then
  it is still possible to determine if $p$ and $q$ represent
  some third, chart independent, point by asking if
  $\beta\circ\alpha^{-1}(p)=q$. This defines an equivalence
  relation on the collection of all charts and points in the
  images of the charts. The extension of this equivalence relation to 
  local representations of the boundary of
  $M$ is exactly the equivalence relation
  given in Definition \ref{def.equiv}.

  With the equivalence relation it is possible to construct a topological
  space into which the manifold is embedded as an open dense set, 
  see Section \ref{sec.cicib}. Continuing the analogy described above,
  if $p\in\alpha(U)$ and $q\in\beta(V)$ are equivalent then the
  third, chart independent, point $x$ can be described
  as the equivalence class, $[(\alpha, p)]$, of $(\alpha, p)$
  such that $(\beta, q)\in[(\alpha, p)]$ if and only if
  $\beta\circ\alpha^{-1}(p)=q$. The topological space
  is composed of these equivalence classes for the extension of the
  equivalence relation to the local representations of the boundary.
  It is in this sense that the topological space is coordinate independent.
  The definition of the topological space,
  Definitions \ref{def.quotientmap} and \ref{def:chart_boundary},
  is as a quotient
  space of the disjoin union of $\alpha(U)\cup\partial_V\alpha(U)$
  over all allowable charts $\alpha$, see Definition 
  \ref{def:extension}.
  The quotient map has several nice properties, see the discussion
  after Definition \ref{def.quotientmap} and Appendix
  \ref{sec.topQ(M)}. In particular
  the quotient map, restricted to $\alpha(U)\cup\partial_V\alpha(U)$
  for any allowable chart, is a homeomorphism.
  This implies that the boundary, $\ABBPM{M}$, of $M$ has a local description
  in terms of charts,
  mediated by these homeomorphisms,
  so that the induced local representations (described
  in Section \ref{sec.chart_boundary}) also carry differential structure
  (via Proposition \ref{lem:embedding given by an extension}). 
  In this way motivations (ii) and (iii) are justified.

  With $\ABBPM{M}$ defined it is possible 
  to define the ``coordinate transformations on the boundary'', 
  see Proposition \ref{prop.coordtransonboundarydef}.
  These coordinate transformations are homeomorphisms on the
  disjoint union used to define the
  topological space. They are extensions of the original coordinate
  transformations on $M$ to the local representations
  of the boundary of $M$.
  While the topological space itself is, in general (see 
  Corollary \ref{cor.Q(M)isamanifoldifenvboundaryman}),
  not a manifold the coordinate transformations on the boundary
  do have a close relationship to genuine coordinate transformations
  in some larger manifold. This relationship is described in
  Section \ref{sec.compatiblecharts}.

  Propositions \ref{prop.subboundaries} 
  and \ref{prop.Q(M)fromphi} and Corollary 
  \ref{cor.Q(M)isamanifoldifenvboundaryman}
  show that if the manifold $M$ can be embedded into some
  larger manifold $M_\phi$ by a map $\phi:M\to M_\phi$
  then the topological boundary of the image of $\phi$, $\partial\phi(M)$,
  can be considered to be a subset of the boundary,
  $\ABBPM{M}$, of $M$.
  Since Penrose's conformal boundary can be presented in such a way
  these results justify motivation (i), see the comments after
  Corollary \ref{cor.Q(M)isamanifoldifenvboundaryman}.

  The boundary of $M$, $\ABBPM{M}$, is similar to the Abstract
  Boundary, see Proposition \ref{prop:ABdescibeofB} and the comments
  after Proposition \ref{prop:ABdescibeofB}. As a consequence it is possible
  to apply Scott and Szekeres' classification of Abstract Boundary points
  to $\ABBPM{M}$. This is done in Section \ref{sec.classification_chart}.
  Note that while $\ABBPM{M}$ and the Abstract Boundary
  are similar, $\ABBPM{M}$ is richer due to the presence
  of the topological space and the expression of the equivalence relation
  in terms of homeomorphisms that extend certain coordinate transformations on 
  $M$. Because the topological space is defined via a quotient map
  the topology on $\ABBPM{M}$ can be studied using the vast number of
  existing results on quotient spaces (in contrast with the topology
  given in
  \cite{citeulike:9599226}). On a more technical note,
  each element of the boundary defined here is an equivalence
  class. These equivalence classes, due to the use of the quotient space
  construction, contain only boundary points not boundary sets, see
  Definition \ref{def.ABBPM}. Because of this certain technicalities
  regarding the Abstract Boundary, e.g.\ \cite[Chapter 9]{Whale2010},
  can be avoided. 

  Section \ref{sec:Examples} presents three examples of the
  construction and classification of representations of
  the boundary;
  the conformal compactification of Minkowski spacetime as it is given in
  \cite[Section 5.1]{Hawking1975Large} (note that this presentation of 
  the boundary differs from Penrose's own publications
  \cite{citeulike:8978534,citeulike:9600515,citeulike:10075981}, in particular
  it
  includes spacelike and
  timelike infinity); Scott and Szekeres' construction of the
  maximal extension 
  of the Curzon solution \cite{Scott1986CurzonI,Scott1986CurzonII}
  and Beyer and Hennig's analysis of the global structure of smooth
  Gowdy symmetric generalized Taub-NUT solutions, \cite{Beyer2011Smooth}.
  The examples were selected to demonstrate application of the material
  in Section \ref{sec:MAIN_chart_boundary} and illustrate how 
  standard global analysis using charts translates into information
  about the boundary.
  It is worthwhile noting that Beyer and Hennig's analysis does not involve
  a closed form of the metric and as a consequence additional
  work would be required before other boundary constructions could be
  applied to their class of spacetimes.

  The charts used in Section \ref{sec:Examples} were constructed to
  provide
  explanitory (and, perhaps, predictive) power regarding the global
  structure of the particular manifolds considered. To do this
  the behaviour of
  geometric and physical quantities on or near local
  representations of $\ABBPM{M}$ were studied,
  \cite{citeulike:8978534,citeulike:9600515,citeulike:10075981,Scott1986CurzonI,Scott1986CurzonII,Beyer2011Smooth}.
  As mentioned above, the construction of $\ABBPM{M}$ provides no
  guidence on the construction of such charts, rather the existing
  techniques of the chart based study of global structure should be
  used. The boundary $\ABBPM{M}$ then provides a way of comparing
  information computed in different charts on ``the boundary
  of the manifold'', hence providing a
  method via which claims of coordiante dependence
  can be counted.

  The paper closes with two Appendices. Appendix \ref{sec.topQ(M)}
  contains proofs of claims made in Section \ref{sec.cicib} and
  Appendix \ref{sec.compcompexprop} gives the proofs of two results
  stated in Section \ref{sec.compatiblecharts}. The results and claims
  that rest on the material in these appendices are clearly highlighted.

  \subsection{Notation and Definitions}\label{sec:notations_and_definitions}
    
    The disjoint union of two sets $U,V$ is denoted by $U\sqcup V$. 

    Given a topological space, $X$, and $U\subset X$ the boundary
    $\overline{U}\setminus \interior\, U$ will be denoted by
    $\partial U$. This is the topological boundary
    of $U$ in $X$. Given $U\subset V\subset X$ the topological
    boundary of $U$ relative to $V$ will be denoted by $\partial_V U$. 
    When $U$ and $V$ are open in $X$, $\partial_VU = V\cap\partial U$.
    Sequences in $X$ are denoted by $(x_i)\subset X$,
    where it is implicitly assumed
    that $i\in\N$. If the sequence $(x_i)$ converges to $x$, I shall
    write $x_i\to x$.
  
    A topological manifold is a locally Euclidean, second-countable, Hausdorff
    topological space. Note that 
    some authors, e.g.\
    \cite{citeulike:12397770}, drop second-countability.
    A manifold, $M$, is a second countable, 
    Hausdorff, topological space 
    equipped with a maximal $C^\infty$ atlas $\Atlas{M}$ of functions from
    $M$ to $\R^n$ where $n$ denotes the dimension of $M$. 
    In order to reduce the proliferation of notation the
    domain of a chart $\alpha\in\Atlas{M}$ will be denoted by $\dom(\alpha)$ 
    and the range
    of $\alpha$ by $\ran(\alpha)$. I will always assume that
    $\dom(\alpha)\subset M$, $\ran(\alpha)\subset \R^n$ and that both sets
    are open.    
    The phrase `manifold with boundary' is used exclusively to denote an 
    $n$-dimensional manifold with $C^\infty$ atlas from $M$ to 
    $\R^{n-1}\times\{x\in\R:x\geq 0\}$. A topological manifold
    with boundary is a second-countable, Hausdorff topological space
    that is locally homeomorphic to  
    $\R^{n-1}\times\{x\in\R:x\geq 0\}$.
    
    An embedding, $\phi:M\to M_\phi$, of $M$ into a manifold $M_\phi$
    of the same dimension as $M$
    is called an envelopment of $M$. 
    When $\hat{M}$ and $M$ are both manifolds,
    $\hat{M}\subset M$ will 
    always imply that $\hat{M}$ is a regular submanifold of $M$.
    Given an embedding $\phi:M\to M_\phi$ the topological boundary
    of $\phi(M)$ in $M_\phi$ is $\partial\phi(M)$, in accordance with the 
    notation above.
    
    A $C^l$ pseudo-Riemannian manifold, $(M,g)$, is a manifold $M$ equipped with
    a symmetric, non-degenerate, $C^l$ two-form $g$. Given a $C^l$ 
    pseudo-Riemannian
    manifold $(M,g)$ and 
    a $C^k$, $k\leq l$, pseudo-Riemannian manifold $(\hat{M},\hat{g})$,
    the pair 
    $(\hat{M},\hat{g})$
    is a $C^k$ extension of $(M,g)$ if $M\subset\hat{M}$ and 
    $\hat{g}|_{TM\times TM}=g$.
    
    A curve, $\gamma:I\to M$, is a $C^0$, piecewise $C^1$, function from 
    an interval $I\subset \R$ to $M$. When expressing relationships between
    subsets or points of $M$ and the image of $\gamma$ the symbol
    $\gamma$ rather than, the more correct, $\gamma(I)$ will be used.
    Hence, $p\in\gamma$
    stands for $p\in\gamma(I)$, $p\in\overline{\gamma}$ stands for
    $p\in\overline{\gamma(I)}$ and $\gamma\cap\dom(\alpha)$ stands for 
    $\gamma(I)\cap\dom(\alpha)$.
    
    The curve $\gamma$ is bounded
    if there exists $b\in\R$ so that for all $x\in I$, $x\leq b$. If $\gamma$
    is not bounded then $\gamma$ is unbounded. 
    A curve $\delta:I_\delta\to M$
    is a sub-curve of $\gamma:I_\gamma\to M$ if $I_\delta\subset I_\gamma$
    and $\gamma|_{I_\delta}=\delta$. A change of parameter on 
    $\gamma:I\to M$ is a monotone
    increasing $C^1$ function $s:J\to I$, where $J\subset\R$ is an
    interval. Two curves $\gamma:I_\gamma\to M$ and
    $\delta:I_\delta\to M$ are related (or obtained) by a change of
    parameter if there exists a change of parameter $s:I_\gamma\to I_\delta$
    so that $\delta\circ s=\gamma$.

\section{A coordinate independent chart induced boundary}
\label{sec:MAIN_chart_boundary}

  This section presents a rigorous formulation of the currently heuristic
  use of charts to study global structure, as outlined in the introduction,
  and uses this to construct a coordinate independent chart induced boundary.
  Section \ref{sec.chart_boundary} formalises what is meant by a chart
  `extending to the edge of a manifold' and defines a class of such
  charts that are suitable for the use in the analysis of quantities defined
  `on the edge'. Section \ref{sec.relbetweenbound} defines an equivalence
  relation that is an extension of coordinate transformations
  between charts to the boundary. In Section \ref{sec.cicib}
  a topological space is constructed using this equivalence relation.
  This topological space is the completion of the manifold by the
  boundary structures induced by the charts that `extend to the edge'.
  The manifold is embedded as a dense open set into this topological space.
  A collection of homeomorphisms are also defined.
  On the image of the manifold these homeomorphisms are the normal
  coordinate transformations. On the topological boundary of the
  manifold in the topological space they can be thought of as
  `coordinate transformations on the boundary'.
  Section \ref{sec.compatiblecharts} studies what conditions are
  needed for the topological space to be a manifold with boundary
  and for the coordinate transformations on the boundary to be
  genuine coordinate transformations.
  Lastly, Section \ref{sec.classification_chart} presents
  a physically motivated classification of the elements of the topological
  boundary of the manifold in the topological space.

  Note that the constructions
  below correspond, in a well defined way, to constructions used in the
  theory of Cauchy spaces. 
  The study of such spaces gives a natural way to
  discuss compactifications and completions, 
  \cite{citeulike:10080487,citeulike:12273874}. 
  Hence it should be no
  surprise that the material of this paper is related to them.
  All of the material of
  this section, excluding the classification, can be rephrased in terms of
  Cauchy spaces. 
  The 
  references \cite{Whale2010,citeulike:8625460}
  contain further discussion of this.

  \subsection{Boundaries induced by charts}\label{sec.chart_boundary}

    This section identifies which points in the topological boundary 
    of the range of a chart can be considered as representing a portion
    of the boundary of the manifold and which charts are suitable for use in
    this way. 
    I show that every suitable chart 
    induces an envelopment of the manifold so that the topological boundary
    given by the envelopment is the same as the boundary given by the 
    chart. Thus each suitable chart also allows for calculations on the
    boundary that require differential structure.
    
    \begin{definition}\label{def:BP}
      Let $M$ be a manifold and $\alpha\in\Atlas{M}$ a chart. 
      An admissible boundary point of $\alpha$ is an 
      element $p\in\partial\ran(\alpha)$
      so that for all sequences $(x_i)\subset \dom(\alpha)$,
      if $\alpha(x_i)\to p$ then $(x_i)$ has no accumulation 
      points in $M$. The set of all admissible
      boundary points of $\alpha$ will be denoted $\BP{\alpha}$.
    \end{definition}  
  
    The elements of $\BP{\alpha}$ can be thought of as 
    those points on the `edge' of 
    $M$ that $\alpha$ extends to. 
    
    \begin{example}\label{ex:bp_ex}
      Let $M=\R^2$ and let
      $\alpha:\R^2\setminus\{(x,0):x\leq 0\}\to\R^2$ be the chart given by 
      \[
        \alpha(x,y)=
          \left(\frac{2}{\pi}\tan^{-1}\left(\sqrt{x^2+y^2}\right),\, \arctan\left({x},y\right)\right),
      \]
      where $\arctan:\R\times\R\to(-\pi,\pi)$ is defined by
      \[
        \arctan(x,y)=\left\{\begin{aligned}
          \tan^{-1}\left(\frac{y}{x}\right) && x>0&\\
          \tan^{-1}\left(\frac{y}{x}\right) + \pi && x<0, y> 0&\\
          \tan^{-1}\left(\frac{y}{x}\right) - \pi && x<0, y<0&\\
          \frac{\pi}{2} && x=0, y>0&\\
          -\frac{\pi}{2} && x =0, y<0&.
        \end{aligned}\right.
      \]
      This chart can be thought of as `compactified polar coordinates'
      on the plane.  
      The range of $\alpha$ is $(0,1)\times\left(-\pi, \pi\right)$.
      The set $\partial\ran(\alpha)$ is the union of the four sets
      \begin{align*}
        S_1 &= \{1\}\times\left[-\pi, \pi\right], &
        S_2 &= \{0\}\times\left[-\pi, \pi\right],\\
        S_3 &= (0,1)\times\left\{-\pi\right\}, &
        S_4 &= (0,1)\times\left\{ \pi\right\}.
      \end{align*}
      A sequence $(x_i)\subset\ran(\alpha)$ with an endpoint in;
      \begin{enumerate}
        \item $S_1$ is such that $(\alpha^{-1}(x_i))$ has no accumulation points,
        \item $S_2$ is such that $(\alpha^{-1}(x_i))$ has $(0,0)$ as an endpoint,
        \item $S_3$ or $S_4$ is such that $(\alpha^{-1}(x_i))$ 
          has an accumulation point in
				  $\{(x,0)\,:\, x<0\}$.
      \end{enumerate}
      Hence $\BP{\alpha}=S_1$. 
    \end{example}
    
    Not every chart with admissible boundary points gives a suitable 
    representation of the boundary. 
  
    \begin{example}\label{ex:improper_chart}
      Let $M=\R^2\setminus\{(x,y)\in\R^2:y\geq\lvert x\rvert\}$. The chart 
      $\alpha:\{(x,y)\in\R^2:y<0\}\to\R^2$ given by $\alpha(x,y)=(x,y)$
      has the point $(0,0)$ as its only admissible boundary point. 
    \end{example}
    
    Charts like $\alpha$ of Example \ref{ex:improper_chart}
    describe a single isolated boundary point. Hence the chart is unable to
    describe topological, let alone differential, structure on the `edge' of
    the manifold. 
    The following definition describes the class of charts that 
    avoid this issue. 
    
    \begin{definition}\label{def:extension}
      Let $M$ be a manifold, $\alpha\in\Atlas{M}$ and $U\subset\R^n$, an 
      open set, so that 
      $\ran(\alpha)\subset U$ and
      $\varnothing\neq\partial_U \ran(\alpha)\subset\BP{\alpha}$.
      Then the pair $(\alpha, U)$ is an extension, $\alpha$ is extendible
      and $U$ is an extension of $\alpha$. Let
      \[
          \EX{M}=\left\{(\alpha,U):\alpha\in\Atlas{M},\, 
          U\text{ is an extension of }\alpha
        \right\}
      \]
      be the set of all extensions.
    \end{definition}

    Propositions \ref{lem:embedding given by an extension} and
    \ref{prop.inverseenvtoext}, below, justify the use of the word extension
    and demonstrate that $\partial_U\ran(\alpha)$ can be thought of as
    a, representation of a portion of the, boundary of $M$.
    Note that if $(\alpha,U)$ is an extension then 
    $\partial_U\ran(\alpha)=\BP{\alpha}\cap U$ as both $\ran(\alpha)$
    and $U$ are open.   

    \begin{example}
      Continuing from Example \ref{ex:bp_ex}:
      let $U = (0,\infty)\times\left(-\pi, \pi\right)$ then
      $\ran(\alpha)\subset U$ and
      $\partial_U\ran(\alpha) = S_1 =\BP{\alpha}$. Thus $(\alpha,U)$ is 
      an extension. 

      Continuing from Example \ref{ex:improper_chart}: every neighbourhood
      of $(0,0)$ contains a point in $\partial \ran(\alpha)$ that is
      not in $\BP{\alpha}$. Therefore $\alpha$ has no extensions.
    \end{example}

    By allowing $U$ to vary, flexibility to remove admissible boundary points
    has been introduced into the definition of an extension. This could be 
    useful if, for some reason, a portion of $\BP{\alpha}$ was considered
    `artificial' while some other portion was considered `natural'.
    Note, however that for any extendible chart there is a 
    maximum extension. 

    \begin{lemma}\label{lem.maxextension}
      Let $\alpha\in\Atlas{M}$ be extendible. Then there exists a
      unique $V\subset\R^n$
      so that $(\alpha, V)\in\EX{M}$ and so that if $(\alpha, U)\in \EX{M}$
      then $U\subset V$.
    \end{lemma}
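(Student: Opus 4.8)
The plan is to construct $V$ explicitly as the largest open set containing $\ran(\alpha)$ whose relative boundary with $\ran(\alpha)$ sits inside $\BP{\alpha}$, and then verify it is itself an extension. A natural candidate is
\[
  V = \ran(\alpha)\cup\interior_{\R^n}\!\left(\ran(\alpha)\cup\BP{\alpha}\right),
\]
or, more conveniently, the union of all extensions: set $V=\bigcup\{U:(\alpha,U)\in\EX{M}\}$. The first step is to observe that $\EX{M}$ restricted to $\alpha$ is nonempty (since $\alpha$ is extendible) so this union is a well-defined open set containing $\ran(\alpha)$. The second step is to show $(\alpha,V)$ is an extension, i.e.\ that $\varnothing\neq\partial_V\ran(\alpha)\subset\BP{\alpha}$. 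Non-emptiness is inherited from any single extension $(\alpha,U)$ with $U\subset V$, because $\partial_U\ran(\alpha)=\BP{\alpha}\cap U\subset\BP{\alpha}\cap V=\partial_V\ran(\alpha)$ using the remark after Definition \ref{def:extension} that the relative boundary of an open set in an open set is the intersection of the ambient boundary with the ambient open set. The third step, uniqueness and maximality, is then immediate: any extension $(\alpha,U)$ has $U\subset V$ by construction of $V$ as a union, and if $(\alpha,V')$ were another extension with the maximality property then $V\subset V'$ and $V'\subset V$, so $V=V'$.

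The one genuinely substantive point is the containment $\partial_V\ran(\alpha)\subset\BP{\alpha}$, equivalently $\BP{\alpha}\cap V\subset\BP{\alpha}$, which is trivial once we know $\partial_V\ran(\alpha)=\BP{\alpha}\cap V$; and that identity needs $V$ and $\ran(\alpha)$ open, which holds. So the real work reduces to confirming that $V$, being a union of opens, is open, and that $\partial_V\ran(\alpha)=\partial\ran(\alpha)\cap V$. The latter is exactly the stated fact ``when $U$ and $V$ are open in $X$, $\partial_V U = V\cap\partial U$'' from Section \ref{sec:notations_and_definitions}. Applying it to $\ran(\alpha)\subset V\subset\R^n$ gives $\partial_V\ran(\alpha)=V\cap\partial\ran(\alpha)$. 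Now for any $p\in V$ there is some extension $(\alpha,U)$ with $p\in U$, and $V\cap\partial\ran(\alpha)\cap U=\partial_U\ran(\alpha)\subset\BP{\alpha}$; since every point of $V\cap\partial\ran(\alpha)$ lies in such a $U$, we conclude $V\cap\partial\ran(\alpha)\subset\BP{\alpha}$, as required.

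I expect the main (minor) obstacle to be purely bookkeeping: making sure the remark ``$\partial_U\ran(\alpha)=\BP{\alpha}\cap U$ for an extension'' is invoked correctly and that one does not accidentally need $V$ to be \emph{equal} to, rather than merely \emph{contained in}, the interior construction above — the union-of-extensions description sidesteps any such subtlety. A secondary caution is that the union could a priori fail to have nonempty relative boundary if taken over an empty family, but extendibility of $\alpha$ rules this out. No limiting or sequential argument is needed beyond what is already packaged into Definition \ref{def:BP}; the proof is essentially an exercise in point-set topology of open sets in $\R^n$.
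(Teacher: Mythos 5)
Your proposal is correct and follows essentially the same route as the paper: both define $V$ as the union of all extensions of $\alpha$ and verify $\varnothing\neq\partial_V\ran(\alpha)\subset\BP{\alpha}$ using the openness of $V$ and $\ran(\alpha)$. The only cosmetic difference is that the paper writes $\partial_V\ran(\alpha)=\bigcup_{U}\partial_U\ran(\alpha)$ directly, whereas you pass through $\partial_V\ran(\alpha)=V\cap\partial\ran(\alpha)$ and argue pointwise; these are the same computation.
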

    \begin{proof}
      Let $\mathcal{C}=\{U\subset\R^n:(\alpha, U)\in\EX{M}\}$.
      Let $V=\bigcup\mathcal{C}$.
      Since $\alpha$ is extendible $\mathcal{C}\neq\varnothing$ and hence
      $V\neq \varnothing$. Since $\ran(\alpha)\subset U$ for all 
      $U\in\mathcal{C}$, $\ran(\alpha)\subset V$. 
      Since $\ran(\alpha)$ and $V$ are open 
      $\partial_V\ran(\alpha)=\bigcup_{U\in\mathcal{C}}\partial_U\ran(\alpha)$.
      Thus $\partial_V\ran(\alpha)\subset \BP{\alpha}$. Lastly since
      $\alpha$ is extendible there exists $U\subset\R^n$ so that
      $(\alpha, U)\in\EX{M}$. Thus $\partial_U\ran(\alpha)\neq\varnothing$
      and so, from above, $\partial_V\ran(\alpha)\neq\varnothing$.
      That is $(\alpha,V)\in \EX{M}$. It is clear, by construction, that
      if $(\alpha, U)\in \EX{M}$ then $U\subset V$. This also
      implies the uniqueness of $V$. Hence we have the
      result.
    \end{proof}

    \begin{definition}\label{def.ABBPM}
      Let $(\alpha,U)\in\EX{M}$ and $V\subset \BP{\alpha}\cap U$. 
      The triple $(\alpha, U, V)$, or sometimes just $V$, is called
      a boundary set. If $V=\{p\}$ then
      the triple $(\alpha,U,\{p\})$, or sometimes just $p$
      is called
      a boundary point.
      The set of all boundary points is 
      \[
        \BPM{M}=\left\{(\alpha,U,\{p\}):(\alpha,U)\in\EX{M},\, 
          p\in\BP{\alpha}\cap U
          \right\}.
      \]  
    \end{definition}

    The chart based approach to studying the `edge' of a manifold chooses
    a chart, $\alpha$, and identifies some subset of $\partial\ran(\alpha)$
    as representing the `edge' of the manifold. In the framework above
    this subset is the set $\partial_U\ran(\alpha)$ associated to
    an extension $(\alpha, U)$ of $\alpha$.
    In this way I have attempted to formalise the chart based approach.

    I now turn to describing the envelopment induced by an extension.
    As mentioned above given an extension $(\alpha,U)\in\EX{M}$  the set 
    $\partial_U\ran(\alpha)=\BP{\alpha}\cap U$ can be thought of as a
    representation of a portion of 
    the boundary of $M$. The following proposition
    makes this idea precise by showing how an extension induces
    an envelopment of the manifold so that the topological boundary
    of the image on the manifold is $\BP{\alpha}\cap U$.
    The range of the envelopment is the topological pasting of $M$
    and $U$ via the diffeomorphism $\alpha$, which is a manifold
    by the definition of an extension.

    \begin{proposition}\label{lem:embedding given by an extension}
      Let $(\alpha,U)\in\EX{M}$ then there exists a manifold 
      $M_{(\alpha,U)}$ and
      an envelopment $\Psi(\alpha,U):M\to M_{(\alpha,U)}$ so that 
      $\partial\Psi(\alpha,U)(M)=\BP{\alpha}\cap U$.
    \end{proposition}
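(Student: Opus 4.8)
The plan is to build $M_{(\alpha,U)}$ by gluing $M$ and $U$ along the diffeomorphism $\alpha$, and to show this glued object is a manifold into which $M$ embeds with topological boundary exactly $\BP{\alpha}\cap U$. Concretely, I would set $M_{(\alpha,U)} = (M \sqcup U)/\!\sim$, where $\sim$ identifies $x \in \dom(\alpha) \subset M$ with $\alpha(x) \in \alpha(\dom(\alpha)) = \ran(\alpha) \subset U$. Write $\Psi(\alpha,U):M \to M_{(\alpha,U)}$ for the composite $M \hookrightarrow M \sqcup U \to M_{(\alpha,U)}$ and let $\iota_U:U \to M_{(\alpha,U)}$ be the analogous map for the $U$ summand. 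Both are injective because $\alpha$ is a bijection onto its range, and together their images cover $M_{(\alpha,U)}$.

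First I would verify the topological hygiene of $M_{(\alpha,U)}$. Hausdorffness is the point that needs care: a potential failure is a point $x \in \partial\dom(\alpha)$ in $M$ and a point $p \in \partial\ran(\alpha) \cap U = \BP{\alpha}\cap U$ that cannot be separated. But this is precisely what Definition~\ref{def:BP} rules out: if every neighbourhood of $x$ meets every neighbourhood of $p$ after gluing, one extracts a sequence $(x_i)\subset\dom(\alpha)$ with $\alpha(x_i)\to p$ and $x_i\to x$, contradicting that $p\in\BP{\alpha}$ is an admissible boundary point (so the $x_i$ have no accumulation point in $M$). Second-countability follows since both $M$ and $U$ are second countable and the quotient is covered by the two open sets $\Psi(\alpha,U)(M)$ and $\iota_U(U)$. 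Local Euclideanness away from the seam is immediate; near a point of $\iota_U(\partial_U\ran(\alpha))$ one uses the single chart $\iota_U^{-1}:\iota_U(U)\to U\subset\R^n$, and near a point of $\Psi(\alpha,U)(M)$ one uses charts from $\Atlas{M}$ — these are compatible on the overlap exactly because $\alpha$ is a smooth chart of $M$, giving $M_{(\alpha,U)}$ a smooth structure for which $\Psi(\alpha,U)$ is a smooth embedding and $\iota_U$ is a diffeomorphism onto an open set.

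It remains to compute $\partial\Psi(\alpha,U)(M)$. Since $\iota_U$ is an open embedding and $\iota_U(\ran(\alpha)) = \Psi(\alpha,U)(\dom(\alpha)) \subset \Psi(\alpha,U)(M)$, and since $\Psi(\alpha,U)(M\setminus\dom(\alpha))$ is disjoint from $\iota_U(U)$, one checks that the closure of $\Psi(\alpha,U)(M)$ inside $\iota_U(U)$ is $\iota_U(\overline{\ran(\alpha)} \cap U) = \iota_U(\ran(\alpha) \cup \partial_U\ran(\alpha))$, while $\Psi(\alpha,U)(M)$ contributes nothing new to the boundary from the $M$-side (it is open there). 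Hence $\partial\Psi(\alpha,U)(M) = \iota_U(\partial_U\ran(\alpha))$, and via the identification of $U$ with $\iota_U(U)$ this is $\partial_U\ran(\alpha) = \BP{\alpha}\cap U$, using the remark after Definition~\ref{def:extension} that these coincide since $\ran(\alpha)$ and $U$ are open. Finally $\partial\Psi(\alpha,U)(M)\neq\varnothing$ because $(\alpha,U)\in\EX{M}$ forces $\partial_U\ran(\alpha)\neq\varnothing$.

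The main obstacle is the Hausdorff verification: it is the one place where the defining property of $\BP{\alpha}$ (rather than mere point-set gluing) is essential, and one must be careful that the sequence extracted from a separation failure genuinely accumulates in $M$, contradicting admissibility. Everything else is the routine bookkeeping of the pasting lemma for smooth manifolds.
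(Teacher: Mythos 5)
Your construction is the same topological pasting of $M$ and $U$ along the diffeomorphism $\alpha$ that the paper uses (merely phrased as a quotient of $M\sqcup U$ rather than as $M\cup(U\setminus\ran(\alpha))$ with an explicitly declared topology), and your identification of $\partial\Psi(\alpha,U)(M)$ with $\partial_U\ran(\alpha)=\BP{\alpha}\cap U$ is correct. In fact you supply the one detail the paper leaves to the reader --- the Hausdorff check, which is exactly where the admissibility condition of Definition~\ref{def:BP} is needed --- so the proposal is correct and essentially coincides with the paper's proof.
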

    \begin{proof}
      Let $(\alpha,U)\in\EX{M}$ and
      let $\alpha_U:\dom(\alpha)\cup (U\setminus \ran(\alpha))\to U$ 
      be defined by
      \[
        \alpha_U(x)=\left\{\begin{aligned}
          \alpha(x) && x\in \dom(\alpha)\hphantom{.}\\
          x && x\in U\setminus \ran(\alpha).
        \end{aligned}\right.
      \]
      Give the set $M_{(\alpha,U)}=
      M\cup\{U\setminus \ran(\alpha)\}$
      the topology
      \begin{multline*}
        \mathcal{T}=\left\{U\subset M:U\text{ is open in }M\right\}\cup\\
          \left\{\alpha_U^{-1}(V)\subset \dom(\alpha)\cup(U\setminus \ran(\alpha)):
          V \text{ is open in }U\right\}.
      \end{multline*}
      The topological space
      $M_{(\alpha,U)}$
      equipped with the atlas generated by the functions
      $\Atlas{M}\cup\{\alpha_U\}$
      is a manifold. The map 
      $\Psi\left(\alpha,U\right):M\to M_{(\alpha,U)}$ given by 
      $\Psi\left(\alpha,U\right)(x)=x$
      is the required envelopment.
      The proofs of these claims are straightforward,
      and follow directly from the definition of $\mathcal{T}$,
      the definition of
      $\EX{M}$ and as $\alpha$ is a diffeomorphism. 
      The details are left to the reader.
      
      I now show that 
      $\partial\Psi\left(\alpha,U\right)(M)=\BP{\alpha}\cap U$.
      Let $x\in\partial\Psi\left(\alpha,U\right)(M)$. 
      By construction this implies that 
      $x\in U\setminus\ran(\alpha)$. If
      $x$ is in the interior of $U\setminus\ran(\alpha)$ then
      there exists $V$ an open subset of $U$ so that
      $V\subset U\setminus\ran(\alpha)$. The definition of
      $\mathcal{T}$ implies that 
      $x\not\in \partial\Psi\left(\alpha,U\right)(M)$. Thus $x$
      is not in the interior of $U\setminus\ran(\alpha)$. Since $U$
      and $\ran(\alpha)$ are open this implies that 
      $x\in\partial_U\ran(\alpha)$. Thus there exists
      $(x_i)\subset\dom(\alpha)$ so that $(\alpha(x_i))\to x$.
      The definition of $\mathcal{T}$ implies that
      $(\Psi\left(\alpha,U\right)(x_i))\to x$. Since $M_{(\alpha,U)}$
      is Hausdorff $(\Psi\left(\alpha,U\right)(x_i))$ has no limit points
      in $\Psi\left(\alpha,U\right)(M)$. 
      As $\Psi(\alpha,U)$ is a homeomorphism this implies that
       $(x_i)$ has no limit points 
      in $M$. Therefore $x\in\BP{\alpha}$. 
      From above, $x\in\partial_U\ran(\alpha)\subset U$.
      That is $\partial\Psi\left(\alpha,U\right)(M)\subset\BP{\alpha}\cap U$.
            
      Suppose that $x\in\BP{\alpha}\cap U$. By definition there exists
      $(x_i)\subset\dom(\alpha)$ so that $(\alpha(x_i))\to x$ and
      $(x_i)$ has no accumulation points in $M$. 
      The definition of $\mathcal{T}$
      implies that $(\Psi(\alpha,U)(x_i))\to x$. Hence
      $x\in\overline{\Psi(\alpha,U)(M)}$. If $x\in\Psi(\alpha,U)(M)$
      then, as $\Psi(\alpha,U)$ is a homeomorphism, $(x_i)$ would have 
      $\Psi(\alpha,U)^{-1}(x)$ as an accumulation point. This is a
      contradiction, however. Hence $x\in\partial\Psi(\alpha,U)(M)$
      and therefore 
      $\BP{\alpha}\cap U\subset\partial\Psi\left(\alpha,U\right)(M)$.
      
      Thus $\partial\Psi\left(\alpha,U\right)(M)=\BP{\alpha}\cap U$ as 
      required.
    \end{proof}
    There is a converse to the above Proposition.

    \begin{proposition}\label{prop.inverseenvtoext}
      Let $\phi:M\to M_\phi$ be an envelopment
      and let $\alpha\in\Atlas{M_\phi}$.
      If $\dom(\alpha)\cap\partial\phi(M)\neq\varnothing$
      then $(\alpha\circ\phi, \ran(\alpha))\in\EX{M}$, 
      where $\rest{\alpha}{\phi}$ denotes the map
      $
        \rest{\alpha}{\phi}=\alpha\circ
          \phi|_{\phi^{-1}(\dom(\alpha)\cap\phi(M))}.
      $
    \end{proposition}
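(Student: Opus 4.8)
The plan is to verify the three defining conditions of Definition \ref{def:extension} for the pair $(\alpha\circ\phi, \ran(\alpha))$, namely that $\alpha\circ\phi$ is a chart in $\Atlas{M}$, that $\ran(\alpha)$ is an open set containing $\ran(\alpha\circ\phi)$, and that $\varnothing\neq\partial_{\ran(\alpha)}\ran(\alpha\circ\phi)\subset\BP{\alpha\circ\phi}$. First I would observe that, since $\phi$ is an envelopment (an embedding of the same dimension), it is in particular a diffeomorphism onto its open image $\phi(M)$, so $\phi^{-1}(\dom(\alpha)\cap\phi(M))$ is open in $M$ and $\alpha\circ\phi$ restricted to this set is a diffeomorphism onto an open subset of $\R^n$; hence it lies in $\Atlas{M}$ and $\ran(\alpha\circ\phi)=\alpha(\dom(\alpha)\cap\phi(M))$ is open and contained in the open set $\ran(\alpha)$. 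This disposes of the first two conditions.

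Next I would identify the relative boundary. Because $\phi(M)$ is open in $M_\phi$ and $\alpha$ is a homeomorphism onto the open set $\ran(\alpha)$, the set $\alpha(\dom(\alpha)\cap\phi(M))$ has relative boundary in $\ran(\alpha)$ equal to $\alpha(\dom(\alpha)\cap\partial\phi(M))$ — this is just the statement that $\alpha$ carries the trace of $\partial\phi(M)$ on $\dom(\alpha)$ homeomorphically onto the trace of $\partial\,\ran(\alpha\circ\phi)$ on $\ran(\alpha)$. Since the hypothesis gives $\dom(\alpha)\cap\partial\phi(M)\neq\varnothing$, this relative boundary is nonempty, settling the nonemptiness clause.

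The heart of the argument is the inclusion $\partial_{\ran(\alpha)}\ran(\alpha\circ\phi)\subset\BP{\alpha\circ\phi}$. Take $p$ in this relative boundary and a sequence $(x_i)\subset\dom(\alpha\circ\phi)$ with $(\alpha\circ\phi)(x_i)\to p$; I must show $(x_i)$ has no accumulation point in $M$. From the previous paragraph, $p=\alpha(y)$ for a unique $y\in\dom(\alpha)\cap\partial\phi(M)$, and applying the homeomorphism $\alpha^{-1}$ gives $\phi(x_i)\to y$ in $M_\phi$. If $(x_i)$ had an accumulation point $x\in M$, then by continuity of $\phi$ the point $\phi(x)\in\phi(M)$ would be an accumulation point of $(\phi(x_i))$; since $M_\phi$ is Hausdorff and $\phi(x_i)\to y$, this forces $\phi(x)=y$, contradicting $y\in\partial\phi(M)$ which is disjoint from the open set $\phi(M)$. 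Hence no such accumulation point exists, so $p\in\BP{\alpha\circ\phi}$.

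I expect the only mild subtlety to be the bookkeeping in the second paragraph — carefully justifying that $\alpha$ transports relative boundaries correctly, i.e.\ that $\partial_{\ran(\alpha)}\alpha(W)=\alpha(\partial_{\dom(\alpha)}W)$ for $W$ open in $\dom(\alpha)$ — which is routine since $\alpha$ is a homeomorphism between open sets; everything else is a direct unwinding of definitions together with the Hausdorff property already used in Proposition \ref{lem:embedding given by an extension}.
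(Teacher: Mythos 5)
Your proposal is correct and follows essentially the same route as the paper's proof: identify the relative boundary with $\alpha(\dom(\alpha)\cap\partial\phi(M))$ to get nonemptiness, then pull a convergent sequence back through the homeomorphism $\alpha$ and use Hausdorffness of $M_\phi$ to rule out accumulation points in $M$. The only difference is that you spell out the chart-membership check and the boundary-transport bookkeeping that the paper dismisses as clear, which is harmless.
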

    \begin{proof}
      It is clear that $\ran(\alpha\circ\phi)\subset\ran(\alpha)$.
      Since $\dom(\alpha)\cap\partial\phi(M)\neq\varnothing$
      we know that 
      $\partial_{\ran(\alpha)}\ran(\alpha\circ\phi)\neq\varnothing$.
      Let $x\in\partial_{\ran(\alpha)}\ran(\alpha\circ\phi)$.
      Let 
      $(x_i)\subset\dom(\alpha\circ\phi)$ be a sequence
      so that $\alpha\circ\phi(x_i)\to x$. 
      Since $\alpha$ is a diffeomorphism
      $\phi(x_i)\to \alpha^{-1}(x)$. By assumption 
      $x\in\partial_{\ran(\alpha)}\ran(\alpha\circ\phi)$
      which implies that $\alpha^{-1}(x)\in\partial\phi(M)$.
      Since $\phi$ is a diffeomorphism and $M_\phi$
      is Hausdorff this implies that the
      sequence $(x_i)$ has no limit points in $M$.
      Therefore
      $x\in\BP{\alpha\circ\phi}$. Hence
      $\partial_{\ran(\alpha)}\ran(\alpha\circ\phi)\subset\BP{\alpha\circ\phi}$
      as required.
    \end{proof}

    These two results show that, for any extension
    $(\alpha, U)$ it makes sense to think of $\BP{\alpha}\cap U$
    as the representation of a portion of the boundary of 
    $M$ given by the chart $\alpha$.
    This thought can only be taken so far, however, as there is no guarantee that
    $\overline{\Psi(\alpha,U)(M)}$ is a manifold with boundary.
     
    \begin{example}\label{ex:sincurve}
      Let $M=\left\{(x,y):y<\sin\left(\frac{1}{x}\right), x>0\right\}$
      and define the chart $\alpha:M\to\R^2$ by $\alpha(x,y)=(x,y)$.
      Let $U=\R^2$ then $(\alpha,U)\in\EX{M}$ and $M_{(\alpha,U)}=\R^2$.
      The set $\overline{\Psi(\alpha,M)(M)}$ is
      \[
        M\cup\{(0,y):y<-1\}\cup
          \overline{\left\{\left(x,\,\sin\left(\frac{1}{x}\right)\right):
          x>0\right\}}.
      \]
      The last set in the union contains points, $\{(0,y)\in\R^2: -1<y\leq 1\}$, 
      that are not the endpoint of any curve
      in $M$.
      As a consequence $\overline{\Psi(\alpha,U)(M)}$ is not a manifold with
      boundary. The space $\overline{\Psi(\alpha,U)(M)}$ must be 
      considered as a topological space which has an open dense subset which
      is a manifold. 
    \end{example}

  \subsection{Relationships between boundaries given by extensions}
  \label{sec.relbetweenbound}

    The boundary $\BP{\alpha}\cap U$ of an extension $(\alpha,U)$ is, 
    naturally,
    dependent on $(\alpha,U)$ and thus is not chart 
    independent. 
    Of course the same is true of $\ran(\alpha)$. This is a chart
    dependent way of viewing the manifold. In the case of $\ran(\alpha)$
    the solution to chart dependence
    is a part of the basic apparatus of differential geometry:
    quantities defined on $\ran(\alpha)$ are well defined on $M$ if
    they transform the `right way' under changes of coordinates.

    In the context of this paper, where I am concerned with a construction
    of a topological space (the boundary of $M$), transforming the right way
    means that $x\in\ran(\alpha)$ represents the same point in the manifold
    as $y\in\ran(\beta)$ if and only if $\beta\circ\alpha^{-1}(x)=y$.
    Let $(\alpha, U)$ and $(\beta, X)$ be extensions.
    If the function
    $\beta\circ\alpha^{-1}:\alpha(\dom(\alpha)\cap\dom(\beta))\to
    \beta(\dom(\alpha)\cap\dom(\beta)$ could be extended to some
    portion of $\partial_U\ran(\alpha)$ and $\partial_X\ran(\beta)$
    then this extended function would give a natural way to
    think about coordinate invariance on the boundary of the manifold.
    The example below shows that it is not always possible 
    to extend $\beta\circ\alpha^{-1}$ to all of
    $\partial _U\ran(\alpha)$ as a function. 
    It is, however, always possible to extend it as a relation.
    
    Each point in $\ran(\alpha)\cup\partial_U\ran(\alpha)$ can be
    uniquely equated with the set of sequences in $\ran(\alpha)$ that
    limit to the point. Since each sequence is in $\ran(\alpha)$
    we can apply $\beta\circ\alpha^{-1}$ to it. If the image of
    the sequence has a limit point, then the relation extending
    $\beta\circ\alpha^{-1}$ will relate the point in 
    $\ran(\alpha)\cup\partial_U\ran(\alpha)$ and the limit point.
    This relation can be thought of as  
    the `coordinate transform on the edge of the
    manifold' between $\alpha$ and $\beta$, though in Proposition 
    \ref{prop.coordtransonboundarydef} I define functions which are
    better suited to this terminology.

    Because of the use of a relation and not a function it is
    possible for a boundary point of some extension to correspond to a boundary
    set of another. This is a down side to the construction, which
    can be addressed through clever choices of charts, see Section 
    \ref{sec.compatiblecharts}.
    The example below, which illustrates this behaviour,
    is an expansion of a brief discussion 
    given in Geroch's 1968 paper
    \cite{citeulike:4251705}.

    After the example the remainder of this section will define
    the relation, alluded to above, and discuss when a boundary point/set
    should correspond to a boundary point/set in some other extension.

    \begin{example}[{\cite{Geroch1968Local}}]\label{expointtoBviceversa}
      Let $h:[0,\pi)\to [0,2)$ be the smooth function defined by
      \[
        h(\theta)=\left\{
          \begin{aligned}
            & 0 && \theta = 0\\
            &\frac{1}{1+\exp\left(\frac{\pi}{4}\left(\frac{1}{\theta}-
              \frac{4}{\pi-4\theta}\right)\right)} && 
              \theta\in \left(0,\frac{\pi}{4}\right)\\
            &1 && \theta\in\left[\frac{\pi}{4},\frac{3\pi}{4}\right]\\
            &1+\frac{1}{1+\exp\left(\frac{\pi}{4}\left(
              \frac{1}{\theta -\frac{3\pi}{4}}
              -\frac{4}{\pi-4\left(\theta -\frac{3\pi}{4}\right)}
              \right)\right)} && 
              \theta\in\left(\frac{3\pi}{4}, \pi\right)\\
          \end{aligned}\right.
      \]
      and let $f:[0,1)\times [0,\pi)\to[0,1)\times[0,\pi)$ be defined by
      $f(r, \theta)=(r,\theta(1-r) + r\frac{\pi}{2}h(\theta))$.
      The function $f$ can be thought of as a homotopy between
      the constant function $\theta\mapsto\theta$ and
      a function which maps $[0,\pi)$ onto itself while
      contracting
      $\left[\frac{\pi}{4},\frac{3\pi}{4}\right]$ 
      to the point $\frac{\pi}{2}$.
      The function $f$ is a smooth bijection hence it's 
      inverse, $f^{-1}$, is well defined.
      
      Let $M=\{(x,y):x^2+y^2<1\}$ and define the following charts
      $\alpha:M\to\R^2$ and $\beta:M\to\R^2$ given by, in polar coordinates,
      $\alpha(r,\theta)=(r,\theta)$ 
      and
      \[
        \beta(r,\theta)=\left\{
          \begin{aligned}
          &f(r,\theta) && \theta\in [0,\pi)\\
          &f^{-1}(r, \theta -\pi)+(0,\pi) && \theta\in [\pi,2\pi).
        \end{aligned}\right.
      \]
      Then the transition function $\phi=\beta\circ\alpha^{-1}$ is 
      given by $\phi=\beta$. The set $\R^2$ extends both
      $\alpha$ and $\beta$.

      For each $i\in\N$, let 
      \begin{align*}
        u_i&=\left(\frac{i}{i+1},\, 
          \frac{5\pi}{4} + \frac{\pi}{4}\frac{i}{i+1}\right),
          &          
        v_i&=\left(\frac{i}{i+1},\, 
          \frac{7\pi}{4} - \frac{\pi}{4}\frac{i}{i+1}\right).
      \end{align*}
      Note that these sequences have the following limits
      \begin{align*}
        u_i &\to \left(1,\, \frac{3\pi}{2}\right),
        & v_i &\to \left(1,\, \frac{3\pi}{2}\right).
      \end{align*}
      The images of these sequences under $\phi$ are
      \begin{align*}
        \phi(u_i)&=\left(\frac{i}{i+1},\, \frac{5\pi}{4}\right),
        & \phi(v_i)&=\left(\frac{i}{i+1},\, \frac{7\pi}{4}\right).
      \end{align*}
      The limits of these images are 
      \begin{align*}
        \phi(u_i) &\to \left(1,\, \frac{5\pi}{4}\right),
        & \phi(v_i) &\to \left(1,\, \frac{7\pi}{4}\right).
      \end{align*}
      Hence while $u_i,v_i$ identify the same boundary point
      in $(\alpha, \R^2)$ they identify different boundary points
      in $(\beta, \R^2)$. This implies that there is no continuous
      extension of $\phi$ to a neighbourhood of $(1,\frac{3\pi}{2})$.
      In accordence with the discussion
      above, the boundary point
      $\left(1,\frac{3\pi}{2}\right)$ of $(\alpha, \R^2)$ is considered
      to be
      represented by a set of boundary points with respect to
      $(\beta,\R^2)$ which contains the points $(1,\frac{5\pi}{4})$
      and $(1,\frac{7\pi}{4})$. 
      It is possible to explicitly identify
      this set.

      Let $X =
      \left\{(1,\theta):\theta\in
      \left[\frac{5\pi}{4},\frac{7\pi}{4}\right]\right\}$.
      Similar calculations, as above,
      show that every sequence $(x_i)\subset\ran(\beta)$
      that converges
      to a point in $X$ is such that
      $\phi^{-1}(x_i)\to\left(1,\frac{3\pi}{2}\right)$. As well
      as that every sequence, $(x_i)\subset\ran(\alpha)$
      that converges to $\left(1,\frac{3\pi}{2}\right)$ is such that
      the accumulation points of 
      the sequence $(\phi(x_i))$ lie in $X$.
      Hence, considering boundary points to be sets of sequences 
      we should consider $\left(1,\frac{3\pi}{2}\right)$
      and $X$ as different representations of the same boundary point
      with respect to the extensions $(\alpha, \R^2)$ and $(\beta, \R^2)$.

      Now consider an element $x\in X$. From the discussion above
      it is clear that if $(x_i)\subset\ran(\beta)$ is
      such that $x_i\to x$ then 
      $\phi^{-1}(x_i)\to \left(1,\frac{3\pi}{2}\right)$.
      Thus the set of sequences which identify $x$, under $\phi^{-1}$, can 
      be viewed as a subset
      of the set of sequences which identify $\left(1,\frac{3\pi}{2}\right)$.
      
      The same arguments can be used to show that the
      boundary point $\left(1,\frac{\pi}{2}\right)$ of $(\beta, \R^2)$
      and the set of boundary points
      $\left\{(1,\theta):\theta\in
      \left[\frac{\pi}{4},\frac{3\pi}{4}\right]\right\}$ of
      $(\alpha,\R^2)$ should be considered as 
      different representations of the same portion of the
      boundary point with respect to the extensions
      $(\beta,\R^2)$ and $(\alpha,\R^2)$.
    \end{example}

    The definition of the relation, mentioned above, is as follows.

    \begin{definition}\label{def:chartcoversing}
      The boundary set $(\alpha,U,V)$ covers the boundary set 
      $(\beta,X,Y)$,
      $(\alpha,U,V)\covers(\beta,X,Y)$, if for all 
      $(y_i)\subset\dom(\beta)$
      so that $(\beta(y_i))$ has an accumulation point in $Y$,
      there exists a subsequence $(v_i)$ of
      $(y_i)$ so that
      $(v_i)\subset\dom(\alpha)$
      and $(\alpha(v_i))$ has an accumulation point in $V$.
    \end{definition}

    Hence if $(\alpha, U, V)\covers (\beta, X, Y)$ then the
    map $\alpha\circ\beta^{-1}$ can be considered as
    mapping $Y$ into $V$.

    \begin{definition}\label{def.equiv}
      The boundary sets $(\alpha,U,V)$ and $(\beta, X, Y)$ are equivalent,
      $(\alpha,U,V)\equiv(\beta,X,Y)$
      if and only if 
      $(\alpha,U,V)\covers(\beta,X,Y)$ and $(\beta,X,Y)\covers(\alpha,U,V)$.
      The equivalence class of $(\alpha,U,V)$ is denoted $[(\alpha,U,V)]$.
      
      The pre-order $\covers$ induces a partial order, which in an abuse of 
      notation is also denoted by $\covers$, on the set of all 
      equivalence classes: the equivalence class of boundary sets
      $[(\alpha,U,V)]$ covers the equivalence class of boundary sets 
      $[(\beta,X,Y)]$, $[(\alpha,U,V)]\covers[(\beta,X,Y)]$ if and only if
      $(\alpha, U,V)\covers(\beta,X,Y)$.
    \end{definition}

    \begin{example}
      Continuing from Example \ref{expointtoBviceversa}: the
      boundary point 
      $
        \left(\alpha, \R^2, \left\{\left(1,\frac{3\pi}{2}\right)\right\}\right)
      $
      is equivalent to the boundary set 
      $(\beta, \R^2, X)$ and for any $x\in X$ it covers the
      boundary point $(\beta, \R^2, \{x\})$. Similarly,
      the boundary point $\left(\beta, \R^2,
      \left\{\left(1,\frac{\pi}{2}\right)\right\}\right)$ is equivalent to the
      boundary set $\left(\alpha,
      \R^2,\left
      \{(1,\theta):\theta\in\left[\frac{\pi}{4},\frac{3\pi}{4}\right]\right\}
      \right)$
      and for  point $y$ in 
      $\{(1,\theta):\theta\in\left[\frac{\pi}{4},\frac{3\pi}{4}\right]\}$
      it covers $(\alpha, \R^2, \{y\})$.
    \end{example}

    If $(\alpha, U)\in\EX{M}$ and $x\in\BP{\alpha}\cap U$, then
    $(\alpha, U, \{x\})$ is a representation of $[(\alpha, U, \{x\})]$.
    The equivalence class $[(\alpha, U, \{x\})]$ can be thought of
    as the coordinate independent object that $\alpha$ provides a 
    representation of. 

  \subsection{The chart induced coordinate invariant boundary}\label{sec.cicib}

    I am now in a position to construct the space which each extension
    $(\alpha, U)$ gives a representation of.

    \begin{definition}\label{def.quotientmap}
      Let $Q\subset\EX{M}$ be a set of extensions and
      let $P_Q=\{(\alpha,\ran(\alpha)):\alpha\in\Atlas{M}\}$.
      While no element of $P_Q$ is an extension
      (as $\varnothing = \partial_{\ran(\alpha)}\ran(\alpha)$), 
      a pair $(\alpha,\ran(\alpha))\in P_Q$
      can be viewed as the trivial extension of $\alpha$.
      Let $S_Q= P_Q\cup Q$.
      For each pair $(\alpha, U)\in S_Q$ let
      $N_{(\alpha, U)}=\ran(\alpha)\cup\partial_U\ran(\alpha)$.
      Each $N_{(\alpha, U)}\subset \R^{n}$ and thus carries the relative
      topology induced by the topology on $\R^n$.
      Let
      \[
        N_Q=\bigsqcup_{(\alpha, U)\in S_Q}N_{(\alpha, U)}.
      \]
      be the disjoint of all $N_{(\alpha, U)}$.
      Define an equivalence relation on $N_Q$ by
      $x\in N_{(\alpha, U)}$ is equivalent to $y\in N_{(\beta, X)}$
      if and only if either $x\in\ran(\alpha),y\in\ran(\beta)$ and
      $\beta\circ\alpha^{-1}(x)=y$ or
      $x\in\partial_{U}\ran(\alpha), y\in\partial_X\ran(\beta)$
      and
      $(\alpha, U, \{x\})\equiv (\beta, X, \{y\})$.
      In an abuse of notion,
      let $Q(M)$ be the image 
      of $N_Q$ under this equivalence relation. 
      There exists a quotient map $q:N_Q\to Q(M)$ that takes each element
      $x\in N_Q$ to its equivalence class $q(x)=[x]$.
      The set $N_Q$ has a natural topology induced by the topologies
      on each of its components.
      Equip $Q(M)$ with the quotient topology induced by the topology
      on $N_Q$ and the map $q$. Hence $Q(M)$ is a topological space.
      The set $Q(M)$ will be called the completion of $M$ with respect to
      $Q$.
      There exists a map $\imath_Q:M\to Q(M)$ given by $\imath_Q(x)=[\alpha(x)]$
      for any $\alpha\in \Atlas{M}$ so that $x\in\dom(\alpha)$.
      The map $\imath_Q$ is well defined by definition of $Q(M)$.
    \end{definition}

    The properties of $q, \imath_Q$ and $Q(M)$ are described in Appendix
    \ref{sec.topQ(M)}.
    For the convenience of the reader I summarise them here.
    The map $q$ is open and continuous and
    the restriction $q|_{N_{(\alpha,U)}}$,
    for all $(\alpha, U)\in S_Q$, is a homeomorphism. 
    The map $\imath_Q$ is
    continuous and a homeomorphism onto its image.
    The space $Q(M)$ is a $T_1$, separable, first countable, 
    locally metrizable
    topological space.  Lastly $\imath_Q(M)$ is an open dense
    subset of $Q(M)$.

    The boundary $\partial\imath_Q(M)$ is what you might expect it
    to be from the discussion in Section \ref{sec.chart_boundary}.

    \begin{proposition}
      Let $Q\subset\EX{M}$ and let $\imath_Q$ be the function
      defined above. Then
      \[
        \partial\imath_Q(M)=\bigcup_{(\alpha, U)\in Q} 
          q(\partial_U\ran(\alpha)) = \bigcup_{(\alpha, U)}q(\BP{\alpha}\cap
          U).
      \]
    \end{proposition}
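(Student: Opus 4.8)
The plan is to turn the topological boundary into a set difference and then identify its two halves using the properties of $q$ and $\imath_Q$ recorded after Definition \ref{def.quotientmap}. Since $\imath_Q(M)$ is open and dense in $Q(M)$ we have $\interior\,\imath_Q(M)=\imath_Q(M)$ and $\overline{\imath_Q(M)}=Q(M)$, so $\partial\imath_Q(M)=Q(M)\setminus\imath_Q(M)$. Because $\partial_U\ran(\alpha)=\BP{\alpha}\cap U$ for every extension $(\alpha,U)\in\EX{M}$, the two unions on the right-hand side of the claim coincide, so it suffices to prove $Q(M)\setminus\imath_Q(M)=\bigcup_{(\alpha,U)\in Q}q(\partial_U\ran(\alpha))$.

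For the inclusion $\subseteq$ I would decompose $Q(M)$. As $q$ is surjective and $N_Q=\bigsqcup_{(\alpha,U)\in S_Q}N_{(\alpha,U)}$ with $N_{(\alpha,U)}=\ran(\alpha)\cup\partial_U\ran(\alpha)$, we get $Q(M)=\bigcup_{(\alpha,U)\in S_Q}\bigl(q(\ran(\alpha))\cup q(\partial_U\ran(\alpha))\bigr)$. For every chart $\alpha$ one has $q(\ran(\alpha))=\imath_Q(\dom(\alpha))\subseteq\imath_Q(M)$ directly from the definition of $\imath_Q$, while $\partial_{\ran(\alpha)}\ran(\alpha)=\varnothing$ for the trivial extensions in $P_Q$. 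Hence $Q(M)=\imath_Q(M)\cup\bigcup_{(\alpha,U)\in Q}q(\partial_U\ran(\alpha))$, which gives the inclusion.

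For the reverse inclusion I must show, for each $(\alpha,U)\in Q$ and each $p\in\partial_U\ran(\alpha)$, that $q(p)\notin\imath_Q(M)$. The clean route is to note that the two defining clauses of the equivalence relation on $N_Q$ are mutually exclusive and that in each $N_{(\beta,X)}$ the range part $\ran(\beta)$ and the boundary part $\partial_X\ran(\beta)$ are disjoint, so no chain of these relations passes from a range point to a boundary point; thus the class $q(p)$ consists only of boundary points, whereas every element of $\imath_Q(M)$ is of the form $[\beta(x)]$ with $\beta(x)\in\ran(\beta)$ and so consists only of range points. Since the equivalence classes partition $N_Q$, $q(p)\neq\imath_Q(x)$ for all $x\in M$, i.e.\ $q(p)\in Q(M)\setminus\imath_Q(M)$. (Alternatively, avoiding any examination of the equivalence relation: pick $(x_i)\subset\dom(\alpha)$ with $\alpha(x_i)\to p$, possible since $p\in\overline{\ran(\alpha)}$ and $\R^n$ is metrizable; as $p\in\BP{\alpha}$ the sequence $(x_i)$ has no accumulation point in $M$; since $q|_{N_{(\alpha,U)}}$ is a homeomorphism, $\imath_Q(x_i)=q(\alpha(x_i))\to q(p)$; if $q(p)=\imath_Q(x)$ then, $\imath_Q$ being a homeomorphism onto its image, $x_i\to x$ in $M$, a contradiction.) Combining this with the two observations above finishes the argument.

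The only real content is the inclusion $\supseteq$, namely verifying that a boundary point of an extension is never identified in $Q(M)$ with the image of a genuine point of $M$; everything else is bookkeeping with the openness, density, and the homeomorphism properties of $q|_{N_{(\alpha,U)}}$ and of $\imath_Q$ listed after Definition \ref{def.quotientmap}.
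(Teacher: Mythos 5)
Your proof is correct and follows essentially the same route as the paper's: both reduce the claim to showing that images of range points lie in $\imath_Q(M)$ while images of admissible boundary points do not, using the openness and density of $\imath_Q(M)$, the identity $\partial_U\ran(\alpha)=\BP{\alpha}\cap U$, and the homeomorphism properties of $q|_{N_{(\alpha,U)}}$ and $\imath_Q$. The only divergence is in the harder inclusion, where your primary justification --- that the two clauses defining the equivalence relation on $N_Q$ never identify a range point with a boundary point, so the class $q(p)$ contains no range points --- is a cleaner, purely formal observation than the paper's contradiction (which deduces $x\notin\BP{\alpha}$ from the existence of a preimage in $M$); your parenthetical sequence argument is essentially the paper's version of that step.
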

    \begin{proof}
      The second equality follows by definition of $\EX{M}$, 
      Definition \ref{def:extension}.

      Suppose that there exists $(\alpha, U)\in Q$
      and $x\in\ran(\alpha)$ so that $q(x)\in\partial\imath_Q(M)$.
      By assumption $\alpha^{-1}(x)\in M$ and
      $\imath_Q(\alpha^{-1}(x))=q(x)$. Since $\imath_Q$ is a homeomorphism,
      Proposition \ref{prop.imathconstruct},
      and $M$ is open
      this implies that $q(x)\not\in\partial\imath_Q(M)$, a contradiction.

      Suppose that there exists $(\alpha, U)\in Q$
      and $x\in \partial_U\ran(\alpha)$ so that
      $q(x)\not\in\partial\imath_Q(M)$.
      By Proposition \ref{prop.imathconstruct} 
      $Q(M)=\imath_Q(M)\cup\partial\imath_Q(M)$
      thus $q(x)\in \imath_Q(M)$.
      As $\imath_Q$ is a homeomorphism onto its image this implies
      that there exists $y\in M$ so that $q(y)=q(x)$.
      This implies that $x\not\in\BP{\alpha}$. This is a contradiction
      since $(\alpha, U)\in \EX{M}$.
    \end{proof}

    Since $Q(M)$ is intended to be a completion of $M$ 
    a full investigation of
    the compactness properties of $Q(M)$ would be interesting. 
    Unfortunately the, in general, non-Hausdorff behaviour of points in
    $Q(M)$ make such a study a lengthy and technical affair, thus
    placing it beyond the scope of this paper.
    One can prove, however, that $Q(M)$ is a completion of
    $M$, in the sense of \cite[Definition 1.4.2]{citeulike:10080487},
    once $M$ is equipped with the Cauchy structure
    induced by all sequences in $M$ whose images, 
    under some chart in $Q$, are
    convergent.
    For the purposes of this paper, and most
    applications, Proposition \ref{prop.q(M)conv} is
    sufficient.

    The space $Q(M)$ does not, in general, carry a differential structure
    and, in general, is not 
    locally homeomorphic to $\R^{n-1}\times\{x\geq 0:x\in\R\}$.
    However, the functions
    $q|_{N_{(\alpha, U)}}:N_{(\alpha, U)}\to q(N_{(\alpha, U)})$ 
    are homeomorphisms onto their images in $Q(M)$.
    By Proposition 
    \ref{lem:embedding given by an extension} the set
    $\alpha_U^{-1}(N_{(\alpha, U)})$ is a subset of
    $M_{(\alpha, U)}$. Thus $Q(M)$ is locally homeomorphic to
    suitable subspaces of manifolds. 
    What prevents $Q(M)$ from being a, in general, non-second countable,
    non-Hausdorff manifold is that the maps
    $q|_{N_{(\alpha, U)}}:N_{(\alpha, U)}\to q(N_{(\alpha, U)})$
    are not charts because, in general, $N_{(\alpha,U)}$
    is not an open subset of
    $\R^{n-1}\times\{x\geq 0:x\in\R\}$.
    Never-the-less each $N_{(\alpha, U)}\subset U$ and
    so as $U=\ran(\alpha_U)$, each $N_{(\alpha, U)}$
    can be considered as a chart-like structure
    on $Q(M)$ that carries a differential
    structure induced by $\alpha$.
    In Section \ref{sec.compatiblecharts} conditions under which
    $Q(M)$ is a manifold with boundary are presented.
    In these circumstances $q|_{N_{(\alpha, U)}}$ is a diffeomorphism
    and hence the atlas of $Q(M)$ is generated by the
    maps $q|_{N_{(\alpha, U)}}$.  

    Since Definition \ref{def.quotientmap} and the
    results of Appendix \ref{sec.topQ(M)}
    only require a subset of extensions they can be used to, for example,
    define the $C^1$ boundary of $M$, by requiring that
    each selected extension, $(\alpha, U)$ to be such that
    $\partial_U\ran(\alpha)$ is $C^1$ in $\R^n$.

    Since $Q(M)$ is composed of equivalence classes under $\equiv$
    the points in $Q(M)$ are manifestly coordinate independent, in the 
    sense that two boundary points 
    $(\alpha, U, \{x\})$ and $(\beta, V, \{y\})$
    are representations of the same third, coordinate independent, point
    $[(\nu, W, \{z\})]\in Q(M)$ if and only if
    $(\alpha, U, \{x\})\equiv (\beta, V, \{y\})$. This is a generalisation
    of the coordinate independence described in the beginning of Section
    \ref{sec.relbetweenbound}.
    The condition $(\alpha, U, \{x\})\equiv (\beta, V, \{y\})$ can be rewritten
    as $q(x)=q(y)$, or perhaps even more evocatively as
    $y=\left(q|_{N_{(\beta, V)}}\right)^{-1}\circ q|_{N_{(\alpha, U)}}(x)$.
    The function $\left(q|_{N_{(\beta, V)}}\right)^{-1}\circ q|_{N_{(\alpha, U)}}$
    restricted to $\alpha(\dom(\alpha)\cap\dom(\beta))$ is $\beta\circ\alpha^{-1}$.
    Hence the function
    $\left(q|_{N_{(\beta, V)}}\right)^{-1}\circ q|_{N_{(\alpha, U)}}$ 
    is an extension
    of the coordinate transformation $\beta\circ\alpha^{-1}$ to the
    boundary of $M$.
    Functions of this form are the promised ``coordinate transformations
    on the boundary''.

    \begin{proposition}\label{prop.coordtransonboundarydef}
      Let $Q\subset\EX{M}$ be a set of extensions.
      For every pair $(\alpha,U),(\beta, V)\in S_Q$, there exists a
      homeomorphism
      \begin{align*}
        \overline{\beta\circ\alpha^{-1}}:
        q^{-1}\left(q(N_{(\alpha, U)})\cap q(N_{(\beta, V)})\right)&
        \cap 
        N_{(\alpha, U)}\\
        &\hspace{-1.5cm}\to
        q^{-1}\left(q(N_{(\alpha, U)})\cap q(N_{(\beta, V)})\right)
        \cap 
        N_{(\beta, U)}
      \end{align*}
      given by
      \[
        \overline{\beta\circ\alpha^{-1}}(x)
        =
        \left(q|_{N_{(\beta, V)}}\right)^{-1}\circ q|_{N_{(\alpha, U)}}(x),
      \]
      so that
      $\overline{\beta\circ\alpha^{-1}}|_{\alpha(\dom(\alpha)\cap\dom(\beta))}=
        \beta\circ\alpha^{-1}$ and
      $\overline{\beta\circ\alpha^{-1}}(x)=y$ if and only
      if $q(x)=q(y)$ and $x\in N_{(\alpha, U)}$, 
      $y\in N_{(\beta, V)}$, i.e.\ either $\beta\circ\alpha^{-1}(x)=y$
      or $(\alpha, U, \{x\})\equiv(\beta, V, \{y\})$.
    \end{proposition}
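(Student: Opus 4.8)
The plan is to deduce the entire statement from the properties of the quotient map $q$ collected after Definition \ref{def.quotientmap} and proved in Appendix \ref{sec.topQ(M)} — in particular that $q$ is open and continuous and that each restriction $q|_{N_{(\alpha, U)}}$ is a homeomorphism onto its image. Write $W = q(N_{(\alpha, U)})\cap q(N_{(\beta, V)})$ and set $D_\alpha = q^{-1}(W)\cap N_{(\alpha, U)}$ and $D_\beta = q^{-1}(W)\cap N_{(\beta, V)}$; these are exactly the domain and the (intended) range appearing in the statement. Since $N_{(\alpha, U)}$ is open in the disjoint union $N_Q$ and $q$ is open, $q(N_{(\alpha, U)})$ is open in $Q(M)$, so $W$ is open; and by the very meaning of a restricted preimage, $D_\alpha = (q|_{N_{(\alpha, U)}})^{-1}(W)$ and likewise $D_\beta = (q|_{N_{(\beta, V)}})^{-1}(W)$.

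First I would note that, $q|_{N_{(\alpha, U)}}$ being a homeomorphism onto $q(N_{(\alpha, U)})$ and $W$ being an open subset thereof, the corestricted map $q|_{D_\alpha}\colon D_\alpha\to W$ is a homeomorphism; symmetrically $q|_{D_\beta}\colon D_\beta\to W$ is a homeomorphism. Hence the map in the statement, $\overline{\beta\circ\alpha^{-1}} = \bigl(q|_{D_\beta}\bigr)^{-1}\circ q|_{D_\alpha}\colon D_\alpha\to D_\beta$, is a composition of two homeomorphisms, hence a homeomorphism, with inverse the analogously built $\overline{\alpha\circ\beta^{-1}}$. This settles existence together with the homeomorphism and domain/range assertions.

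Next I would verify the characterisation of the graph. For $x\in D_\alpha$ the equation $\overline{\beta\circ\alpha^{-1}}(x)=y$ says precisely $\bigl(q|_{N_{(\beta, V)}}\bigr)^{-1}(q(x))=y$, i.e.\ $y\in N_{(\beta, V)}$ and $q(y)=q(x)$; conversely, if $q(x)=q(y)$ with $x\in N_{(\alpha, U)}$ and $y\in N_{(\beta, V)}$, then $q(x)\in q(N_{(\beta, V)})$ forces $x\in D_\alpha$, and injectivity of $q|_{N_{(\beta, V)}}$ yields $\overline{\beta\circ\alpha^{-1}}(x)=y$. Unwinding the equivalence relation on $N_Q$ from Definition \ref{def.quotientmap}, the condition ``$q(x)=q(y)$, $x\in N_{(\alpha, U)}$, $y\in N_{(\beta, V)}$'' is by construction the disjunction ``$x\in\ran(\alpha)$, $y\in\ran(\beta)$ with $\beta\circ\alpha^{-1}(x)=y$'' or ``$x\in\partial_U\ran(\alpha)$, $y\in\partial_V\ran(\beta)$ with $(\alpha, U,\{x\})\equiv(\beta, V,\{y\})$'', which is the final clause. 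For the restriction property: if $x\in\alpha(\dom(\alpha)\cap\dom(\beta))$ then $x\in\ran(\alpha)\subset N_{(\alpha, U)}$ and $y := \beta\circ\alpha^{-1}(x)\in\ran(\beta)\subset N_{(\beta, V)}$, so $x$ and $y$ are identified by the first clause of the equivalence relation, hence $q(x)=q(y)$, $x\in D_\alpha$, and by the characterisation just proved $\overline{\beta\circ\alpha^{-1}}(x)=y=\beta\circ\alpha^{-1}(x)$.

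None of the steps is genuinely delicate once Appendix \ref{sec.topQ(M)} is in hand; the only thing to watch is matching the preimage-intersection description of the domain with $(q|_{N_{(\alpha, U)}})^{-1}(W)$ and keeping straight that $y$ is uniquely determined by $x$, thanks to the injectivity of $q|_{N_{(\beta, V)}}$. (The codomain in the display should read $N_{(\beta, V)}$ rather than $N_{(\beta, U)}$.)
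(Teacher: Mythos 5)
Your proposal is correct and follows essentially the same route as the paper's own proof: both arguments rest on Proposition \ref{prop.qrestishomeo} (each $q|_{N_{(\alpha,U)}}$ is a homeomorphism), deduce the homeomorphism property of the composition, read the graph characterisation off the definition of the equivalence relation in Definition \ref{def.quotientmap}, and obtain the restriction property by observing that $q(x)=q(\beta\circ\alpha^{-1}(x))$ for $x\in\alpha(\dom(\alpha)\cap\dom(\beta))$. You are in fact slightly more explicit than the paper in checking that the corestriction to $W$ remains a homeomorphism, and you correctly spot the typo $N_{(\beta,U)}$ for $N_{(\beta,V)}$ in the displayed codomain.
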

    \begin{proof}
      By Proposition \ref{prop.qrestishomeo}, for all $(\alpha, U)\in S_Q$,
      $q|_{N_{(\alpha, U)}}$ is a homeomorphism.
      Hence $\overline{\beta\circ\alpha^{-1}}$ is a homeomorphism.
      Suppose that $\overline{\beta\circ\alpha^{-1}}(x)=y$.
      Then, by definition and as $x\in N_{(\alpha, U)}$, $y\in N_{(\beta, V)}$
      $
        y=\left(q|_{N_{(\beta, V)}}\right)^{-1}\circ q|_{N_{(\alpha, U)}}(x)
      $
      implies that $q(y)=q(x)$ as required. Similarly, if
      $x\in N_{(\alpha, U)}$, $y\in N_{(\beta, V)}$ and
      $q(x)=q(y)$ then
      $
        y=\left(q|_{N_{(\beta, V)}}\right)^{-1}\circ q|_{N_{(\alpha, U)}}(x)
      $
      by definition of $q$.

      I now show that
      $\overline{\beta\circ\alpha^{-1}}|_{\alpha(\dom(\alpha)\cap\dom(\beta))}=
        \beta\circ\alpha^{-1}$.  By definition of $q$, if 
      $x\in\alpha(\dom(\alpha)\cap \dom(\beta))$ then there exists
      $y\in\beta(\dom(\alpha)\cap\dom(\beta))$ so that
      $\beta\circ\alpha^{-1}(x)=y$. This implies that
      $q(x)=q(y)$.
      Since $x\in N_{(\alpha, U)}$ and $y\in N_{(\beta, V)}$
      this implies that $q(x)\in q(N_{(\alpha, U)})\cap q(N_{(\beta, V)})$
      and therefore that
      $x$ is in the domain of $\overline{\beta\circ\alpha^{-1}}$.
      That is,
      $\alpha(\dom(\alpha)\cap\dom(\beta))\subset
      q^{-1}(q(N_{(\alpha, U)})\cap q(N_{(\beta, V)}))$.
      Thus $\overline{\beta\circ\alpha^{-1}}$ is defined on
      $\alpha(\dom(\alpha)\cap\dom(\beta))$. The definition
      of $q$ now implies that 
      $\overline{\beta\circ\alpha^{-1}}|_{\alpha(\dom(\alpha)\cap\dom(\beta))}=
        \beta\circ\alpha^{-1}$, as required.
    \end{proof}

    The homeomorphism
    $\overline{\beta\circ\alpha^{-1}}$, due to its definition,
    can also be thought of
    as a map which describes the relation $\equiv$ between
    $(\alpha, U)$ and $(\beta, V)$ restricted to boundary points and
    which respects the topologies given by $U$ and $V$.

    As already mentioned
    Proposition \ref{prop.coordtransonboundarydef} allows the set
    $Q$ to be free. 
    The next result describes how the boundaries induced by different
    choices of sets of extensions are related. Note that this is not
    the most general form of the result possible, but it is sufficient 
    for our needs.

    \begin{proposition}\label{prop.subboundaries}
      Let $A,B\subset \EX{M}$ so that for all $(\alpha, U)\in A$
      there exists $(\alpha, V)\in B$ with $U\subset V$.
      Then there is an injective continuous open
      function $f:A(M)\to B(M)$ defined by
      $f([x])=[x]$.
    \end{proposition}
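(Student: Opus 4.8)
The plan is to produce $f$ by first lifting to a map $\tilde f\colon N_A\to N_B$ of the underlying disjoint unions and then descending through the quotient maps; throughout write $q_A,q_B$ for the map $q$ of Definition \ref{def.quotientmap} applied to $Q=A$ and $Q=B$, and $\sim_A,\sim_B$ for the corresponding equivalence relations on $N_A,N_B$. Two elementary observations do the real work. \emph{First}, if $\ran(\alpha)\subset U\subset V$ with $U,V$ open in $\R^n$, then $\partial_U\ran(\alpha)=U\cap\partial\ran(\alpha)$ and $\partial_V\ran(\alpha)=V\cap\partial\ran(\alpha)$, so $N_{(\alpha,U)}=N_{(\alpha,V)}\cap U$; since $U$ is open this exhibits $N_{(\alpha,U)}$ as an open subspace of $N_{(\alpha,V)}$, and the set-inclusion $N_{(\alpha,U)}\hookrightarrow N_{(\alpha,V)}$ is an open topological embedding. \emph{Second}, reading Definition \ref{def:chartcoversing} literally, the statement $(\alpha,U,\{x\})\covers(\beta,X,\{y\})$ makes no reference to the extension sets $U,X$ --- only to the charts $\alpha,\beta$ and the points $x,y$ --- so whenever $(\alpha,U,\{x\})$, $(\alpha,U',\{x\})$, $(\beta,X,\{y\})$, $(\beta,X',\{y\})$ are all legitimate boundary points one has $(\alpha,U,\{x\})\equiv(\beta,X,\{y\})$ if and only if $(\alpha,U',\{x\})\equiv(\beta,X',\{y\})$.

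Now for the construction. Since $P_A=P_B$, every $(\alpha,U)\in S_A$ either lies in $P_B\subset S_B$, or lies in $A$, in which case I choose some $(\alpha,V)\in B$ with $U\subset V$ (such a $V$ exists by hypothesis); in either case call the chosen second coordinate $V(\alpha,U)$, so $(\alpha,V(\alpha,U))\in S_B$ and, by the first observation, $N_{(\alpha,U)}\subset N_{(\alpha,V(\alpha,U))}$. Define $\tilde f\colon N_A\to N_B$ to send the summand $N_{(\alpha,U)}$ into the summand $N_{(\alpha,V(\alpha,U))}$ by this inclusion. The summands of a disjoint union are clopen, and $\tilde f$ restricts to an open embedding on each summand, so $\tilde f$ is continuous and open. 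The key point is that $\tilde f$ is compatible with the equivalence relations: if $x\sim_A y$ via the interior clause ($x\in\ran(\alpha)$, $y\in\ran(\beta)$, $\beta\circ\alpha^{-1}(x)=y$) then the same identity witnesses $\tilde f(x)\sim_B\tilde f(y)$, as $\tilde f$ only changes the index; if $x\sim_A y$ via the boundary clause then $x\in\partial_U\ran(\alpha)\subset\partial_{V(\alpha,U)}\ran(\alpha)$ and $y\in\partial_X\ran(\beta)\subset\partial_{V(\beta,X)}\ran(\beta)$ remain boundary points, and the second observation promotes $(\alpha,U,\{x\})\equiv(\beta,X,\{y\})$ to $(\alpha,V(\alpha,U),\{x\})\equiv(\beta,V(\beta,X),\{y\})$, so again $\tilde f(x)\sim_B\tilde f(y)$. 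Hence $\tilde f$ descends to a well-defined $f\colon A(M)\to B(M)$ with $f\circ q_A=q_B\circ\tilde f$, and by construction $f([x])=[x]$.

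The three required properties follow quickly. Continuity of $f$ is the universal property of the quotient topology on $A(M)$: $f\circ q_A=q_B\circ\tilde f$ is continuous. For openness, given $W\subset A(M)$ open, $q_A^{-1}(W)$ is open, hence $\tilde f(q_A^{-1}(W))$ is open ($\tilde f$ open), and $f(W)=q_B\bigl(\tilde f(q_A^{-1}(W))\bigr)$ since $q_A$ is onto, so $f(W)$ is open because $q_B$ is open (Appendix \ref{sec.topQ(M)}). For injectivity, suppose $f([x])=f([y])$; then $\tilde f(x)\sim_B\tilde f(y)$. If this holds via the interior clause then $\beta\circ\alpha^{-1}(x)=y$ and hence $x\sim_A y$; if via the boundary clause then $x\notin\ran(\alpha)$, which forces $(\alpha,U)\in A$ and so $x\in\partial_U\ran(\alpha)$ (and likewise for $y$), whence the second observation converts the equivalence of the $V$-enlarged triples back to $(\alpha,U,\{x\})\equiv(\beta,X,\{y\})$, giving $x\sim_A y$; either way $[x]=[y]$ in $A(M)$. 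I expect the only genuine subtlety to be the second observation --- recognising that the covering relation, and hence $\equiv$, is blind to the extension sets --- together with the routine but necessary check that every triple appearing in the argument is a legitimate boundary point; the remaining bookkeeping is immediate from the properties of $q_A$ and $q_B$ recorded after Definition \ref{def.quotientmap} and in Appendix \ref{sec.topQ(M)}.
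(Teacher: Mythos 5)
Your proof is correct and follows essentially the same route as the paper's: the paper also builds $f$ from the induced inclusions $N_{(\alpha,U)}\hookrightarrow N_{(\alpha,V)}$ and the quotient maps, merely stating that "the claimed properties are now easy to check." Your write-up supplies the details the paper omits, in particular the key observation that the relation $\covers$ of Definition \ref{def:chartcoversing} (and hence $\equiv$) does not depend on the extension sets, which is exactly what makes $f([x])=[x]$ well defined and injective.
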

    \begin{proof}
      Let $(\alpha, U)\in A$ and
      let $i:N_{(\alpha, U)}\to N_{(\alpha, V)}$ be the induced
      inclusion.
      The function $f$ can alternatively be given by
      $f|_{q(N_{(\alpha,U)})}([x])=q|_{N_{(\alpha, V)}}\circ i\circ
      q|_{N_{(\alpha, U)}}^{-1}([x])$.
      The claimed properties are now easy to check.
    \end{proof}

    We now give the coordinate independent chart induced boundary, which
    is the largest boundary with respect to the ordering of sub-boundaries
    given by Proposition \ref{prop.subboundaries}.

    \begin{definition}[The coordinate independent chart induced boundary]
    \label{def:chart_boundary}
      For each $\alpha\in \Atlas{M}$ that is extendible
      let $U_\alpha$ be the maximal extension given by
      Lemma \ref{lem.maxextension}.
      Let $Q_{\mathcal{A}}=\{(\alpha,U_\alpha):\alpha\in\Atlas{M}
        \text{ is extendible}\}$.
      Let $Q_{\mathcal{A}}(M)$ and 
      $\imath_{Q_{\mathcal{A}}}:M\to Q_{\mathcal{A}}(M)$ be as given in Definition 
      \ref{def.quotientmap}.
      The boundary of $M$ is $\partial\imath_{Q_{\mathcal{A}}}(M)$.
      To emphasise the existence of the boundary independently from
      $\imath_{Q_{\mathcal{A}}}$ 
      I denote $\partial\imath_{Q_{\mathcal{A}}}(M)$ by $\ABBPM{M}$.
      The set $Q_{\mathcal{A}}(M)$
      will be called
      the completion of $M$.
    \end{definition}

    \begin{definition}
      In an abuse of notation, the elements 
      of $\ABBPM{M}$ are called boundary points. The distinction between
      a boundary point $[(\alpha,W,\{p\})]\in\ABBPM{M}$ and 
      a boundary point $(\alpha,W,\{p\})\in\BPM{M}$ will be clear from context.
    \end{definition}

    Due to the structure of $Q_{\mathcal{A}}(M)$, quantities defined using the 
    standard chart based approach to studying the boundary of a manifold
    will be well defined on the completion of $M$
    if they respect the equivalence relation
    used to define $Q_{\mathcal{A}}(M)$. 
    On $\imath_{Q_{\mathcal{A}}}(M)$ this is nothing other than the
    usual invariance under changes of coordinates. On $\ABBPM{M}$ this
    can be interpreted as invariance under coordinate transformations
    on the boundary, as given by Proposition 
    \ref{prop.coordtransonboundarydef}.

    The space $Q_{\mathcal{A}}(M)$ is a completion of $M$ in the following sense.

    \begin{proposition}\label{prop.q(M)conv}
      Let $Q_{\mathcal{A}}(M)$ be the completion of $M$, 
      as given in Definition \ref{def:chart_boundary}.
      If $(x_i)\subset M$ is a sequence then 
      $(\imath_{Q_{\mathcal{A}}}(x_i))$ has at least one limit point.
    \end{proposition}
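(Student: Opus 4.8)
\medskip
\noindent\emph{Proof proposal.}
My approach is a dichotomy on whether $(x_i)$ accumulates in $M$. If $(x_i)$ has an accumulation point $x\in M$, then since $M$ is first countable some subsequence $x_{i_k}\to x$; as $\imath_{Q_{\mathcal{A}}}$ is continuous (see the discussion after Definition~\ref{def.quotientmap}) we get $\imath_{Q_{\mathcal{A}}}(x_{i_k})\to\imath_{Q_{\mathcal{A}}}(x)$, so $\imath_{Q_{\mathcal{A}}}(x)$ is a limit point of $(\imath_{Q_{\mathcal{A}}}(x_i))$. So assume henceforth that $(x_i)$ has \emph{no} accumulation point in $M$; passing to a subsequence we may take its terms pairwise distinct (a recurring value would be an accumulation point), so that $\{x_i\}$ is a closed discrete subset of $M$.

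For this case the plan is to produce an \emph{extendible} chart $\alpha$ and a subsequence $(x_{i_k})\subset\dom(\alpha)$ with $\alpha(x_{i_k})\to p$ for some $p\in\partial_{U_\alpha}\ran(\alpha)=\BP{\alpha}\cap U_\alpha$, where $U_\alpha$ is the maximal extension of Lemma~\ref{lem.maxextension}. Granting this: $(\alpha,U_\alpha)\in Q_{\mathcal{A}}$, the restriction $q|_{N_{(\alpha,U_\alpha)}}$ is a homeomorphism, and $\imath_{Q_{\mathcal{A}}}(x_{i_k})=q(\alpha(x_{i_k}))$ with $\alpha(x_{i_k}),p\in N_{(\alpha,U_\alpha)}$, so $\imath_{Q_{\mathcal{A}}}(x_{i_k})\to q(p)\in\ABBPM{M}$, which is then the required limit point.

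To build $\alpha$ (take $M$ connected; otherwise restrict to a component containing infinitely many terms, or treat components in parallel): first join a subsequence of the $x_i$ by a smooth, injective, proper curve $\gamma:[0,1)\to M$ leaving every compact subset of $M$ --- chain $x_i$ to $x_{i+1}$ inside $M\setminus K_i$ for a compact exhaustion $(K_i)$ of $M$, after passing to a subsequence whose consecutive terms lie in a common component of the relevant $M\setminus K_i$. Extend $\gamma$ slightly past $\gamma(0)$ and take a thin tubular neighbourhood $T$, open in $M$ and diffeomorphic to $(-\delta,1)\times(-1,1)^{n-1}$ with $\gamma$ the zero section, chosen thin enough that travelling out the $t\to1$ end leaves every compact subset of $M$ (possible since $\gamma$ is proper). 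Define $\alpha(t,v)=(1-t,\,g(v))$ for a diffeomorphism $g:(-1,1)^{n-1}\to\R^{n-1}$, so that $\ran(\alpha)=(0,1+\delta)\times\R^{n-1}$: the escaping end becomes the hyperplane face $\{0\}\times\R^{n-1}$ of this open slab, while the sides of $T$ (where $v$ leaves the cube $(-1,1)^{n-1}$) are sent off to infinity in the $\R^{n-1}$ directions and contribute \emph{no} boundary points. Then $\{0\}\times\R^{n-1}\subset\BP{\alpha}$: if a sequence in $\ran(\alpha)$ converges to $(0,c)$ its $t$-coordinate tends to $1$, so its preimage runs out the $t\to1$ end of $T$ and hence has no accumulation point in $M$. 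With $U=(-1,1+\delta)\times\R^{n-1}$ we get $\varnothing\neq\partial_U\ran(\alpha)=\{0\}\times\R^{n-1}\subset\BP{\alpha}$, so $(\alpha,U)\in\EX{M}$, $\alpha$ is extendible, and $\{0\}\times\R^{n-1}\subset\partial_{U_\alpha}\ran(\alpha)$ by Lemma~\ref{lem.maxextension}. Finally, if $x_{i_k}=\gamma(s_k)$ with $s_k\to1$ then $\alpha(x_{i_k})=(1-s_k,\,g(0))\to(0,g(0))\in\partial_{U_\alpha}\ran(\alpha)$, and the conclusion follows as above.

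The step I expect to be the main obstacle is the global bookkeeping in the construction of $\gamma$, and, when no single proper curve threads through enough of the $x_i$ (for instance if they escape through infinitely many distinct ends of $M$), the replacement of the single tube by a countable disjoint family of tubular neighbourhoods whose images are pairwise disjoint slabs accumulating on a common face. One must then ensure that \emph{every} point of the resulting relative topological boundary is admissible; for the faces separating consecutive slabs this can force the use of ``two-ended'' tubes, i.e.\ proper lines $\R\to M$ escaping $M$ at both ends, and one must check that every closed discrete sequence can be captured in this way. The remaining verifications --- that $\alpha$ is a genuine chart, that the indicated set is the relative boundary, and the passage through $q$ --- are routine given Definition~\ref{def:extension}, Lemma~\ref{lem.maxextension}, and the properties of $q$ and $\imath_{Q_{\mathcal{A}}}$ recorded after Definition~\ref{def.quotientmap}.
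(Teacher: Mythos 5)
Your proof has the same skeleton as the paper's: split on whether $(x_i)$ accumulates in $M$; dispose of the accumulating case by continuity of $\imath_{Q_{\mathcal{A}}}$; and in the non-accumulating case produce an extension $(\alpha,U_\alpha)\in Q_{\mathcal{A}}$ and a subsequence whose $\alpha$-image converges to a point of $\BP{\alpha}\cap U_\alpha$, then push through $q|_{N_{(\alpha,U_\alpha)}}$. The difference lies entirely in how that extension is obtained. The paper constructs nothing: it invokes the Endpoint Theorem \cite[Theorem 3.2.1]{Whale2010} to get an envelopment $\phi:M\to M_\phi$ with $\phi(x_i)\to x\in\partial\phi(M)$, and then converts a chart of $M_\phi$ about $x$ into an extension via Proposition \ref{prop.inverseenvtoext} and Lemma \ref{lem.maxextension}. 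Your proper-curve-plus-shrinking-tube argument is in effect a direct proof of exactly the special case of that theorem which is needed here; the tube-to-slab chart, the choice $U=(-1,1+\delta)\times\R^{n-1}$ that discards the interior face, and the final passage through $q$ are all correct.

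The ``global bookkeeping'' you defer is therefore precisely the step the paper outsources, and it does require an argument; for connected $M$ it can be closed as follows. Fix a compact exhaustion $K_1\subset\interior\,K_2\subset\cdots$. For compact $K\subset\interior\,K'$, every component $C$ of $M\setminus K$ satisfies $\overline{C}\setminus C\subset K$, so if $C\not\subset K'$ then $C$ meets both $\interior\,K'$ and $M\setminus K'$ and hence meets $\partial K'$; since $\partial K'\subset M\setminus K$ is compact and the components are disjoint open sets, only finitely many components are not contained in $K'$. Thus all but finitely many terms of $(x_i)$ lie in the finitely many components of $M\setminus K_j$ not contained in $K_{j+1}$, so some component contains infinitely many terms, and these components can be chosen nested, $C_1\supset C_2\supset\cdots$. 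Picking $x_{i_k}\in C_k$ and joining $x_{i_k}$ to $x_{i_{k+1}}$ inside the path-connected open set $C_k$ yields a curve whose tail beyond the $k$-th junction lies in $M\setminus K_k$, hence is proper. This also shows that your worry about the sequence escaping through ``infinitely many distinct ends'' is moot: since only a subsequence is required, a single one-ended tube always suffices for connected $M$, and the two-ended tubes and countable families of slabs are not needed. (The one case a single tube genuinely cannot handle --- $M$ disconnected with the terms spread over infinitely many components, e.g.\ all components compact, where in fact $\EX{M}=\varnothing$ --- is not repaired by your ``treat components in parallel'', but it is also outside the scope of the paper's proof, which implicitly assumes $M$ connected.) With that step supplied, and the routine smoothing/embedding of the curve, your argument is a self-contained and more elementary, if longer, route to the same conclusion.
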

    \begin{proof}
      If $x\in M$ is a limit point of $(x_i)$ then as
      $\imath_{Q_{\mathcal{A}}}$ is continuous, and 
      $Q_{\mathcal{A}}(M)$ is first countable,
      $\imath_{Q_{\mathcal{A}}}(x)$ is a limit point of
      $\imath_{Q_{\mathcal{A}}}(x_i)$.
      If the sequence has no limit points then the Endpoint Theorem,
      \cite[Theorem 3.2.1]{Whale2010}, shows that there exists
      an envelopment $\phi:M\to M_\phi$ and $x\in \partial\phi(M)$
      so that $\phi(x_i)\to x$.
      Choose $\beta\in\Atlas{M_\phi}$ so that $x\in\dom(\beta)$.
      Proposition \ref{prop.inverseenvtoext} 
      implies that $(\beta\circ\phi,\ran(\beta))\in\EX{M}$.
      By construction $(\beta\circ\phi, V)\in Q_{\mathcal{A}}$ 
      where $V$ is the
      set given by Lemma \ref{lem.maxextension}. By construction
      $(\beta\circ\phi(x_i))\subset N_{(\beta\circ\phi,\ran(\beta))}$,
      $\beta(x)\in N_{(\beta\circ\phi,\ran(\beta))}$ and
      $\beta\circ\phi(x_i)\to \beta(x)$ with respect to the topology
      on $N_{(\beta\circ\phi,\ran(\beta))}$. Thus, as $q$ is continuous
      and $Q_{\mathcal{A}}(M)$ is first countable,
      $\imath_{Q_{\mathcal{A}}}(x_i)=q(\beta\circ\phi(x_i))\to q(\beta(x))$. 
      Since
      $\imath_{Q_{\mathcal{A}}}(x_i)=q(\beta(x_i))$ 
      the sequence $(\imath_{Q_{\mathcal{A}}}(x_i))$ has 
      $q(\beta(x))$ as a limit point.
    \end{proof}

    As a consequence of Proposition \ref{prop.subboundaries},
    for any arbitrary $A\subset\EX{M}$,
    the boundary $\partial\imath_A(M)$ can be viewed
    as a subset of $\ABBPM{M}$. 

    \begin{definition}\label{def.representationsofabbpm}
      Let $A\subset\EX{M}$ and let $Q_{\mathcal{A}}\subset \EX{M}$ be
      as given in Definition \ref{def:chart_boundary}.
      Let $f:A(M)\to Q_{\mathcal{A}}(M)$ be as given in Proposition 
      \ref{prop.subboundaries}.
      Let $\sigma_A$ denote the set $f(\partial\imath_A(M))$.
      This is the set of boundary points induced by $A$.
      The topological space $A(M)$ can be viewed as
      a concrete realisation of the sub-boundary
       $\sigma_A$. 
      I shall sometimes refer to
      $\sigma_A$ as the boundary induced by $A$.
      If for all sequences $(x_i)\subset M$
      the sequence $(\imath_{Q_{\mathcal{A}}}(x_i))$ has at least
      one limit point in $f(A(M))$ then the set
      $\sigma_A$ is called complete.
      If $A=\{(\alpha, U)\}$ the set $\sigma_A$ will
      be denoted by
      $\sigma_{(\alpha, U)}$. If the set $U$ is
      clear from context $\sigma_\alpha$ will be used.
    \end{definition}

    There is an important realisation of $\ABBPM{M}$ as a point
    set.

    \begin{proposition}\label{prop:ABdescibeofB}
      The boundary $\ABBPM{M}$ is in bijective correspondence with
      $\frac{\BPM{M}}{\equiv}$.
    \end{proposition}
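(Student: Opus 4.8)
The plan is to write down an explicit map $F\colon\ABBPM{M}\to\frac{\BPM{M}}{\equiv}$ and then verify it is well defined, injective and surjective; each verification will be short once the correct observation about $\equiv$ is in place. Recall from Definition \ref{def:chart_boundary} that $\ABBPM{M}=\partial\imath_{Q_{\mathcal{A}}}(M)$, where $Q_{\mathcal{A}}$ consists of the pairs $(\alpha,U_\alpha)$ with $\alpha\in\Atlas{M}$ extendible and $U_\alpha$ the maximal extension supplied by Lemma \ref{lem.maxextension}. Applying the proposition stated just after Definition \ref{def.quotientmap} with $Q=Q_{\mathcal{A}}$ gives $\ABBPM{M}=\bigcup q(\partial_{U_\alpha}\ran(\alpha))$, so every element of $\ABBPM{M}$ has the form $q(p)$ for some extendible $\alpha$ and some $p\in\partial_{U_\alpha}\ran(\alpha)=\BP{\alpha}\cap U_\alpha$ (the last equality by the remark after Definition \ref{def:extension}), and conversely every such $q(p)$ lies in $\ABBPM{M}$. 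I would define $F(q(p))=[(\alpha,U_\alpha,\{p\})]$, using the chart tag $(\alpha,U_\alpha)$ carried by the point $p\in N_{Q_{\mathcal{A}}}$.

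Well-definedness and injectivity I would dispatch together. For $p\in\partial_{U_\alpha}\ran(\alpha)$ and $p'\in\partial_{U_\beta}\ran(\beta)$, the boundary clause of the equivalence relation in Definition \ref{def.quotientmap} says precisely that $q(p)=q(p')$ if and only if $(\alpha,U_\alpha,\{p\})\equiv(\beta,U_\beta,\{p'\})$, i.e.\ if and only if $[(\alpha,U_\alpha,\{p\})]=[(\beta,U_\beta,\{p'\})]$ in $\frac{\BPM{M}}{\equiv}$. Reading this biconditional from left to right gives that $F$ does not depend on the chosen representative $(\alpha,U_\alpha,\{p\})$ of $q(p)$; reading it from right to left gives injectivity of $F$.

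For surjectivity I would start from an arbitrary class $[(\alpha,U,\{p\})]\in\frac{\BPM{M}}{\equiv}$, so $(\alpha,U)\in\EX{M}$ and $p\in\BP{\alpha}\cap U$; in particular $\alpha$ is extendible, hence $U_\alpha\supseteq U$ exists and $p\in\BP{\alpha}\cap U\subseteq\BP{\alpha}\cap U_\alpha=\partial_{U_\alpha}\ran(\alpha)$. The one point that has to be \emph{noticed} rather than computed is that the covering relation of Definition \ref{def:chartcoversing} refers only to the first and third entries of a boundary set and never to the ambient middle set; consequently both $(\alpha,U,\{p\})\covers(\alpha,U_\alpha,\{p\})$ and $(\alpha,U_\alpha,\{p\})\covers(\alpha,U,\{p\})$ hold trivially (take each sequence as its own subsequence), so $(\alpha,U,\{p\})\equiv(\alpha,U_\alpha,\{p\})$ and therefore $[(\alpha,U,\{p\})]=[(\alpha,U_\alpha,\{p\})]=F(q(p))$, with $q(p)\in\ABBPM{M}$ by the description above. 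Since the class was arbitrary, $F$ is onto, completing the bijection.

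I do not expect a genuine obstacle here: once one records that $\equiv$ is insensitive to the ambient sets, so that restricting attention from all extensions to the maximal ones $U_\alpha$ does not disturb the equivalence classes, the bijection is essentially forced by the construction of $Q_{\mathcal{A}}(M)$, whose boundary identifications were defined to be exactly $\equiv$. The only bookkeeping to keep straight is that $\BPM{M}$ ranges over \emph{all} extensions while $\ABBPM{M}$ is assembled only from the maximal ones, and that discrepancy is precisely what the surjectivity step resolves.
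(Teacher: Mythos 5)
Your proof is correct and follows the only natural route: the paper's own proof simply asserts that the statement ``follows directly from the definitions'' of $\BPM{M}$, $Q_{\mathcal{A}}(M)$, $\imath_{Q_{\mathcal{A}}}$ and $\equiv$, and your argument is precisely the detailed unpacking of those definitions. The one point the paper leaves implicit and you correctly isolate is that $\covers$ (hence $\equiv$) ignores the ambient sets $U$, so passing from arbitrary extensions in $\BPM{M}$ to the maximal extensions used in $Q_{\mathcal{A}}$ does not change the equivalence classes; this is exactly what makes your map surjective.
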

    \begin{proof}
      This follows directly from the definitions of $\BPM{M}$ (Definition
      \ref{def.ABBPM}), 
      $Q_{\mathcal{A}}(M)$ (Definition \ref{def:chart_boundary}), 
      $\imath_{Q_{\mathcal{A}}}$ (Definition \ref{def.quotientmap})
      and $\equiv$ (Definition \ref{def.equiv}).
    \end{proof}

    Those familiar with the Abstract Boundary, \cite{ScottSzekeres1994},
    should note that $\ABBPM{M}$ has a similar structure.
    I exploit this similarity in
    Section \ref{sec.classification_chart} 
    to give a physically motivated classification
    for the elements of $\ABBPM{M}$.
    Never-the-less $\ABBPM{M}$ has a richer structure.
    In particular, $\ABBPM{M}$ comes as part of a topological space
    $Q_{\mathcal{A}}(M)$ into which $M$ is embedded, as 
    $\imath_{Q_{\mathcal{A}}}(M)$, as an open
    dense set and the equivalence relation finds expression on 
    $Q_{\mathcal{A}}(M)$
    as the coordinate transformations on the boundary which take the
    form of the homeomorphisms
    $\overline{\beta\circ\alpha^{-1}}$. These structures are not present
    in the Abstract Boundary. Refer to \cite{citeulike:9599226} for
    an example of a topology on the Abstract Boundary.

  \subsection{Boundaries induced by collections of compatible charts}
  \label{sec.compatiblecharts}

    In an ideal world a coordinate transformation on the 
    boundary, Proposition \ref{prop.coordtransonboundarydef}, 
    would actually be a coordinate transformation
    in some larger manifold. This section introduces
    a compatibility condition which is almost, but not quite
    enough to achieve this. 
    If, in addition to the compatibility 
    condition, the chosen set of extensions, $Q$, is countable
    and the extensions 
    are sufficiently nice, see Proposition \ref{prop.topmanB},
    then 
    the coordinate
    transformations on the boundary turn out
    to be genuine coordinate transformations.
    The restrictions are mild and likely to be satisfied 
    when performing computations. Thus,
    in this
    way, calculations on $\ABBPM{M}$,
    or some suitable sub-boundary, are facilitated by the compatibility 
    condition.

    \begin{definition}\label{def:contact}
      The boundary points $(\alpha,U,\{p\})$ and $(\beta,X,\{q\})$
      are in contact, $(\alpha,U,\{p\})\perp(\beta,X,\{q\})$,
      if and only if there exists $(x_i)\subset 
      \dom(\alpha)\cap\dom(\beta)$
      so that $(\alpha(x_i))\to p$ and $(\beta(x_i))\to q$. 
      The two boundary points
      $[(\alpha,U,\{p\})], [(\beta,X,\{q\})] \in \ABBPM{M}$ 
      are in contact,
      $[(\alpha,U,\{p\})]\perp[(\beta,X,\{q\})]$,
      if and only if $(\alpha,U,\{p\})\perp(\beta,X,\{q\})$. It is easy to 
      show that the 
      in contact relation is well defined on $\ABBPM{M}$.
      If $(\alpha,U,\{p\})\not\perp (\beta,X,\{q\})$ then 
      $(\alpha,U,\{p\})$ and $(\beta,X,\{q\})$ are said to be
      separate, denoted $(\alpha,U,\{p\})\parallel(\beta,X,\{q\})$.
      Likewise, if $[(\alpha,U,\{p\})] \not\perp [(\beta,X,\{q\})]$
      then $[(\alpha,U,\{p\})]$ and $[(\beta,X,\{q\})]$
      are separate, denoted $[(\alpha,U,\{p\})]\parallel[(\beta,X,\{q\})]$.
    \end{definition}
            
    The in contact relation expresses the idea of two boundary points in
    different extensions representing some similar portion of the 
    boundary of the manifold (as their identifying sets of
    sequences have non-empty intersection). This similar portion of the
    boundary is the `limit point' of the sequence $(x_i)$ used in
    Definition \ref{def:contact}.
		
    The compatibility condition is:
  
    \begin{definition}\label{def.compatible_extensions}
      Two extensions $(\alpha,U)$ and $(\beta,X)$ are 
      compatible if for all
      $p\in\BP{\alpha}\cap U$ and $q\in\BP{\beta}\cap X$,
      $(\alpha,U,\{p\})\perp (\beta,X,\{q\}) \Rightarrow 
        (\alpha,U,\{p\})\equiv (\beta,X,\{q\})$.
    \end{definition}

    Thus given two extensions a boundary point in one extension can only
    correspond to a boundary point in the other extension, not to a collection
    of boundary points. 
    A reflection of this is the following result.

    \begin{proposition}\label{lem.compatimpliehausdor}
      Let $Q\subset\EX{M}$ be a set of pair-wise compatible
      extensions. Then $Q(M)$, given in Definition \ref{def.quotientmap},
      is Hausdorff.
    \end{proposition}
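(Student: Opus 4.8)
The plan is to fix distinct points $[\xi],[\eta]\in Q(M)$ and build disjoint open neighbourhoods. The tools are the facts recorded after Definition \ref{def.quotientmap} and proved in Appendix \ref{sec.topQ(M)}: $q$ is open, each $q|_{N_{(\alpha,U)}}$ is a homeomorphism, $\imath_Q$ is a homeomorphism onto its image, which is open and dense, and $\partial\imath_Q(M)=\bigcup_{(\alpha,U)\in Q}q(\partial_U\ran(\alpha))$. Since $\imath_Q(M)$ is open and dense, $\imath_Q(M)$ and $\partial\imath_Q(M)$ partition $Q(M)$, so there are three cases. An interior point will be represented by a point in some $\ran(\gamma)$ (via a trivial extension $(\gamma,\ran(\gamma))\in P_Q$, so that $N_{(\gamma,\ran(\gamma))}=\ran(\gamma)$), and a boundary point by a point in some $\partial_U\ran(\alpha)$ with $(\alpha,U)\in Q$; shrinking metric balls about these representatives inside $N_{(\cdot,\cdot)}\subset\R^n$ then have $q$-images that are genuine open neighbourhoods, because $q$ is open.

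If both $[\xi],[\eta]\in\imath_Q(M)$, the conclusion is immediate: $\imath_Q$ is injective and $M$ is Hausdorff, so separate $\imath_Q^{-1}([\xi])$ and $\imath_Q^{-1}([\eta])$ in $M$ and push the open sets forward. If both are boundary points, choose representatives $\xi\in\partial_U\ran(\alpha)$ and $\eta\in\partial_X\ran(\beta)$ with $(\alpha,U),(\beta,X)\in Q$. Since $[\xi]\neq[\eta]$, $(\alpha,U,\{\xi\})\not\equiv(\beta,X,\{\eta\})$, and compatibility (Definition \ref{def.compatible_extensions}) forces $(\alpha,U,\{\xi\})\parallel(\beta,X,\{\eta\})$. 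Argue by contradiction: if the balls $B^\alpha_n$ about $\xi$ in $N_{(\alpha,U)}$ and $B^\beta_n$ about $\eta$ in $N_{(\beta,X)}$ always satisfy $q(B^\alpha_n)\cap q(B^\beta_n)\neq\varnothing$, pick $a_n\in B^\alpha_n$, $b_n\in B^\beta_n$ with $q(a_n)=q(b_n)$. By the definition of the equivalence relation in Definition \ref{def.quotientmap}, either $a_n,b_n$ are both range points with $\beta\circ\alpha^{-1}(a_n)=b_n$, or both are boundary points with $(\alpha,U,\{a_n\})\equiv(\beta,X,\{b_n\})$. In the first case $z_n:=\alpha^{-1}(a_n)\in\dom(\alpha)\cap\dom(\beta)$ with $\alpha(z_n)=a_n$, $\beta(z_n)=b_n$. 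In the second, observe that $(\alpha,U,\{a_n\})\covers(\beta,X,\{b_n\})$ together with $b_n\in\BP{\beta}$ yields a sequence in $\dom(\alpha)\cap\dom(\beta)$ whose $\alpha$-image tends to $a_n$ and $\beta$-image tends to $b_n$ (feed a sequence of range points converging to $b_n$ into the covering relation and pass to a subsequence); take one far-out term to get $z_n\in\dom(\alpha)\cap\dom(\beta)$ with $\alpha(z_n)$ within $1/n$ of $a_n$ and $\beta(z_n)$ within $1/n$ of $b_n$. In either case $z_n\in\dom(\alpha)\cap\dom(\beta)$, $\alpha(z_n)\to\xi$, $\beta(z_n)\to\eta$, so $(\alpha,U,\{\xi\})\perp(\beta,X,\{\eta\})$, contradicting separateness. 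Hence some $n$ gives $q(B^\alpha_n)\cap q(B^\beta_n)=\varnothing$.

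The last case, $[\xi]\in\imath_Q(M)$ and $[\eta]\in\partial\imath_Q(M)$, does not use compatibility. Represent $[\xi]$ by $r\in\ran(\gamma)$ and $[\eta]$ by $\eta\in\partial_X\ran(\beta)$ with $(\beta,X)\in Q$, and suppose the $q$-images of all shrinking balls about $r$ and $\eta$ meet. The matched points $a_n\in\ran(\gamma)$ and $b_n$ must both be range points (a range point is equivalent only to range points), so $b_n=\beta\circ\gamma^{-1}(a_n)$ with $a_n\to r$; then $(\gamma^{-1}(a_n))\subset\dom(\beta)$ has accumulation point $\gamma^{-1}(r)\in M$ while $\beta(\gamma^{-1}(a_n))=b_n\to\eta$, contradicting the defining property of $\eta\in\BP{\beta}$ in Definition \ref{def:BP}. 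So disjoint neighbourhoods exist here too, and $Q(M)$ is Hausdorff.

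I expect the boundary–boundary case to be the crux, and within it the step converting the equivalence/covering data for $a_n,b_n$ back into a genuine sequence in $M$ that witnesses contact of $\xi$ and $\eta$: this is the only place compatibility is used, and it is where the diagonal choice of $z_n$ needs a little care.
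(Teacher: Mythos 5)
Your proof is correct, and it is a more elementary, self-contained route than the one the paper takes. The paper's own proof is two sentences: it asserts that the set $\{(x,y)\in N_Q\times N_Q:q(x)=q(y)\}$ is closed (calling this ``trivial to show'' from the definition of $Q(M)$ and pair-wise compatibility) and then invokes Theorem 13.10 of \cite{citeulike:593505}, the criterion that the image of an open surjection is Hausdorff exactly when that relation is closed. Your boundary--boundary case is, in substance, precisely the verification the paper omits: a sequence of identified pairs $(a_n,b_n)$ converging to $(\xi,\eta)$ is converted, via the covering relation, into a single sequence $(z_n)\subset\dom(\alpha)\cap\dom(\beta)$ with $\alpha(z_n)\to\xi$ and $\beta(z_n)\to\eta$, witnessing $(\alpha,U,\{\xi\})\contact(\beta,X,\{\eta\})$; compatibility then contradicts the separateness forced by $[\xi]\neq[\eta]$. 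The diagonal extraction of $z_n$ --- feeding a sequence with $\beta(y_i)\to b_n$ into $\covers$ and taking one far-out term of the resulting subsequence, so that $\alpha(z_n)$ and $\beta(z_n)$ land within $2/n$ of $\xi$ and $\eta$ --- is the one step that genuinely needs care, and you handle it correctly, as you do the sub-case where $a_n,b_n$ are range points. What your version buys is independence from the quoted general theorem and an explicit treatment of the interior--boundary case, where, as you rightly note, compatibility plays no role because the equivalence relation never identifies a range point with a boundary point and $\eta\in\BP{\beta}$ forbids accumulation of $(\gamma^{-1}(a_n))$ in $M$; what the paper's version buys is brevity, at the cost of leaving the only nontrivial step to the reader.
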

    \begin{proof}
      It is trivial to show that 
      $
        \{(x,y)\in N_Q\times N_Q: q(x)=q(y)\}
      $ 
      is closed
      by the definition of $Q(M)$ and the pair-wise compatibility
      assumption. Therefore as $Q(M)$ is the image of $q$ and as
      $q$ is open, Theorem 13.10 of \cite{citeulike:593505}
      implies that $Q(M)$ is Hausdorff.
    \end{proof}

    Thus the compatibility condition ensures that the boundary points
    $\partial\imath_Q(M)$ are a nicely behaved topological space.
    All that is missing for $Q(M)$ to be a topological
    manifold with boundary
    is a suitable collection of charts and a demonstration of
    second countability. This is 
    where the assumptions of countability and of `sufficiently nice
    extensions' are required.

    \begin{proposition}\label{prop.topmanB}
      Let $Q\subset\EX{M}$ be a countable set of pair-wise compatible
      extensions so that
      for each
      $(\alpha, U)\in Q$, $N_{(\alpha, U)}$ is a
      manifold with boundary.
      Then $Q(M)$, as given in Definition \ref{def.quotientmap},
      is a topological manifold with boundary.
    \end{proposition}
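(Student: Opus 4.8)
The plan is to verify the three defining properties of a topological manifold with boundary for $Q(M)$ in turn: that it is locally homeomorphic to $\R^{n-1}\times\{x\geq 0\}$, that it is Hausdorff, and that it is second countable. The Hausdorff property is immediate from Proposition \ref{lem.compatimpliehausdor}, since $Q$ is pair-wise compatible. So the real content is the local Euclidean structure and second countability.

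For the local structure, the key is the summary of Appendix \ref{sec.topQ(M)} recorded after Definition \ref{def.quotientmap}: for each $(\alpha, U)\in S_Q$, the restriction $q|_{N_{(\alpha, U)}}:N_{(\alpha, U)}\to q(N_{(\alpha, U)})$ is a homeomorphism, and $q$ is an open map, so each $q(N_{(\alpha, U)})$ is open in $Q(M)$. Since $\imath_Q(M)$ is dense and the images $q(N_{(\alpha,U)})$ for $(\alpha,U)\in Q$ cover $\partial\imath_Q(M)$ (by the Proposition following Definition \ref{def.quotientmap}), together with the $q(\ran(\alpha))$ for $\alpha\in\Atlas{M}$ they cover all of $Q(M)$. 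Now I would invoke the hypothesis that each $N_{(\alpha, U)}$ is a manifold with boundary: being second countable and Hausdorff and locally homeomorphic to $\R^{n-1}\times\{x\geq 0\}$, it in particular admits, around each of its points, an open set homeomorphic to an open subset of $\R^{n-1}\times\{x\geq 0\}$. Composing such a local homeomorphism with $\left(q|_{N_{(\alpha,U)}}\right)^{-1}$ — well, rather, pushing forward along $q|_{N_{(\alpha,U)}}$ — exhibits a neighbourhood of the corresponding point of $Q(M)$ as homeomorphic to an open subset of $\R^{n-1}\times\{x\geq 0\}$. Hence $Q(M)$ is locally Euclidean with boundary. (For interior points of $M$, the charts $\ran(\alpha)$ are already open in $\R^n$, which is consistent with the convention that $\R^n$ embeds as the interior.)

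For second countability I would use the countability of $Q$. Each $N_{(\alpha, U)}$, being a manifold with boundary, is second countable; let $\mathcal{B}_{(\alpha, U)}$ be a countable basis for its topology. Since $q|_{N_{(\alpha, U)}}$ is a homeomorphism onto its image and $q$ is open, $q(\mathcal{B}_{(\alpha, U)}) := \{q(B): B\in\mathcal{B}_{(\alpha, U)}\}$ is a countable collection of open subsets of $Q(M)$ that forms a basis for the subspace topology on the open set $q(N_{(\alpha, U)})$. Taking the union of these collections over all $(\alpha,U)\in Q$ together with a similar countable family for a countable subcover of $\imath_Q(M)$ by sets $q(\ran(\alpha))$ — here one uses that $\imath_Q(M)\cong M$ is itself second countable, so countably many charts suffice — yields a countable family of open sets. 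Because the sets $q(N_{(\alpha,U)})$ (plus the chosen $q(\ran(\alpha))$) cover $Q(M)$ and each such family is a basis on its piece, the union is a basis for $Q(M)$, which is therefore second countable.

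The main obstacle is the bookkeeping in the second-countability step: one must be slightly careful that a union of bases-on-an-open-cover is genuinely a basis for the whole space, which is true precisely because the cover is open, and that only countably many interior charts $\ran(\alpha)$ are needed, which follows from the second countability of $M$ (equivalently, Lindelöf-ness) rather than from any hypothesis on $Q$. Everything else reduces to transporting the manifold-with-boundary structure of each $N_{(\alpha,U)}$ across the homeomorphism $q|_{N_{(\alpha,U)}}$, which is routine given the results of Appendix \ref{sec.topQ(M)}. I would end by remarking that, as already noted after Definition \ref{def.quotientmap}, when each $N_{(\alpha,U)}$ is moreover a $C^\infty$ manifold with boundary one gets in addition a smooth atlas generated by the $q|_{N_{(\alpha,U)}}$, but that is beyond what this proposition claims.
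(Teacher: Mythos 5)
Your proposal is correct and follows essentially the same route as the paper's proof: Hausdorffness from Proposition \ref{lem.compatimpliehausdor}, local charts obtained by pushing the manifold-with-boundary atlases of the $N_{(\alpha,U)}$ forward along the homeomorphisms $q|_{N_{(\alpha,U)}}$, and second countability from the countability of $Q$. If anything you are slightly more careful than the paper on the last point, since $N_Q$ itself is a disjoint union indexed in part by the whole maximal atlas $P_Q$ and so is not literally second countable; your appeal to the Lindel\"of property of $M$ to select countably many interior charts is the right way to close that gap.
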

    \begin{proof}
      From above $Q(M)$ is Hausdorff. It is trivial to show that
      $Q$ being a countable set implies that
      $N_Q$ is second countable. Since $q$ is open $Q(M)$ is
      second countable.
      Let 
      \begin{multline*}
        A=\{\alpha\circ\imath_Q^{-1}:\alpha\in\Atlas{M}\}\ \cup\\
          \bigcup_{(\alpha, U)\in Q)}
          \{f\circ (q|_{N_{(\alpha,
          U)}})^{-1}:(\alpha, U)\in Q,\,f\in\Atlas{N_{(\alpha,U)}}\}.
        \end{multline*}
      The collection of maps in $A$ gives the needed
      homeomorphisms to $\R^{n-1}\times\{x\in\R:x\geq 0\}$.
      Therefore $Q(M)$ is a topological manifold with boundary.
    \end{proof}
    
    The transition functions on the boundary
    $\overline{\beta\circ\alpha^{-1}}$ now play the role of genuine
    coordinate transformations as 
    $f\circ(q|_{N_{(\alpha, U)}})^{-1}\circ q|_{N_{(\beta, V)}}\circ h^{-1}=
    f\circ\overline{\beta\circ\alpha^{-1}}\circ h^{-1}$
    for $f\in\Atlas{N_{(\alpha, U)}}$ and 
    $h\in \Atlas{N_{(\beta, V)}}$.

    If one is willing to assume, for any pair of
    extensions $(\alpha, U),(\beta,V)\in Q$ that the extensions of
    the transition functions $\beta\circ\alpha$
    to some subset of $\partial_U\ran(\alpha)$ is $C^k$
    then the following corollary holds.

    \begin{corollary}\label{cor:propenvindbound}
      Let $M$ be a $C^l$ manifold.
      If $Q\subset\EX{M}$ is a set of pair-wise compatible
      extensions and there exists $k\in\N$, $0\leq k\leq l$, so that;
      \begin{enumerate}
        \item 
          for all
          $(\alpha, U)\in Q$, $N_{(\alpha, U)}$ is a
          $C^k$,  manifold with boundary,
        \item for all $(\alpha, U),(\beta, V)\in Q$ the
          functions $\overline{\beta\circ\alpha^{-1}}$ are $C^k$.
      \end{enumerate}
      then $Q(M)$ is a $C^k$ manifold with boundary.
    \end{corollary}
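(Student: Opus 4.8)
The plan is to upgrade the topological manifold with boundary produced by Proposition~\ref{prop.topmanB} to a $C^k$ manifold with boundary by showing that the atlas $A$ written down in the proof of that proposition is in fact a $C^k$ atlas. First I would observe that the present hypotheses contain those of Proposition~\ref{prop.topmanB} (with $Q$ taken countable, as in that proposition): $Q$ is pair-wise compatible, so $Q(M)$ is Hausdorff by Proposition~\ref{lem.compatimpliehausdor}; $Q$ is countable, so $N_Q$, and hence $Q(M)$ since $q$ is open, is second countable; and by~(1) each $N_{(\alpha,U)}$ is in particular a topological manifold with boundary. Proposition~\ref{prop.topmanB} then gives that $Q(M)$ is a topological manifold with boundary carrying the atlas
\[
A=\{\alpha\circ\imath_Q^{-1}:\alpha\in\Atlas{M}\}\cup\bigcup_{(\alpha,U)\in Q}\{f\circ(q|_{N_{(\alpha,U)}})^{-1}:f\in\Atlas{N_{(\alpha,U)}}\},
\]
so all that remains is to check that any two charts in $A$ are $C^k$-compatible.

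The next step would be to put all charts of $A$ into a uniform shape. Since $\imath_Q=q|_{N_{(\alpha,\ran(\alpha))}}\circ\alpha$ on $\dom(\alpha)$, one has $\alpha\circ\imath_Q^{-1}=(q|_{N_{(\alpha,\ran(\alpha))}})^{-1}$, so every chart in $A$ is of the form $f\circ(q|_{N_{(\alpha,U)}})^{-1}$ with $(\alpha,U)\in S_Q$ and $f\in\Atlas{N_{(\alpha,U)}}$, where for a trivial extension $(\alpha,\ran(\alpha))\in P_Q$ the set $N_{(\alpha,\ran(\alpha))}=\ran(\alpha)$ is open in $\R^n$ and $f$ is the identity. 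For two such charts, coming from $(\alpha,U)$ with chart $f$ and from $(\beta,V)$ with chart $h$, the transition function on the (open) overlap equals $f\circ\overline{\beta\circ\alpha^{-1}}\circ h^{-1}$, as recorded after Proposition~\ref{prop.topmanB}, where $\overline{\beta\circ\alpha^{-1}}$ is the coordinate transformation on the boundary of Proposition~\ref{prop.coordtransonboundarydef}. I would then split into cases. If both extensions are trivial, this is just $\beta\circ\alpha^{-1}\in\Atlas{M}$, which is $C^l$ and hence $C^k$ because $k\le l$. If exactly one is trivial, the overlap lies inside $\imath_Q(M)$, so $\overline{\beta\circ\alpha^{-1}}$ coincides there with $\beta\circ\alpha^{-1}$ and sends the interior into the interior part $\ran(\beta)$ of $N_{(\beta,V)}$, and again the composite is $C^l$, hence $C^k$. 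If both extensions lie in $Q$, then by Proposition~\ref{prop.coordtransonboundarydef} $\overline{\beta\circ\alpha^{-1}}$ is a homeomorphism between open subsets of the $C^k$ manifolds with boundary $N_{(\alpha,U)}$ and $N_{(\beta,V)}$ (it carries boundary to boundary by invariance of domain), it is $C^k$ with respect to those structures by hypothesis~(2), and composing with the $C^k$ charts $f$ and $h^{-1}$ keeps it $C^k$ in the half-space sense. Hence $A$ is a $C^k$ atlas and $Q(M)$ is a $C^k$ manifold with boundary.

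The main obstacle I expect is bookkeeping rather than anything conceptual: one must make sure that the abstract homeomorphism $\overline{\beta\circ\alpha^{-1}}$ genuinely slots together with the half-space charts of $N_{(\alpha,U)}$ and $N_{(\beta,V)}$, i.e. that hypothesis~(2) is precisely the statement that turns $f\circ\overline{\beta\circ\alpha^{-1}}\circ h^{-1}$ into a $C^k$ map between subsets of $\R^{n-1}\times\{x\in\R:x\ge 0\}$, and that the domains and ranges in play are open in the appropriate $N_{(\alpha,U)}$ (which follows from $q$ being open and continuous together with Proposition~\ref{prop.coordtransonboundarydef}). A secondary point needing care is the interaction with the trivial extensions $P_Q$, whose $N$'s have empty manifold boundary and only the identity chart, so the transitions into them must be checked separately and seen to reduce to the ordinary $C^l$ transition functions of $M$; and the countability needed for second countability, inherited from Proposition~\ref{prop.topmanB}, should be flagged explicitly since it is not visible in the statement of the corollary.
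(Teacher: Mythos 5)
Your proposal is correct and follows essentially the same route as the paper: it takes the atlas $A$ from Proposition~\ref{prop.topmanB} and verifies $C^k$-compatibility of its transition functions by the same three-case split (both charts from trivial extensions, one trivial and one from $Q$, both from $Q$), invoking hypothesis~(2) for the last case and the $C^l$ structure of $M$ for the purely interior case. Your explicit flagging of the countability hypothesis inherited from Proposition~\ref{prop.topmanB}, and your handling of the mixed case by restricting to the interior where $\overline{\beta\circ\alpha^{-1}}$ reduces to $\beta\circ\alpha^{-1}$, are slightly more careful than the paper's wording but do not change the argument.
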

    \begin{proof}
      Let $A$ be as in the proof of Proposition \ref{prop.topmanB}.
      We need to check that the transition functions between elements
      of $A$ are $C^k$. This will demonstrate that
      $A$ generates a $C^k$ atlas for $Q(M)$.
      We have three cases.
      \textbf{Case 1.} Let $f\circ q|_{N_{(\alpha, U)}}^{-1}$
      and $g\circ q|_{N_{(\beta, U)}}^{-1}$ be in $A$. The
      transition map, by Proposition \ref{prop.coordtransonboundarydef},
      is $g\circ\overline{\beta\circ\alpha^{-1}}\circ f$. This is
      $C^k$ by assumption. \textbf{Case 2.}
      Let $f\circ q|_{N_{(\alpha, U)}}^{-1}$ and $\beta\circ\imath_Q$
      be in $A$. There are two possible transition functions, 
      $f\circ q^{-1}|_{N_{(\alpha, U)}}\circ \imath_Q\circ\beta^{-1}$
      and
      $\beta\circ\imath_Q^{-1}\circ q|_{N_{(\alpha, U)}}\circ f^{-1}$.
      It is easy to show that  
      $\beta\circ\imath_Q^{-1}\circ q|_{N_{(\alpha,
      U)}}=\overline{\beta\circ\alpha^{-1}}$
      and
      $q^{-1}|_{N_{(\alpha, U)}}\circ \imath_Q\circ\beta^{-1}
      =\overline{\alpha\circ\beta^{-1}}$. 
      Thus the transition maps amount to $f\circ\overline{\alpha\circ\beta^{-1}}$ 
      and $\overline{\beta\circ\alpha^{-1}}\circ f^{-1}$
      which are both $C^k$.
      \textbf{Case 3.} Let $\alpha\circ\imath_Q^{-1}$ and
      $\beta\circ\imath_Q^{-1}$ be in $A$. In this case
      the transition functions are
      $\beta\circ\alpha^{-1}$  and $\alpha\circ\beta^{-1}$
      which
      are $C^l$.
      Thus $Q(M)$ equipped with 
      the atlas generated by $A$ is a $C^k$ manifold with boundary.
    \end{proof}

    There is a converse to Proposition \ref{prop.topmanB} and
    its corollary.

    \begin{proposition}\label{prop.Q(M)fromphi}
      Let $\phi:M\to M_\phi$ be an envelopment.
      Then, in the notation of Proposition \ref{prop.inverseenvtoext},
      $
        Q=\{(\alpha\circ\phi,\ran(\alpha)):\alpha\in\Atlas{M_\phi},\,
          \dom(\alpha)\cap\partial\phi(M)\neq\varnothing\}
      $
      is a pairwise compatible set of extensions and
      there exists a homeomorphism, in the notation
      of Definition \ref{def.quotientmap},
      $f:Q(M)\to\overline{\phi(M)}$       
      so that $f\circ\imath_Q=\phi$,
      $f(q(N_{(\alpha\circ\phi)}))=\dom(\alpha)\cap\overline{\phi(M)}$
      and $\overline{\beta\circ\phi\circ\phi^{-1}\circ\alpha^{-1}}=
      \beta\circ\alpha^{-1}$.
    \end{proposition}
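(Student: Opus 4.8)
The plan is to build an explicit homeomorphism $f\colon Q(M)\to\overline{\phi(M)}$ and then read off the three displayed identities as by‑products of the construction. First I would record that $Q\subset\EX{M}$: for each $\alpha\in\Atlas{M_\phi}$ with $\dom(\alpha)\cap\partial\phi(M)\neq\varnothing$, Proposition \ref{prop.inverseenvtoext} gives $(\alpha\circ\phi,\ran(\alpha))\in\EX{M}$. Since $\phi$ is an envelopment it is a homeomorphism onto the \emph{open} set $\phi(M)$, so $\ran(\alpha\circ\phi)=\alpha(\dom(\alpha)\cap\phi(M))$ is open, $\partial_{\ran(\alpha)}\ran(\alpha\circ\phi)=\ran(\alpha)\cap\partial\ran(\alpha\circ\phi)$, and $\alpha^{-1}$ carries $\ran(\alpha\circ\phi)$ onto $\dom(\alpha)\cap\phi(M)$ and $\partial_{\ran(\alpha)}\ran(\alpha\circ\phi)$ onto $\dom(\alpha)\cap\partial\phi(M)$; hence $\alpha^{-1}\bigl(N_{(\alpha\circ\phi,\ran(\alpha))}\bigr)=\dom(\alpha)\cap\overline{\phi(M)}$, which is open in $\overline{\phi(M)}$.

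For pairwise compatibility, suppose $(\alpha\circ\phi,\ran(\alpha),\{p\})\contact(\beta\circ\phi,\ran(\beta),\{q\})$, witnessed by $(x_i)\subset\dom(\alpha\circ\phi)\cap\dom(\beta\circ\phi)$. Then $\phi(x_i)\to\alpha^{-1}(p)$ and $\phi(x_i)\to\beta^{-1}(q)$ in $M_\phi$, so $\alpha^{-1}(p)=\beta^{-1}(q)=:z\in\partial\phi(M)$ because $M_\phi$ is Hausdorff. Given any $(y_i)\subset\dom(\beta\circ\phi)$ with $(\beta\circ\phi(y_i))$ accumulating at $q$, pass to a subsequence with $\phi(y_{i_k})\to z$; since $z\in\dom(\alpha)$ and $\dom(\alpha)$ is open, eventually $y_{i_k}\in\dom(\alpha\circ\phi)$ and $\alpha\circ\phi(y_{i_k})\to\alpha(z)=p$, so $(\alpha\circ\phi,\ran(\alpha),\{p\})\covers(\beta\circ\phi,\ran(\beta),\{q\})$. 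By symmetry the two boundary points are equivalent, so the extensions in $Q$ are pairwise compatible, and in particular $Q(M)$ is Hausdorff by Proposition \ref{lem.compatimpliehausdor}.

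Next I would define $\tilde f\colon N_Q\to\overline{\phi(M)}$ componentwise: on the trivial component $N_{(\gamma,\ran(\gamma))}=\ran(\gamma)$ of $\gamma\in\Atlas{M}$ put $\tilde f=\phi\circ\gamma^{-1}$, and on $N_{(\alpha\circ\phi,\ran(\alpha))}$ put $\tilde f=\alpha^{-1}$. By the first paragraph each restriction $\tilde f|_{N_{(\delta,U)}}$ is a homeomorphism onto an open subset of $\overline{\phi(M)}$ (either $\phi(\dom(\gamma))$ or $\dom(\alpha)\cap\overline{\phi(M)}$), so $\tilde f$ is continuous. One then checks that $\tilde f$ is constant on the fibres of $q$: for two interior points the required equalities are rearrangements of the chart transitions ($\beta\circ\alpha^{-1}$, $\alpha\circ\phi\circ\gamma^{-1}$, etc.); for two equivalent boundary points $x,y$ one takes $(z_i)$ with $\beta\circ\phi(z_i)\to y$, uses the covering relation together with openness of $\dom(\alpha)$ to extract a subsequence along which $\phi(z_{i_k})\to\alpha^{-1}(x)$ while also $\phi(z_{i_k})\to\beta^{-1}(y)$, and concludes $\alpha^{-1}(x)=\beta^{-1}(y)$ by Hausdorffness. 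Thus $\tilde f$ descends to a continuous $f\colon Q(M)\to\overline{\phi(M)}$ with $f\circ q=\tilde f$; it is open since $f(O)=\bigcup_{(\delta,U)}\tilde f\bigl(q^{-1}(O)\cap N_{(\delta,U)}\bigr)$ is a union of sets open in $\overline{\phi(M)}$. Surjectivity is clear (points of $\phi(M)$ are hit through charts of $M$, points of $\partial\phi(M)$ through the boundary parts of charts of $M_\phi$ containing them), and injectivity follows from a case split on whether $w=f([x])=f([y])$ lies in $\phi(M)$ or in $\partial\phi(M)$: the first case reduces to the interior transition relation, the second to the covering‑plus‑Hausdorff argument already used. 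Hence $f$ is a homeomorphism.

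Finally the three identities drop out: $f(\imath_Q(x))=\tilde f(\gamma(x))=\phi(x)$ for $x\in\dom(\gamma)$, so $f\circ\imath_Q=\phi$; $f\bigl(q(N_{(\alpha\circ\phi,\ran(\alpha))})\bigr)=\tilde f\bigl(N_{(\alpha\circ\phi,\ran(\alpha))}\bigr)=\dom(\alpha)\cap\overline{\phi(M)}$ by the first paragraph; and since $q|_{N_{(\delta,U)}}=f^{-1}\circ\tilde f|_{N_{(\delta,U)}}$, Proposition \ref{prop.coordtransonboundarydef} gives $\overline{(\beta\circ\phi)\circ(\alpha\circ\phi)^{-1}}=\bigl(q|_{N_{(\beta\circ\phi,\ran(\beta))}}\bigr)^{-1}\circ q|_{N_{(\alpha\circ\phi,\ran(\alpha))}}=\bigl(\tilde f|_{N_{(\beta\circ\phi,\ran(\beta))}}\bigr)^{-1}\circ\tilde f|_{N_{(\alpha\circ\phi,\ran(\alpha))}}=\beta\circ\alpha^{-1}$, which is the stated identity once one observes $(\beta\circ\phi)\circ(\alpha\circ\phi)^{-1}=\beta\circ\phi\circ\phi^{-1}\circ\alpha^{-1}$. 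I do not expect any single deep obstacle; the real work is the bookkeeping needed to keep the trivial components $P_Q$ separate from the genuine extensions $Q$ while simultaneously verifying that $f$ is well defined, bijective and open on the boundary points, and the only substantive topological input throughout is that $M_\phi$ is Hausdorff and that $\phi(M)$ is open in it.
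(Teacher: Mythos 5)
Your proposal is correct and follows essentially the same route as the paper's own proof in Appendix \ref{sec.compcompexprop}: define the map componentwise on $N_Q$ (as $\phi\circ\gamma^{-1}$ on trivial components and $\alpha^{-1}$ on genuine extensions), show it descends through $q$, verify it is an open continuous bijection, and read off the three identities. If anything you are slightly more explicit than the paper at the compatibility step, where you spell out the subsequence argument showing that $\alpha^{-1}(p)=\beta^{-1}(q)$ together with openness of $\dom(\alpha)$ yields the covering relation, a step the paper compresses into a single ``which implies''.
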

    \begin{proof}
      The proof is long and depends on results in
      Appendix \ref{sec.topQ(M)} so the proof
      is given in Appendix \ref{sec.compcompexprop}.
    \end{proof}

    The equation $\overline{\beta\circ\phi\circ\phi^{-1}\circ\alpha^{-1}}=
    \beta\circ\alpha^{-1}$ again shows that the coordinate transforms
    on the boundary are genuine coordinate transformations.

    \begin{corollary}\label{cor.Q(M)isamanifoldifenvboundaryman}
      Let $\phi:M\to M_\phi$ be an envelopment and let $Q(M)$
      be as given in Definition \ref{def.quotientmap}.
      If $\overline{\phi(M)}$ is a manifold
      with boundary then $Q(M)$ is a manifold with boundary and
      $f$ is a diffeomorphism, where $f$ is given in Proposition 
    \ref{prop.Q(M)fromphi}.
    \end{corollary}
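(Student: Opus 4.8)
The plan is to verify that the set $Q$ furnished by Proposition \ref{prop.Q(M)fromphi} satisfies the hypotheses of Corollary \ref{cor:propenvindbound} with $k=l=\infty$, so that $Q(M)$ becomes a manifold with boundary, and then to read off from that construction that the homeomorphism $f$ is a diffeomorphism. By Proposition \ref{prop.Q(M)fromphi} we already know that $Q=\{(\alpha\circ\phi,\ran(\alpha)):\alpha\in\Atlas{M_\phi},\ \dom(\alpha)\cap\partial\phi(M)\neq\varnothing\}$ is a pairwise compatible set of extensions and that $f\colon Q(M)\to\overline{\phi(M)}$ is a homeomorphism with $f\circ\imath_Q=\phi$, $f(q(N_{(\alpha\circ\phi)}))=\dom(\alpha)\cap\overline{\phi(M)}$ and $\overline{\beta\circ\phi\circ\phi^{-1}\circ\alpha^{-1}}=\beta\circ\alpha^{-1}$. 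In keeping with the conventions of Section \ref{sec:notations_and_definitions}, the hypothesis that $\overline{\phi(M)}$ is a manifold with boundary is read as saying that $\overline{\phi(M)}$ is a regular submanifold with boundary of $M_\phi$, so that it carries the $C^\infty$ structure induced by $\Atlas{M_\phi}$ and, in particular, is second countable; since $f$ is a homeomorphism, $Q(M)$ is then second countable as well. This is the only place where the argument behind Corollary \ref{cor:propenvindbound} (through the proof of Proposition \ref{prop.topmanB}) uses a countability assumption on the index set, so that assumption may be dispensed with here even though $Q$ is in general uncountable.

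For hypothesis~(1) of Corollary \ref{cor:propenvindbound} I would first pin down $N_{(\alpha\circ\phi)}$ concretely: since $\ran(\alpha\circ\phi)=\alpha(\dom(\alpha)\cap\phi(M))$ and $\ran(\alpha)$ is open, a short computation with relative closures gives $N_{(\alpha\circ\phi)}=\ran(\alpha\circ\phi)\cup\partial_{\ran(\alpha)}\ran(\alpha\circ\phi)=\alpha\bigl(\dom(\alpha)\cap\overline{\phi(M)}\bigr)$. As $\dom(\alpha)\cap\overline{\phi(M)}$ is open in $\overline{\phi(M)}$ it is a $C^\infty$ manifold with boundary, and $\alpha$ restricts to a diffeomorphism of it onto $N_{(\alpha\circ\phi)}$, which is thereby a regular submanifold with boundary of the open set $\ran(\alpha)\subset\R^n$; hence each $N_{(\alpha\circ\phi)}$ is a $C^\infty$ manifold with boundary. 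Hypothesis~(2) is immediate: for two members $(\alpha\circ\phi,\ran(\alpha))$ and $(\beta\circ\phi,\ran(\beta))$ of $Q$ the coordinate transformation on the boundary $\overline{(\beta\circ\phi)\circ(\alpha\circ\phi)^{-1}}$ equals $\beta\circ\alpha^{-1}$ by the last identity of Proposition \ref{prop.Q(M)fromphi}, and this is the restriction of a genuine transition function of $M_\phi$, hence $C^\infty$. Corollary \ref{cor:propenvindbound} (equivalently, the atlas construction in the proof of Proposition \ref{prop.topmanB} together with the second countability noted above) therefore makes $Q(M)$ a $C^\infty$ manifold with boundary, with atlas generated by the charts $\alpha\circ\imath_Q^{-1}$ for $\alpha\in\Atlas{M}$ and $g\circ\bigl(q|_{N_{(\alpha\circ\phi)}}\bigr)^{-1}$ for $(\alpha\circ\phi,\ran(\alpha))\in Q$ and $g\in\Atlas{N_{(\alpha\circ\phi)}}$.

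It remains to see that $f$ is a diffeomorphism. Unwinding the definitions, on $\ran(\alpha\circ\phi)$ one has $f\circ q=f\circ\imath_Q\circ(\alpha\circ\phi)^{-1}=\phi\circ\phi^{-1}\circ\alpha^{-1}=\alpha^{-1}$, so by continuity $f\circ q|_{N_{(\alpha\circ\phi)}}$ is the restriction of $\alpha^{-1}$ to $N_{(\alpha\circ\phi)}$, that is, the inverse of the embedding $\alpha|_{\dom(\alpha)\cap\overline{\phi(M)}}$ used above, which is a diffeomorphism onto $\dom(\alpha)\cap\overline{\phi(M)}$. Since $q|_{N_{(\alpha\circ\phi)}}$ is a diffeomorphism onto the open set $q(N_{(\alpha\circ\phi)})$ for the structure just built, $f$ restricts to a diffeomorphism on each such set; and on $\imath_Q(M)$ the relation $f=\phi\circ\imath_Q^{-1}$ is a diffeomorphism because $\phi$ and $\imath_Q$ are smooth embeddings and $\phi(M)$ is open in $\overline{\phi(M)}$. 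As these open sets cover $Q(M)$ and $f$ is already a homeomorphism, $f$ is a diffeomorphism. I expect the main obstacle to be organisational rather than deep: carrying the identification $N_{(\alpha\circ\phi)}=\alpha(\dom(\alpha)\cap\overline{\phi(M)})$ cleanly, confirming that the boundary transition maps supplied by Proposition \ref{prop.Q(M)fromphi} really are the smooth ones, and noting that second countability of $\overline{\phi(M)}$ replaces the countability of $Q$ nominally required by Corollary \ref{cor:propenvindbound}.
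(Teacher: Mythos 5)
Your proof is correct, but it runs in the opposite direction to the paper's. The paper's own argument is essentially a one-liner: it \emph{defines} the atlas on $Q(M)$ to be the pullback $\{\beta\circ f:\beta\in\Atlas{\overline{\phi(M)}}\}$ of the atlas of $\overline{\phi(M)}$ through the homeomorphism $f$ of Proposition \ref{prop.Q(M)fromphi}, so that $Q(M)$ is a manifold with boundary and $f$ is a diffeomorphism by construction, and it only remarks in passing that this atlas is ``essentially that given by the extensions in $Q$''. You instead construct the smooth structure intrinsically, by checking the hypotheses of Corollary \ref{cor:propenvindbound} for the set $Q$ of Proposition \ref{prop.Q(M)fromphi} --- via the identification $N_{(\alpha\circ\phi)}=\alpha\bigl(\dom(\alpha)\cap\overline{\phi(M)}\bigr)$ and the identity $\overline{\beta\circ\phi\circ\phi^{-1}\circ\alpha^{-1}}=\beta\circ\alpha^{-1}$ --- and then verify that $f$ is a diffeomorphism for \emph{that} structure using $f\circ q|_{N_{(\alpha\circ\phi)}}=\alpha^{-1}$. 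Your route is longer but buys more: it substantiates the paper's parenthetical claim that the transported atlas coincides with the chart-induced one (the structure the rest of the paper actually works with), and it correctly diagnoses and repairs the one obstacle to quoting Proposition \ref{prop.topmanB} directly, namely that this $Q$ is in general uncountable --- second countability must instead be imported through $f$ from $\overline{\phi(M)}$, exactly as you do. Your explicit reading of the hypothesis that $\overline{\phi(M)}$ carries the $C^\infty$ structure induced from $\Atlas{M_\phi}$ is the same assumption the paper's proof makes implicitly when it declares the transition maps on $\overline{\phi(M)}$ to be diffeomorphisms, so no extra hypothesis has been smuggled in.
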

    \begin{proof}
      The proof depends on the Proposition \ref{prop.Q(M)fromphi}
      and so has been moved to Appendix \ref{sec.compcompexprop}.
    \end{proof}

    Proposition \ref{prop.Q(M)fromphi} has an interesting consequence.
    Given any envelopment $\phi:M\to M_\phi$ the topological
    boundary $\partial\phi(M)$
    is a set of representatives for the 
    set of boundary points $\partial\imath_Q(M)$, where
    $Q$ is as given in Proposition \ref{prop.Q(M)fromphi}.
    Thus, via Definition \ref{def.representationsofabbpm}, 
    any `boundary' of the manifold, that can be presented as 
    the topological boundary of an envelopment,
    induces a set of boundary points $\sigma_Q\subset\ABBPM{M}$ 
    and therefore the boundary given by the embedding
    is 
    contained in $\ABBPM{M}$.

    A notable example of a boundary defined in this way
    is Penrose's conformal boundary. Thus it is in precisely the
    manner described above that $\ABBPM{M}$ generalises the conformal
    boundary. In Section \ref{sec:Examples} I present the concrete 
    calculation of
    $\sigma_Q$ for the conformal boundary of Minkowski spacetime.

    Lastly, note that Proposition \ref{prop.Q(M)fromphi}
    did not require the assumption that
    $N_{(\alpha\circ\phi,\ran(\alpha))}$ is a manifold
    with boundary nor did it require that $Q$ was countable, in contrast
    to Proposition \ref{prop.topmanB}. 
    Thus the results of this
    section are not sharp and it might be possible to
    improve on Proposition \ref{prop.topmanB}.
    Never-the-less Proposition \ref{prop.topmanB} will, in most cases, be
    sufficient, so the generalisation of this proposition has not
    been sort.
  
  \subsection{Classification of boundary points}\label{sec.classification_chart}

    Given an envelopment, $\phi:M\to M_\phi$, of a manifold and
    a metric or connection on $M$, Scott and Szekeres, 
    \cite[Section 4]{ScottSzekeres1994}, have presented
    a classification of the points in $\partial\phi(M)$ into a hierarchy
    of physically motivated classes and studied the invariance of
    these classes under the construction of the Abstract Boundary.
    
    Proposition \ref{lem:embedding given by an extension} 
    shows that each extension
    $(\alpha,U)$ induces an envelopment, 
    $\Psi(\alpha,U):M\to M_{(\alpha,U)}$, so that 
    $\BP{\alpha}\cap U=\partial\Psi(\alpha,U)(M)$. Hence  Scott
    and Szekeres' classification of the elements of
    $\partial\Psi(\alpha,U)(M)$ can be used to classify the elements of 
    $\BP{\alpha}\cap U$, as long as there is a metric or connection on 
    $M$. 
    
    Once the classification has been defined, its invariance under the
    equivalence relation used to define $\ABBPM{M}$ is then discussed. 
    Due to the similarity of the Abstract Boundary and $\ABBPM{M}$
    the invariance of the classification under coordinate transformations
    on the boundary is exactly analogous to the invariance of the
    classification
    proved in Scott and Szekeres' paper \cite{ScottSzekeres1994}.
    The result is a physical classification of $\ABBPM{M}$.
    With this in mind, the rest of this section follows the
    classification of Abstract Boundary points, 
    \cite[Section 4]{ScottSzekeres1994}, closely. The discussion below is 
    restricted to the case of a metric on $M$, 
    all results, however, can be extended in an analogous manner
    to the case of a connection on $M$.

    The classification depends on a choice of curves.
    \begin{definition}[{\cite[Definition 4]{ScottSzekeres1994}}]\label{def.BPP}
      A set $\mathcal{C}$ of parametrized curves in $M$ is said to have 
      the bounded parameter property (b.p.p.)\
      if the following conditions are satisfied;
      \begin{enumerate}
        \item for all $p\in{M}$ there exits $\gamma\in\mathcal{C}$ 
          so that $p\in\gamma$,\label{def.BPP.1}
        \item if $\gamma\in\mathcal{C}$ and $\delta$ is a sub-curve of 
          $\gamma$ then $\delta\in\mathcal{C}$,\label{def.BPP.2}
        \item for all $\gamma,\delta\in\mathcal{C}$, if $\delta$ is 
          obtained from 
        $\gamma$ by a change of parameter
          then either both curves are bounded or 
          both are unbounded.\label{4.3}\label{def.BPP.3}
      \end{enumerate}
    \end{definition}
    
      Given $M$ a manifold and $(\alpha,U)\in\EX{M}$ the 
      classification begins by 
      dividing the elements of $\BP{\alpha}\cap U$ into four classes.

      \begin{definition}\label{def:chart_regular}
        Let $(M,g)$ be a $C^l$ pseudo-Riemannian manifold.
        The boundary point $(\alpha,U,\{p\})$ is $C^k$ regular if
        there exists an open neighbourhood $V\subset U$ of $p$
        and a $C^k$ pseudo-Riemannian metric $\hat{g}$ on $V$
        so that $(V,\hat{g})$ is an extension of $(\ran(\alpha)\cap V,
        (\alpha^{-1})^*\left(g|_{\alpha^{-1}\left(\ran(\alpha)\cap V\right)}\right))$ as a
        manifold with pseudo-Riemannian metric. If $(\alpha,U,\{p\})$ is not
        $C^k$ regular then it is $C^k$ non-regular. 
        The reference to $\alpha$ and $U$ will be dropped if clear from context.
        When the degree of
        regularity is unimportant it too will be dropped. 
      \end{definition}
      
      \begin{definition}\label{def:chartapproach}
				The boundary point $(\alpha, U, \{p\})$ is approachable if
        there exists $\gamma\in\mathcal{C}$ so that 
        $p$ is in the closure
        of $\alpha\circ\gamma|_{\gamma^{-1}
        \left(\gamma\cap\dom(\alpha)\right)}$. The point $p$
        is unapproachable if it is not approachable.
        The reference to $\alpha$ and $U$ will be dropped if clear from context.
      \end{definition}

      The four classes, mentioned above, are approachable regular 
      points, approachable non-regular 
      points, unapproachable regular points and unapproachable non-regular 
      points.
      With these classes defined, and with the relation $\covers$, the
      classification of boundary points can follow the classification
      in \cite{ScottSzekeres1994} without change. 
      
      By virtue of the choice of curves (which should be taken 
      as large as possible, e.g.\ the set of all curves 
      with generalized affine parameter \cite{citeulike:4251705}) 
      unapproachable points are considered to
      be physically unimportant, even though unapproachable regular points
      may exist. 
      
      \begin{example}
        Continuing from Example \ref{ex:sincurve}. 
        Let $g$ be the standard
        metric on $\R^2$ then $h=g|_{TM\times TM}$
        is a metric on $M$ so that $(\R^2,g)$ is an extension of $(M,h)$.
        Let $\mathcal{C}$ be the set of all affinely parametrised geodesics
        with respect to $h$. The set $\mathcal{C}$ satisfies the bounded
        parameter property.
        With respect to this choice of curves
        $(\alpha,U,\{(0,1)\})$ is a non-approachable regular boundary point.
      \end{example}
      
      Approachable regular points are points through which the manifold can be 
      extended. Approachable non-regular points are elements of the boundary
      that are physically relevant and through which the manifold cannot
      be extended. They    
      can be further subdivided into 
      points at infinity and singular points. 
    
      \begin{definition}\label{def:chart_inf_bp}
        Let $\mathcal{C}$ be a set of curves
        with the b.p.p.
        A boundary point $(\alpha, U, \{p\})$ is a $C^k$ point at infinity if; 
        \begin{enumerate}
          \item $p$ is not a $C^k$ regular boundary point,
          \item $p$ is approachable,
          \item for all $\gamma\in \mathcal{C}$, if $\gamma$ approaches $p$
            then $\gamma$ is unbounded.
        \end{enumerate}
        The reference to $\alpha$ and $U$ will be dropped if clear from context.
        When the degree of
        regularity is unimportant it too will be dropped. Thus
        a $C^k$ point at infinity will sometimes be referred to as a point at infinity.
      \end{definition}
  
      \begin{definition}\label{def:chart_sing_bp}
        Let $\mathcal{C}$ be a set of curves
        with the b.p.p.
        A boundary point $(\alpha,U,\{p\})$ is a $C^k$ singularity if; 
        \begin{enumerate}
          \item $p$ is not a $C^k$ regular boundary point,
          \item there exists $\gamma\in\mathcal{C}$ so that $\gamma$ approaches
            $p$ with bounded parameter.
        \end{enumerate}
        The reference to $\alpha$ and $U$ will be dropped if clear from context.
        When the degree of
        regularity is unimportant it too will be dropped. Thus
        a $C^k$ singularity will sometimes be referred to as a singularity.
      \end{definition}
      
      Approachable non-regular points may be non-regular due to properties of
      the chart rather than any inherent property of the manifold.     
    
      \begin{definition}
        A point at infinity
        $(\alpha,U,\{p\})$
        is removable if there exists 
        a boundary set $(\beta, X, Y)$
		    so that $(\beta,X,Y)\covers (\alpha,U,\{p\})$ and for all $y\in Y$, 
		    $(\beta, X, \{y\})$ is a
        regular boundary point.
              
        A singular point $(\alpha,U,\{p\})$
        is removable if there exists 
        a boundary set $(\beta, X, Y)$
		    so that $(\beta,X,Y)\covers (\alpha,U,\{p\})$ and for all $y\in Y$, 
		    $(\beta, X,\{y\})$ is a
        regular boundary point, a point at infinity or unapproachable.
        
        A non-regular boundary point that is either a removable point at infinity
        or a removable singularity will be called a removable boundary point.
        A boundary point that is not removable will be called essential.
      \end{definition}

      The definition of a removable boundary point
      $(\alpha, U, \{p\})$, covered by the boundary
      set $(\beta, X, Y)$ (so that for all $y\in Y$, $(\beta, X, \{y\})$ is
      a regular boundary point if $p$ is a point at infinity or
      $(\beta, X, \{y\})$ is a regular point, unapproachable point
      or point at infinity if $p$ is a singular point)
      is such that the chart $\beta$ resolves the non-regular behaviour
      of $p$ by representing $[p]\in\ABBPM{M}$ by the set of boundary points
      $\{[y]: y\in Y\}$. Thus $\beta$ removes the poor behaviour of $\alpha$.
      
      Approachable non-regular points may cover a regular 
      boundary point of some other extension. Such approachable non-regular 
      points will necessarily exhibit directional behaviour.
      That is, given a curve whose endpoint is such a point at infinity 
      or singularity, certain physical quantities, e.g.\ the Kretschmann scalar,
      could have either regular or singular behaviour.
      
      The Curzon singularity is presented in 
      Section \ref{sec:curzon}, see
      \cite{Scott1986CurzonI,Scott1986CurzonII}. 
      This is a well known directional
      singularity. The Curzon singularity covers regular points,
      points at infinity and unapproachable points once 
      the singularity is expressed in a suitable
      chart. 
   
      \begin{figure*}
		    \centering
		    \def\svgwidth{\columnwidth}
		    \input{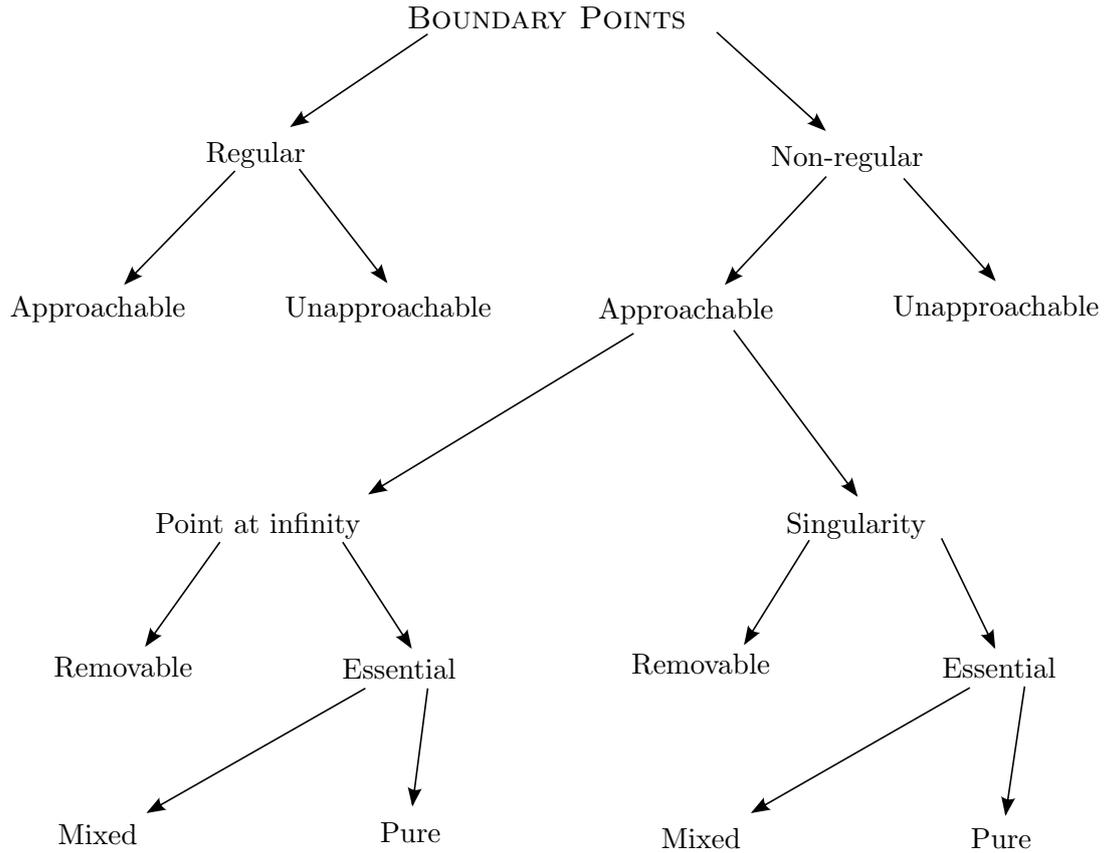}
		    \caption{A graphic representation of the classification given
		    in Section \ref{sec.classification_chart}}\label{fig:2}
		  \end{figure*}
     
      \begin{definition}\label{def.mixed}
        An essential boundary point $(\alpha,U,\{p\})$ is
        mixed, or directional, if there exists 
        a regular boundary point $(\beta, X, \{q\})$
		    so that $(\alpha,U,\{p\})\covers (\beta,X,\{q\})$.
        An essential boundary point that is not mixed is called
        pure.
      \end{definition}  
    
      Figure \ref{fig:2} gives a summary of the classification.
    
      Extensions that include mixed or removable points are undesirable
      since this behaviour is considered to be 
      an artefact of the particular chart used. It is a long standing 
      problem in General Relativity to show that all directional
      singularities can be expressed, as in the case of the Curzon solution
      \cite{Scott1986CurzonI,Scott1986CurzonII}, as a collection of genuine
      singularities, regular points, points at infinity 
      and unapproachable points. 
      In our context this
      problem can be phrased as: does every manifold have a complete
      boundary that does not contain removable or mixed boundary points?

      As mentioned above
      invariance under $\equiv$ is necessary before
      any quantity defined on the boundary $\BP{\alpha}\cap U$ of an
      extension $(\alpha,U)\in\EX{M}$ can be considered to be coordinate 
      invariant.
      It turns out that not all the classes of the classification of 
      $\BP{\alpha}\cap U$ are invariant under $\equiv$.
      
      \begin{proposition}\label{prop.App&UnAppABBPM} 
        The following classification of $\ABBPM{M}$ is well defined.
			  Let
			  $\cur{C}$ be a set of curves with the b.p.p.
			  A boundary point
			  $[(\alpha, U, \{p\})]\in\ABBPM{M}$ is;
			  \begin{enumerate}
			    \item approachable if and only if $(\alpha, U, \{p\})$ is 
            approachable,
			    \item a $C^k$ indeterminate boundary point
			      if and only if 
            $(\alpha,U,\{p\})$ is a $C^k$ 
            regular point, a $C^k$ removable point at infinity
			      or a $C^k$ removable singularity,
			    \item is a $C^k$ point at infinity if and only if
			      $(\alpha,U, \{p\})$ is an essential $C^k$point at infinity,
			    \item is a $C^k$ singularity if and only if
			      $(\alpha, U, \{p\})$ is an essential $C^k$ singularity.
			  \end{enumerate}
			  Furthermore, 
			  an essential boundary point $[(\alpha, U, \{p\})]$
			  is mixed, or directional, if
			  and only if $(\alpha, U,\{p\})$ is mixed.
			  The boundary point $[(\alpha, U, \{p\})]$ is pure
			  if and only if $(\alpha, U, \{p\})$ is pure.
		  \end{proposition}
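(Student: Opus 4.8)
The overall plan is to reduce well-definedness of the classification to the claim that each adjective in the statement is constant on the $\equiv$-class of a boundary point; the classification of $\ABBPM{M}$ then descends from that of $\BPM{M}$ via Proposition \ref{prop:ABdescibeofB}, which identifies $\ABBPM{M}$ with $\BPM{M}$ modulo $\equiv$. Because Proposition \ref{lem:embedding given by an extension} realises every extension $(\alpha,U)$ as the topological boundary of a genuine envelopment and Proposition \ref{prop.inverseenvtoext} provides the converse, the relations $\covers$ and $\equiv$ on $\BPM{M}$, together with Definitions \ref{def:chart_regular}--\ref{def.mixed}, are the chart-level translation of the covering relation, equivalence and classification of the Abstract Boundary, so the argument can follow \cite[Section~4]{ScottSzekeres1994} through this dictionary. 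In the present language I must check the $\covers$- or $\equiv$-stability of: (i) approachability; (ii) ``some curve of $\mathcal{C}$ approaches the point with bounded parameter''; (iii) ``every curve of $\mathcal{C}$ approaching the point is unbounded''; (iv) being covered by a boundary set all of whose points are $C^k$ regular, together with the variant allowing points at infinity and unapproachable points; and, the hard one deferred to the end, (v) $C^k$ regularity of a boundary point itself. Granting these, the named classes have descriptions built only from (i)--(v) and $\covers$, so their $\equiv$-invariance follows.

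First I would transfer the curve-theoretic data along $\covers$. If $(\alpha,U,\{p\})\covers(\beta,X,\{q\})$ and $\gamma\in\mathcal{C}$ has $q$ in the closure of $\beta\circ\gamma|_{\gamma^{-1}(\gamma\cap\dom(\beta))}$, choose $(\gamma(s_k))\subset\dom(\beta)$ with $\beta(\gamma(s_k))\to q$; Definition \ref{def:chartcoversing} yields a subsequence in $\dom(\alpha)$ whose $\alpha$-image accumulates at $p$, so $\gamma$ also approaches $p$, and a bounded $\gamma$ stays bounded. Applying this across an $\equiv$-pair in both directions, and for (iii) contrapositively, while invoking closure of $\mathcal{C}$ under sub-curves and change of parameter (Definition \ref{def.BPP}, items \ref{def.BPP.2}--\ref{def.BPP.3}), shows (i), (ii) and (iii) are constant on $\equiv$-classes, so in particular the point-at-infinity versus singularity dichotomy (Definitions \ref{def:chart_inf_bp}, \ref{def:chart_sing_bp}) does not collapse under $\equiv$. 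I expect this paragraph, rather than a citation, to be where the curve arguments of Scott and Szekeres must genuinely be reproduced, condition \ref{def.BPP.3} being what makes ``bounded parameter'' robust under the implicit reparametrisations carried by $\covers$.

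Next, using that $\covers$ is a pre-order (Definition \ref{def.equiv}): if a fixed boundary set $(\gamma,Z,Y)$ covers $(\beta,X,\{q\})$ and $(\alpha,U,\{p\})\equiv(\beta,X,\{q\})$, then $(\gamma,Z,Y)$ covers $(\alpha,U,\{p\})$ as well. With $Y$ a set of $C^k$ regular points this makes removability of a point at infinity $\equiv$-stable, and with $Y$ a set of $C^k$ regular, at-infinity or unapproachable points it does the same for removability of a singularity. Granting the $\equiv$-invariance of $C^k$ regularity proved below, a $C^k$ regular representative forces every equivalent point to be $C^k$ regular, so an equivalent non-regular point covered by an all-regular set is an approachable non-regular point with every approaching curve unbounded, i.e.\ a removable point at infinity; hence the indeterminate class is $\equiv$-invariant. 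The essential-point-at-infinity class then coincides with ``non-regular, approachable, every approaching $\mathcal{C}$-curve unbounded, not covered by an all-$C^k$-regular boundary set'' and the essential-singularity class with ``non-regular, approached with bounded parameter by some $\mathcal{C}$-curve, not covered by a boundary set of $C^k$ regular, at-infinity or unapproachable points'', each clause of which has been shown $\equiv$-invariant. For mixed and pure: if $(\alpha,U,\{p\})$ is essential with $(\alpha,U,\{p\})\covers(\beta,X,\{q\})$ for a $C^k$ regular $(\beta,X,\{q\})$ and $c\equiv(\alpha,U,\{p\})$, then $c\covers(\alpha,U,\{p\})\covers(\beta,X,\{q\})$, so $c$ covers the same regular point and is essential, hence mixed (Definition \ref{def.mixed}); purity follows by complementation. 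This establishes all clauses modulo the regularity input.

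It remains to show $C^k$ regularity is $\equiv$-invariant, i.e.\ a $C^k$ regular boundary point is $\equiv$-related only to $C^k$ regular points. The plan is to transport the local $C^k$ metric extension of Definition \ref{def:chart_regular} through the coordinate transformation on the boundary $\overline{\beta\circ\alpha^{-1}}$ of Proposition \ref{prop.coordtransonboundarydef}, which maps a neighbourhood of $p$ in $N_{(\alpha,U)}$ homeomorphically onto one of $q$ in $N_{(\beta,X)}$ and agrees with $\beta\circ\alpha^{-1}$ on the manifold. The real content — and the main obstacle — is that although $\overline{\beta\circ\alpha^{-1}}$ is only a homeomorphism in general, the presence of a $C^k$ extension metric near $p$ forces the extended chart at $p$ to be $C^k$-compatible with the atlas of $M$, so that $\overline{\beta\circ\alpha^{-1}}$ is $C^k$ near $p$ and the extension metric can be pushed forward to witness $C^k$ regularity of $(\beta,X,\{q\})$. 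This is exactly the step where the $C^k$ structure, rather than point-set topology, is used, and it mirrors the corresponding argument in \cite[Section~4]{ScottSzekeres1994}; together with the faithful reproduction of the curve arguments above it is where the real work of the proof lies.
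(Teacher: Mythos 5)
Your overall strategy---identify $\ABBPM{M}$ with $\frac{\BPM{M}}{\equiv}$ and check that each adjective is constant on $\equiv$-classes, using the dictionary with the Abstract Boundary supplied by Propositions \ref{lem:embedding given by an extension} and \ref{prop.inverseenvtoext}---is the same as the paper's, which sets up exactly that dictionary and then delegates the case-checking to Table 1 and Section 5 of \cite{ScottSzekeres1994}. Your curve-transfer arguments for approachability and for the bounded/unbounded dichotomy, and the transitivity argument for removability, are correct. The problem is the step you yourself single out as carrying ``the real work'': the claim that $C^k$ regularity is $\equiv$-invariant is false, and the proposed mechanism (that a local $C^k$ metric extension near $p$ forces $\overline{\beta\circ\alpha^{-1}}$ to be $C^k$ near $p$) cannot work, because $\overline{\beta\circ\alpha^{-1}}$ is only a homeomorphism and nothing in Definitions \ref{def:chartcoversing} and \ref{def.equiv} constrains its differentiability at the boundary. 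Concretely, take $M=\{(x,y):x>0\}$ with the flat metric, $\alpha=\mathrm{id}$ with extension $\R^2$, and $\beta(x,y)=(x^3,y)$ with extension $\R^2$: the boundary points $(\alpha,\R^2,\{(0,0)\})$ and $(\beta,\R^2,\{(0,0)\})$ are equivalent, the first is $C^\infty$ regular, but the metric in the $\beta$ chart is $\frac{1}{9}u^{-4/3}du^2+dv^2$, which admits no continuous extension across $u=0$; the second point is therefore non-regular (a removable singularity). This is precisely why clause 2 of the proposition names the invariant class ``indeterminate'' and lumps regular points together with removable ones: regularity alone does not descend to $\ABBPM{M}$.

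The repair does not need invariance of regularity, only of the three-way union in clause 2, and it follows from machinery you already have. If $(\alpha,U,\{p\})$ is regular and equivalent to $(\beta,X,\{q\})$, then $(\beta,X,\{q\})$ is covered by the all-regular singleton boundary set $(\alpha,U,\{p\})$, so it is either regular or (being approachable or not, singular or at infinity, as determined by your invariant curve data) removable; if instead $(\alpha,U,\{p\})$ is removable, the witnessing covering set also covers $(\beta,X,\{q\})$ by transitivity of $\covers$. Your intermediate sentence deducing that such an equivalent non-regular point must have ``every approaching curve unbounded, i.e.\ a removable point at infinity'' is also wrong---the example above produces a removable singularity---though the conclusion that it lies in the indeterminate class survives. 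In effect the case analysis you need here is the content of Table 1 of \cite{ScottSzekeres1994}, which is what the paper's proof invokes; once the false regularity-invariance claim is excised and replaced by this covering argument, the rest of your proof goes through.
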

      \begin{proof}
        The proof of this is very long as it, essentially, involves
        checking each case individually. Moreover the full proof
        is virtually identical to that give in \cite{ScottSzekeres1994},
        hence I give only a sketch of the result.

        Definitions \ref{def:chart_regular} and \ref{def:chartapproach}
        correspond, via Propositions \ref{lem:embedding given by an extension}
        and \ref{prop.inverseenvtoext}, to the Abstract Boundary
        definitions of a regular point and an approachable
        point, (\cite[Definitions 28 and 12]{ScottSzekeres1994}),
        respectively.
        Moreover, again via 
        Propositions \ref{lem:embedding given by an extension}
        and \ref{prop.inverseenvtoext}, the boundary point
        $(\alpha, U, \{p\})$ covers the boundary
        point $(\beta, V, \{q\})$ if and only if the point
        $p\in\partial\Psi(\alpha, U)(M)$ covers
        $q\in\partial\Psi(\beta, V)(M)$ according to the
        Abstract Boundary definition of covers, 
        \cite[Definition 14]{ScottSzekeres1994} (see also
        \cite[Theorem 19]{ScottSzekeres1994}).
        It is now clear that the definitions of the various classes of the 
        classification given here correspond
        to the definitions given in \cite{ScottSzekeres1994}.
        Thus the results of Table 1 of \cite{ScottSzekeres1994}
        are valid for the classifications defined here.
        Table 1 gives, for each pair of classes, the
        possibility for the first class to the covered by the second
        class.
        Hence, via Table 1, it is possible to determine the invariance
        of each class under the covering relation.
        From above it is clear that the same results as for the 
        Abstract Boundary apply to the classification
        given here, thus the result is
        proved, see \cite[Section 5]{ScottSzekeres1994}.
      \end{proof}
      
      Figure \ref{fig:3} gives a summary of the coordinate 
      independent classification.
      
		  \begin{figure*}
			  \centering
			  \def\svgwidth{\columnwidth}
			  \input{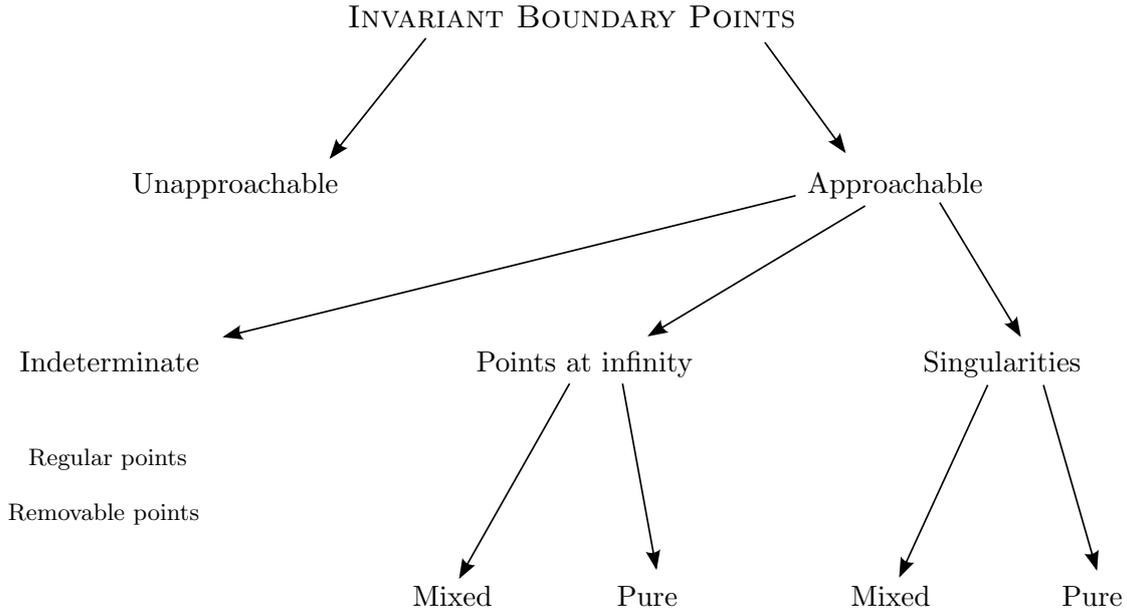}
			  \caption{A graphic representation of the coordinate 
			  invariant classification given
		    in Section \ref{sec.classification_chart}}\label{fig:3}
		  \end{figure*}
 
\section{Examples}\label{sec:Examples}

  This section provides three examples of global analysis
  which I consider typical of the chart based approach. 
  In each case only material
  from the sources is used to construct and classify a complete representation
  of the boundary. It is hoped, therefore, that these examples will 
  demonstrate to the reader that
  Section \ref{sec:MAIN_chart_boundary} presents a useful
  formalization of `the chart based approach to studying global structure'.

  The first example is the construction of Penrose's conformal boundary
  for Minkowski spacetime, as it is given in
  \cite[Section 5.1]{Hawking1975Large} (note that this presentation of 
  the boundary differs from Penrose's own publications
  \cite{citeulike:8978534,citeulike:9600515,citeulike:10075981}, in particular
  it
  includes spacelike and
  timelike infinity).
  The charts 
  used to produce the conformal
  boundary also induce a boundary, 
  as in Definition \ref{def.representationsofabbpm},  
  so that every point of this boundary is a pure point at infinity.
  This example is presented in great detail in order to provide examples
  of compatible charts, the construction of a boundary corresponding
  to collections of compatible charts and how coordinate singularities can 
  be handled.
  
  The second example is the construction of the maximal extension of
  the Curzon solution given by Scott and Szekeres, 
  \cite{Scott1986CurzonI,Scott1986CurzonII}.
  The focus of this example is on
  how different charts can induce boundaries with different properties.
  Thus issues arising from coordinate singularities and
  restrictions of coordinates are ignored.
  An example of a boundary containing a directional singularity 
  is produced along with a demonstration of 
  how a clever choice of coordinates can `resolve' the directional
  singularity. The boundary induced by this `better' chart 
  contains regular points,
  pure points at infinity, pure singularities and unapproachable points.
  
  The third example is the analysis of the global structure 
  of smooth Gowdy symmetric
  generalized Taub-NUT solutions with spatial $\Sph^3$-topology
  given by Beyer and Hennig, \cite{Beyer2011Smooth}.
  The spacetimes are presented as the solutions to a system of partial 
  differential equations (PDEs).
  A closed form for the metric of these space times is not given.  
  Using 
  Beyer and Hennig's analysis of these equations it is possible to show, 
  however, that the boundary of these spacetimes can contain regular points
  and essential singularities. From the analysis of \cite{Beyer2011Smooth}
  it is not possible to conclude that the singularities are pure.

  The 
  $c$-boundary \cite{Flores2010Final} and the causal boundary 
  \cite{citeulike:8664207}, require knowledge of the causal structure
  for their construction.
  The chart based global analysis of the second and third examples, however,
  do not determine the causal structure across the whole manifold.
  In the case of the second example, the
  majority of the analysis is performed using collections of spacelike
  curves. In particular the geodesic equations are not solved in a neighbourhood
  of the singularity and hence additional work would be required before using
  either the $c$-boundary or the causal boundary to 
  analyse the global structure of 
  the singularity.
  In the case of the third example 
  the properties of the metric are studied on particular curves along which
  the PDEs reduce to ODEs. Not 
  only is the causal structure on the manifold not discussed, but exact
  solutions of the PDEs are not given except on these curves.
  Thus additional work would be required before using 
  either the $c$-boundary or the causal boundary to analyse the 
  global structure of solutions to the PDEs. These comments also apply
  to the $g$ and $b$-boundaries.
  In this sense $\ABBPM{M}$ is an  easier construction to work
  with.

  \subsection{Minkowski spacetime}\label{sec:Minkowski}
  
  This example follows the coordinate construction of 
  Penrose's conformal boundary for
  Minkowski spacetime as given in  \cite[Section 5.1]{Hawking1975Large} (see
  \cite{citeulike:8978534,citeulike:9600515,citeulike:10075981} for 
  Penrose's own presentations, which differ slightly)
  and shows that the charts used to construct the conformal
  boundary also admit extensions that give a representation of Penrose's
  conformal boundary within $\ABBPM{M}$, see Section
  \ref{sec.compatiblecharts}.
  The standard analysis of affinely parametrised
  geodesics in Minkowski spacetime implies that every element
  of this induced boundary is a pure point at infinity. 
  
    Minkowski spacetime is $M=\R^4$ equipped with the metric
    $
      ds^2=-dt^2+dx^2+dy^2+dz^2
    $
    for $(t,x,y,z)\in\R^4$. This chart gives no information about 
    the boundary of the spacetime as its domain is complete, see Definition
    \ref{def:BP}.
    In order to introduce admissible boundary points it is necessary to 
    find charts whose images in $\R^4$ have non-empty topological boundaries.
    The construction of $\ABBPM{M}$ provides no information about which
    charts to use. Indeed $\ABBPM{M}$ is 
    chart agnostic. Two obvious examples are the one-point compactification
    of $M$
    and an embedding of $M$ into the $4$-dimensional unit ball. 
    Both of these charts would induce complete boundaries, but neither
    is suitable for the material of this section. Here, I am showing how
    the constructions of the previous section relate to Penrose's
    conformal compactification of Minkowski space. 
    Hence the charts used must behave appropriately with respect to
    this conformal structure.
    Therefore 
    the conformal structure
    of Minkowski space is an additional piece of 
    physical information that is being
    used to provide guidance on the construction of the 
    charts of this section. In general, when working
    with an arbitrary manifold, the selection of appropriate charts will
    require geometric or physical insight, just as is currently required
    for the chart based approach to the study of global structure.
    
    Define a new chart by
    \begin{align*}
      \alpha_1(t,x,y,z)&=\left(t,\,\sqrt{x^2+y^2+z^2},\,
				\cos^{-1}\left(\frac{z}{\sqrt{x^2+y^2+z^2}}\right),\,\arctan(x,y)\right)\\
      &=(t,r_1,\theta_1,\phi_1),
    \end{align*}
    where $\arctan:\R\times\R\to(-\pi,\pi)$ is as given in Example \ref{ex:bp_ex}. 
		Using $t,x,y,z$ coordinates, let
    $O=\left\{(t,0,0,0):t\in\R\right\}$,
    $X=\left\{(t,x,0,z):t,z\in\R,\, x<0\right\}$
    and $Z = \left\{(t,0,0,z):t\in\R,z\in\R, z\neq 0\right\}$.
    In order to avoid coordinate singularities, and because of the
    requirement that both $\dom(\alpha_1)$ and $\ran(\alpha_1)$
    be open, it is necessary to take
    $
      \dom(\alpha_1)=\R^4\setminus
        \left(O\cup X\cup Z\right)
    $
    and 
    $
      \ran(\alpha_1)=\R\times\R^+\times(0,\pi)\times(-\pi,\pi).
    $  
    To cover the whole manifold, take a second chart 
    using spherical coordinates centred on 
    $(0,1,1,0)$ with the $z$ and $y$ axis swapped and the $x$-axis taken 
    to $-x$-axis. Note that these are the spherical coordinates based on a 
    translated and rotated Cartesian frame with respect to 
    the frame given by the $t,x,y,z$ coordinates. 
    This ensures that the new chart appropriately resolves the 
    points in $M$ that are not in $\dom(\alpha_1)$ as $r_1\to\infty$.
    The equation for the new chart is
    \begin{align*}
      \alpha_2(t,x,y,z)&=
        \biggl(t,\,\sqrt{(x-1)^2+(y-1)^2+z^2},\\
					&\hspace{1.4cm} \cos^{-1}
          \left(\frac{y-1}{\sqrt{(x-1)^2+(y-1)^2+z^2}}\right),\,
          \arctan(1-x,z)\biggr)\\
        &=(t,r_2,\theta_2,\phi_2).
    \end{align*}
    Similarly to $\alpha_1$,
    \begin{multline*}
      \dom(\alpha_2)=\R^4\setminus
        \biggl(\left\{(t,1,1,0):t\in\R\right\}\\
        \cup\left\{(t,x,y,0):t,y\in\R,\, x>1\right\}
        \cup\left\{(t,1,y,0):t\in\R,\,y\in\R,\,y\neq 0\right\}\biggr)
    \end{multline*}
    and 
    $
      \ran(\alpha_2)=\R\times\R^+\times(0,\pi)\times(-\pi,\pi).
    $
		
    The topological boundary, $\partial\ran(\alpha_i)$, of $\ran(\alpha_i)$, $i\in\{1,2\}$, in $\R^4$ is the 
    union of the disjoint sets
    \begin{gather}
      \R\times\{0\}\times[0,\pi]\times[-\pi,\pi]\label{set:min1}\\
      \R\times\R^+\times[0,\pi]\times\{-\pi,\pi\}\label{set:min2}\\
			\R\times\R^+\times\{0,\pi\}\times[-\pi,\pi]\label{set:min3}.
			\end{gather}    
    A sequence $(x_k)\subset \dom(\alpha_i)$, $i\in\{1,2\}$,
    so that $(\alpha_i(x_k))$ converges
    to;
    \begin{itemize}
      \item a point in set \eqref{set:min1} 
				will converge to a point in 
        $O$, if $i=1$, or, where $(t,x,y,z)$ coordinates are used,
        $\{(t,1,1,0):t\in\R\}$, if $i=2$,
      \item a point in set \eqref{set:min2} will converge to a point in 
        $X$, if $i=1$, or, where $(t,x,y,z)$ coordinates are used,
        $\left\{(t,x,y,0): t, y\in\R,\, x>1\right\}$, if $i=2$,
      \item a point in set \eqref{set:min3}
				will converge to a point in 
        $Z$, if $i=1$, or, where $(t,x,y,z)$ coordinates are used,
        $\left\{(t,1,y,0):t\in\R,y\in\R,\,y\neq0\right\}$, if $i=2$.
    \end{itemize}
    Thus no point in the union
    of the sets \eqref{set:min1},
    \eqref{set:min2} and \eqref{set:min3} is an admissible boundary point.
    
    Some form of 
    compactification of the coordinates will introduce 
    admissible boundary points. As mentioned before the
    guiding principle in this section is the conformal structure
    of $M$. Since $t - r$ and $t + r$ are coordinates with
    null tangent vectors they have a form of invariance with respect
    to conformal transformations of $M$. 
    Hence rather than compactifying $t$ and $r$, a change of coordinates
    to 
    $t + r$ and $t - r$ will be performed and then compactification 
    of these new coordinates will be done.
    The result is below. Before I present this,
    it is worth pointing out one
    particular consequence of this choice.
    Because null geodesics in Minkowski space are necessarily traveling forward
    or backward in time, ``infinity'' for each fixed $t$ slice of $M$ will
    be identified on the boundary. Below, this identified point is
    denoted
    $\imath_i^0$. There are charts that respect the conformal structure
    of $M$ and which do not perform this identification, 
    see for example
    \cite{citeulike:8973162,citeulike:11043341}, but they shall not be
    needed here. As I have mentioned before, 
    there is great freedom in the choice of what chart to use to build a local
    representation of the boundary of $M$. Because of this additional
    information, e.g.\
    geometric or 
    physical, something beyond the purely topological is needed to inform any choice.

    Let $i\in\{1,2\}$ and
    define 
    \[
      \beta_i(t,r_i,\theta_i,\phi_i)=(\tan^{-1}(t+r_i),\tan^{-1}(t-r_i),
        \theta_i,\phi_i)=(p_i,q_i,\theta_i,\phi_i).
    \]
    Hence $\dom(\beta_i)=\dom(\alpha_i)$ and
    \[
      \ran(\beta_i)=\biggl\{(p_i,q_i):p_i,q_i\in
        \left(-\frac{1}{2}\pi,\frac{1}{2}\pi\right),\,q_i\leq p_i\biggr\}
        \times (0,\pi)\times(-\pi,\pi).
    \]
    The boundary, $\partial\ran(\beta_i)$, 
    of $\ran(\beta_i)$, $i\in\{1,2\}$, in $\R^4$
    is given by the union of the sets
    \begin{gather}
      \biggl\{(p_i,q_i):p_i,q_i\in
        \left(-\frac{1}{2}\pi,\frac{1}{2}\pi\right),\,q_i\leq p_i\biggr\}\times
        \{0,\pi\}\times[-\pi,\pi]\label{set:min4},\\
      \biggl\{(p_i,q_i):p_i,q_i\in
        \left(-\frac{1}{2}\pi,\frac{1}{2}\pi\right),\,q_i\leq p_i\biggr\}\times
        [0,\pi]\times\{-\pi,\pi\}\label{set:min5},
    \end{gather}
    and the sets
    \begin{gather*}
      \mathcal{J}^+_i=\left\{\frac{1}{2}\pi\right\}\times
        \left(\frac{-1}{2}\pi,\frac{1}{2}\pi\right)
        \times(0,\pi)\times(0,2\pi),\\
      \mathcal{J}^-_i=\left(\frac{-1}{2}\pi,\frac{1}{2}\pi\right)\times
        \left\{\frac{-1}{2}\pi\right\}
        \times(0,\pi)\times(0,2\pi),\\
      \imath^+_i=\left\{\frac{1}{2}\pi\right\}\times
        \left\{\frac{1}{2}\pi\right\}\times(0,\pi)\times(0,2\pi),\\
      \imath^0_i=\left\{\frac{1}{2}\pi\right\}\times
        \left\{\frac{-1}{2}\pi\right\}\times(0,\pi)\times(0,2\pi),\\
      \imath^-_i=\left\{\frac{-1}{2}\pi\right\}\times
        \left\{\frac{-1}{2}\pi\right\}\times(0,\pi)\times(0,2\pi).
    \end{gather*}
    As before the elements of the sets \eqref{set:min4} and \eqref{set:min5} 
    are not admissible 
    boundary points. The other sets correspond to the usual
    expression of the conformal boundary of Minkowski space with respect
    to the chart $\beta_i$.
    The sets have been labelled so that this correspondence is obvious, e.g.\
    future null infinity is represented by 
    $\mathcal{J}^+_1$ and $\mathcal{J}^+_2$ for $\beta_1$ and $\beta_2$
    respectively, see 
    \cite[page 123]{Hawking1975Large}.

    Since the ranges of $\beta_1$ and $\beta_2$ are the same, they have
    have the same admissible boundary points, e.g.\ 
    $\mathcal{J}^+_1=\mathcal{J}^+_2$. 
    It is important, however, to maintain
    a distinction between these admissible boundary points since the domains
    of $\beta_1$ and $\beta_2$ are different and therefore their 
    admissible boundary points
    represent different parts of the boundary of $M$.
    Letting $W=\R\times\R\times(0,\pi)\times(-\pi,\pi)$
    the extension
    $(\beta_i,W)\in\EX{M}$, for $i\in\{1,2\}$, is such that 
    $\partial_W\ran(\beta_i)$ is the union
    of the sets $\mathcal{J}^+_i,\mathcal{J}^-_i,\imath^+_i,\imath^0_i$ and 
    $\imath^-_i$. Thus
    each extension, $(\beta_i,W)$, induces a boundary, $\sigma_i$,
    where $[(\beta_i,W,\{p\})]\in\sigma_i$
    if and only if $p$ is an element of one of 
    $\mathcal{J}^+_i,\mathcal{J}^-_i,\imath^+_i,\imath^0_i$ and $\imath^-_i$. 
    
    The two sets $\sigma_1$ and $\sigma_2$ are not equal.    
    For example, since $X$ is not in the domain of $\beta_1$ any sequence 
    lying in 
    $X$ will not converge to a boundary point in $\sigma_1$.
    However, $X$
    is in the domain of $\beta_2$ thus there are boundary
    points representing the `limit points of $X$ at infinity' in $\sigma_2$.
    Since 
    \begin{multline*}
      \beta_2(X)=\biggl\{
        \biggl(\tan^{-1}\left(t+\sqrt{(x-1)^2+1+z^2}\right),\,\\
				\tan^{-1}\left(t-\sqrt{(x-1)^2+1+z^2}\right),\,\\
				\cos^{-1}\left(\frac{-1}{\sqrt{(x-1)^2+1+z^2}}\right),\,
				\arctan(1-x,z)\biggr):t,z\in\R,\,x<0\biggr\},
    \end{multline*}
    for $z=0$ and any fixed $t$, the limit
    point given by $x\to-\infty$ is represented by the boundary point
    $\left(\beta_2,W,\left(\frac{\pi}{2},
			-\frac{\pi}{2},\frac{\pi}{2},0\right)\right)$.
		Thus
		\[
		  \left[\left(\beta_2,W,\left(\frac{\pi}{2},
			-\frac{\pi}{2},\frac{\pi}{2},0\right)\right)\right]\in\sigma_2.
		\]
    Suppose that exists $p=[(\beta_1,W,(p',q',\theta,\phi)]\in\sigma_1$ so that 
		\[
		  p= \left[\left(\beta_2,W,\left(\frac{\pi}{2},
			-\frac{\pi}{2},\frac{\pi}{2},0\right)\right)\right].
		\]
    Then, by definition, it must be the case that
    every sequence $(x_j)\subset\dom(\beta_2)$ so that
    $\beta_2(x_j)\to \left(\frac{\pi}{2},\frac{-\pi}{2},\frac{\pi}{2},0\right)$
    is such that there exists a subsequence $(y_j)\subset (x_j)$
    so that $(y_j)\subset\dom(\beta_1)$ and $\beta_1(y_j)\to 
    (p',q',\theta,\phi)$.
    From above there exists a sequence $(x_i)\subset X$
    so that 
    $\beta_2(x_i)\to\left(\frac{\pi}{2},\frac{-\pi}{2},\frac{\pi}{2},0\right)$.
    By construction, however, $X\not\subset\dom(\beta_1)$ and
    therefore there is no suitable
    subsequence of $(x_i)$.
    Hence, for all $p\in\sigma_1$,
		$
		  p\neq \left[\left(\beta_2,W,\left(\frac{\pi}{2},
			-\frac{\pi}{2},\frac{\pi}{2},0\right)\right)\right].
		$
    Therefore $\sigma_1\neq\sigma_2$.

		Since $M=\dom(\beta_1)\cup\dom(\beta_2)$ and both
		$\ran(\beta_1)$ and $\ran(\beta_2)$ are compact, any sequence in 
		$M$ that does not converge to a point in $\sigma_i$ will converge to 
		a point in $\sigma_{j}$ where $i,j\in\{1,2\}$ with $i\neq j$. 
		Thus the union $\sigma=\sigma_1\cup\sigma_2$
		represents all of `the boundary' of $M$, as can be expected due to the 
		relationship with the conformal boundary. Hence
    $\sigma$ is complete.

    The extensions $(\beta_1,W)$ and $(\beta_2,W)$
		are compatible.
		Let $p=[(\beta_1,W,\{x\})]$ and $q=[(\beta_2,W,\{y\})]$.
		Assume that $p\contact q$, hence there exists
		$(z_i)\subset M$ so that $\beta_1(z_i)\to x$
		and $\beta_2(z_i)\to y$.
		Let $(x_i)\subset M$ be such that $\beta_1(x_i)\to x$.
		For each $i\in\N$, let $\hat{x}_{2i}=x_i$ and 
		$\hat{x}_{2i+1}=z_i$. By construction $\beta_1(\hat{x}_i)\to x$.
		The transition function, $\beta_2\circ\beta_1^{-1}$,
		amounts to the translation 
		and rotation used to transform between the two Cartesian 
		frames used to generate the spherical polar coordinates $\alpha_1$
		and $\alpha_2$. Therefore the transition function $\beta_2\circ\beta_1^{-1}$
		preserves distances. 
		This implies that
		$(\beta_2(\hat{x}_i))$ is Cauchy with respect to the Euclidean distance
		on $\ran(\beta_2)$. 
		Since $(\beta_2(\hat{x}_i))$
		is Cauchy, $(z_i)\subset(\hat{x}_i)$ and as
		$\beta_2(z_i)\to y$ it is the case that $\beta_2(\hat{x}_i)\to y$.
		As $(x_i)\subset(\hat{x}_i)$ the sequence $(\beta_2(x_i))$ converges to $y$.
		Therefore $q\covers p$. Repeating the argument with the roles of
		$x$ and $y$ interchanged demonstrates that $q=p$ as required.
		Hence $(\beta_1,W)$ and $(\beta_2,W)$
		are compatible.
		
		I compute a specific example. Consider the 
    line given by $L=\{(0,x,-x,1):x>0\}$. Thus
    \begin{multline*}
      \beta_1(L)=\Biggl\{\Biggl(\tan^{-1}\left(\sqrt{2x^2+1}\right),\,
				\tan^{-1}\left(-\sqrt{2x^2+1}\right),\, \\
				\cos^{-1}\left(\frac{1}{\sqrt{2x^2+1}}\right),\,
				-\frac{\pi}{4}\Biggr):x>0\Biggr\}
    \end{multline*}
		and 
    \begin{multline*}
      \beta_2(L)=\Biggl\{\Biggl(\tan^{-1}\left(\sqrt{2x^2+3}\right),\,
				\tan^{-1}\left(-\sqrt{2x^2+3}\right),\, \\
				\cos^{-1}\left(\frac{-x-1}{\sqrt{2x^2+3}}\right),\,
				\arctan\left(1-x,1\right)\Biggr):x>0\Biggr\}.
    \end{multline*}
    Hence, the boundary points representing the
    `end-point' of the line at infinity are
    $
      \left[\left(\beta_1,W,\left(\frac{\pi}{2},-\frac{\pi}{2},
        \frac{\pi}{2},-\frac{\pi}{4}\right)\right)\right]\in\sigma_1
    $
    and
    $
      \left[\left(\beta_2,W,\left(\frac{\pi}{2},-\frac{\pi}{2},
        \frac{3\pi}{4},\pi\right)\right)\right]\in\sigma_2.
    $
    Choosing a sequence $(x_i)\subset L$ with no limit point,
    $\beta_1(x_i)\to(\frac{\pi}{2},-\frac{\pi}{2},\frac{\pi}{2},-\frac{\pi}{4})$
    and $\beta_2(x_i)\to (\frac{\pi}{2},-\frac{\pi}{2},
      \frac{3\pi}{4},{\pi})$.
    The paragraph above implies that
    \[
      \left[\left(\beta_1,W,\left(\frac{\pi}{2},-\frac{\pi}{2},\frac{\pi}{2},-\frac{\pi}{4}\right)\right)\right]=
				\left[\left(\beta_2,W,\left(\frac{\pi}{2},-\frac{\pi}{2},\frac{3\pi}{4},\pi\right)\right)\right].
    \]

		Lastly, note that the points in $\sigma_2\setminus\sigma_1$ are
		exactly those points which represent the endpoints of $X$ and $Z$.
		These are the points lying on the `sphere at infinity' which cannot
		be expressed in the chart $\beta_1$. 

		I give an example of how these points can be calculated.
		Let $\gamma:(0,\infty)\to M$ be given, in $t,x,y,z$ coordinates,
		by $\gamma(\tau)=(\sqrt{(\tau+1)^2+1+\tau^2},-\tau,0,\tau)$. 
    This curve lies in $X$ for $\tau<0$.  
		From the calculation of $\beta_2(X)$,
		\begin{multline*}
		  \beta_2\circ\gamma(\tau)=
      \Biggl(\tan^{-1}\left(2\sqrt{(\tau+1)^2+1+\tau^2}\right),\,
				0,\,\\
				\cos^{-1}\left(\frac{-1}{\sqrt{(\tau+1)^2+1+\tau^2}}\right),\,
				\arctan(1+\tau,\tau)\Biggr)
		\end{multline*}
		so that as $\tau\to\infty$ the curve $\beta_2\circ\gamma(\tau)$ 
    limits to the boundary point
		\[
		  \left[\left(\beta_2,X,\left(\frac{\pi}{2},0,
        \frac{\pi}{2},\frac{\pi}{4}\right)\right)\right].
      \]
		The point $(\frac{\pi}{2},0,\frac{\pi}{2},\frac{\pi}{4})$
		is an element of $\mathcal{J}_2^+$. Since $\gamma\subset X$ and
		$X\not\subset\dom(\beta_1)$ this boundary point is an element of 
		future null infinity that cannot be represented by the chart $\beta_1$.
		
		Taking the set of all affinely parametrised geodesics as the
		b.p.p.\ satisfying set of curves, the 
		usual analysis of the Penrose conformal boundary implies that
		every point in $\sigma$ is a pure point at infinity.
  
  \subsection{Curzon solution}\label{sec:curzon}

    In this section I present the construction and classification of
    two representations of the boundary of
    the Curzon solution. The Curzon solution is one of 
    the better known and analysed examples of
    a directional singularity, \cite[Section 1]{Scott1986CurzonI}.
    I do not go into as 
    much detail as Section \ref{sec:Minkowski} and assume that, with 
    the results of Section \ref{sec:Minkowski} as a guide, the reader
    can fill in the details.
    
    The Curzon solution is the Weyl metric, \cite{Weyl1917Zur}, 
    for a monopole potential,
    \cite{Curzon1925Cylindrical}.
    The manifold is
    $
      M=\R\times\left(\R^2\setminus\{(0,0)\}\right)\times \Sph^1.
    $
    Letting $\phi^{-1}:(0,2\pi)\to \Sph^1$ be defined by 
    $\phi^{-1}(\theta)=(\cos\theta,\sin\theta)$ there is a chart,
    $\alpha$, given by
		$
		  \alpha(t,z,r,s)=(t,z,r,\phi(s)).
		$
		An additional chart is required to cover all of $M$ to account for the
		coordinate singularity introduced by $\phi$.
    In the chart $\alpha$ the metric takes the form,
    $
      ds^2=-\exp(2\lambda)dt^2+\exp(2\nu-2\lambda)\left(dr^2+dz^2\right)
        +r^2\exp(-2\lambda)d\phi^2,
    $
    where $\lambda=\frac{-m}{R}$, $\nu=\frac{-m^2r^2}{2R^4}$,
    $R=\sqrt{r^2+z^2}$ and $m\neq 0$, 
    \cite[Equations 1 and 2]{Scott1986CurzonI}. Note that in
    \cite[Equation 1]{Scott1986CurzonII} the metric has been 
    rescaled with respect to $m$
    so that
    $\lambda=\frac{-1}{R}, \nu=\frac{-r^2}{2m^2R^4}$ and 
    $R=\frac{\sqrt{r^2+z^2}}{m}$.
    This rescaling does not effect the coordinate transformations used 
    below nor the analysis of
    the global structure of the spacetime (compare the 
    metrics and transformations
    used in \cite{Scott1986CurzonI} and \cite{Scott1986CurzonII}).

    For the sake of concentrating on the structure of the
    boundary at $R=0$ I assume that $ds^2$ approaches the Minkowski
    metric for $R$ very large. As a consequence charts similar to those
    used in the previous section, Section \ref{sec:Minkowski}, 
    can be constructed to 
    induce a boundary $\sigma\subset\ABBPM{M}$ which is comprised
    entirely of pure points at infinity. This boundary
    will be such that if $(x_i)_i\subset M$ is a sequence
    of points so that if $R(x_i)\to\infty$ then there exists
    at least one element of $\sigma$ that is approached by $(x_i)$,
    whereas, if
    $R(x_i)\to 0$ then there is
    no element of $\sigma$ that is approached by $(x_i)$.
    Hence
    $\sigma$ is not complete.
    
    To find a complete boundary,
    boundary points that
    represent the `points at $R=0$' need to be constructed. 
    From the construction of $\alpha$,
    $
      \BP{\alpha}=\{(t,0,0,\phi):t\in\R,\phi\in(0,2\pi)\},
    $
    which are exactly the points for which $R=0$. Thus by taking
    $U=\R^3\times(0,2\pi)$ the set of boundary points,
    $   
      \sigma_\alpha=\{[(\alpha,U,\{(t,0,0,\phi)\})]:t\in\R,\phi\in(0,2\pi)\}
    $
    results. This gives a representation of the surface $R=0$.
    By construction, if $p\in\sigma_\alpha$ and $q\in\sigma$ then
    $p\parallel q$. Thus the extensions representing the set of boundary 
    points for 
    $r\to\pm\infty$, mentioned above,
    and $(\alpha,U)$ are compatible. From the definition of $M$, any sequence in
    $M$ that is without limit points must approach at least one 
    boundary point in $\sigma\cup\sigma_\alpha$ and therefore
    $\sigma\cup\sigma_\alpha$ is a complete boundary.
    
		Gautreau and Anderson, \cite{Gautreau1967Directional}, have shown that
		the Kretschmann  scalar, $K$, for the Curzon solution limits to $0$
		for curves on the $z$-axis
		as $R\to 0$ . But that it limits to $\infty$ for
		other straight line directions of approach to $R=0$. 
		Cooperstock and  Junevicus, 
		\cite{Cooperstock1974Singularities}, took this analysis further by 
		considering curves defined by
		$
		  \frac{z}{m}= C\left(\frac{r}{m}\right)^n
		$
		for $C,n>0$.
		They found that $\lim_{R\to 0}K\to 0$  for $0<n<\frac{2}{3}$ and that
		$\lim_{R\to0}K\to \infty$ for $n>\frac{2}{3}$.
		Scott and Szekeres, \cite[Section 3]{Scott1986CurzonI}, corrected a mistake
		that Cooperstock and Junevicus made for the critical, $n=\frac{2}{3}$, case 
		and showed that, when $n=\frac{2}{3}$, it was possible 
    to construct curves so that
		$\lim_{R\to 0}K$ takes any value in $\R^+\cup\{0,\infty\}$.
		This implies that the elements of $\sigma_\alpha$ are not regular points.		
		
		In order to take the classification of the points in 
    $\sigma\cup\sigma_\alpha$ further
    it is necessary to chose a b.p.p.\ satisfying 
    set of curves. Scott and Szekeres'
    analysis of the global structure of the Curzon solution, given in 
    \cite{Scott1986CurzonII},
    is based on    
    on timelike, null and spacelike geodesics. Hence I take these curves, with 
    affine parameters, as the set of b.p.p.\ satisfying curves.
    As seen above, however, the literature prior to 
    \cite{Scott1986CurzonII} used a variety of
    non-geodesic curves. While these curves cannot contribute 
    to the `approachability' of boundary points (because
    of the choice of the b.p.p.\ satisfying set of curves)
    the analysis of scalar quantities along them will still 
    give information about the regularity of the limit points of their
    images under the chart
    $\alpha$.
    
    Scott and Szekeres, \cite{Scott1986CurzonI}, show that there exist 
    spacelike geodesics that approach the $R=0$ surface
    with bounded affine parameter,
    \cite[Section 3.b]{Scott1986CurzonI}. Thus the elements 
    of $\sigma_\alpha$ are singular points.
		It would be nice to conclude from the analysis of the Kretschmann scalar
		that the elements of $\sigma_\alpha$
		are directional singularities. While the results of the
		analysis are suggestive of this, they do not
		prove it. In order to show that
		$[(\alpha,U,\{p\})]\in\sigma_\alpha$ is a directional 
		singularity it is
		necessary to show two things.
		First, that $(\alpha,U,\{p\})$ is not covered by non-singular points.
		Second, a regular boundary point $(\beta,X,\{q\})$
		so that $(\alpha,U,\{p\})\covers(\beta,X,\{q\})$ needs to be constructed. 
    The analysis of the
		Kretschmann scalar proves the first but not the second condition.
		Fortunately, Scott and
		Szekeres have constructed a chart (presented as $\beta$ below) which 
		produces a regular point covered by every element of $\sigma_\alpha$. 
		Thus the elements of $\sigma_\alpha$ are indeed directional 
		singularities. 

    For the purposes of studying the directional behaviour
    Scott and Szekeres restrict to the half plane $z\geq 0$ (as the metric
    has a symmetry under $z\mapsto -z$) and
    take two coordinate transformations.
    The first is, \cite[Equations 17 and 18]{Scott1986CurzonI},
    \begin{align*}
      x(r,z) &= \tan^{-1}\left(\frac{r}{m}\exp\left(\frac{m}{z}\right)\right)
        +\tan^{-1}\left(\frac{r}{m}
          \exp\left(-\left(\frac{\sqrt{2} m}{r}\right)^\frac{2}{3}\right)\right)
          ,\\
      y(r,z) &= \tan^{-1}\left(3\frac{z}{m}-\left(\frac{z}{m}\right)^2
				\frac{R \exp\left(\nu-\lambda\right)}
        {\left(R^{12}+R^4+\frac{1}{3}\left(\frac{r}{m}\right)^2\right)^{\frac{1}{4}}}\right),
    \end{align*}
    where $(x,y)\in
    (-\frac{1}{2}\pi,\frac{1}{2}\pi)\times(-\frac{1}{2}\pi,0]\cup(-\pi,\pi)\times(0,\frac{\pi}{2})$.
    These coordinates were constructed using a combination of 
    numerical calculation of geodesics, physical intuition and trail and 
    error \cite[page 566]{Scott1986CurzonI}.
    The second transformation is, \cite[Equations 7 and 8]{Scott1986CurzonII}, 
    \begin{align*}
      Y(t,r,z) &= \frac{\pi}{2}+\tan^{-1}
				\left(ay^3\left(x^2-\frac{\pi^2}{4}\right)^2\frac{m}{z}+
				3\frac{z}{m}-\vphantom{\frac{R\exp(v-\lambda)}
        {\left(R^{12}\left(1+\left(\frac{t}{m}\right)^4\right)+R^4+\frac{1}{3}
          \left(\frac{r}{m}\right)^2\right)^{\frac{1}{4}}}}\right.\\
        &\hspace{3cm}\left.\left(\frac{z}{m}\right)^2\frac{R\exp(v-\lambda)}
        {\left(R^{12}\left(1+\left(\frac{t}{m}\right)^4\right)+R^4+\frac{1}{3}
          \left(\frac{r}{m}\right)^2\right)^{\frac{1}{4}}}\right),\\
      T(t,r,z) &= \tan^{-1}\left(\exp(-\hat K)\left(\frac{t}{m}+H\right)+
        \frac{t}{m}\left(y+\frac{\pi}{2}\right)^3\right)+\\
          &\hspace{3cm}\tan^{-1}\left(\exp(-\hat K)\left(\frac{t}{m}-H\right)+
        \frac{t}{m}\left(y+\frac{\pi}{2}\right)^3\right),
    \end{align*}
    where
    \[
      H(r,z)=\frac{1}{2}\left(\frac{r}{z}\right)^2\exp\left(\frac{2m}{z}\right)
      +
      \int_1^{\frac{z}{m}}\exp\left(\frac{2}{u}\right)du
    \]
    and
    \[
      \hat K(r,z)=
        \left(y+\frac{\pi}{2}\right)R+\left(\frac{\tan x}{\tan y}\right)^2.
    \]
    Note that in \cite{Scott1986CurzonI,Scott1986CurzonII} the function 
    $\hat K$ is denoted as $K$. I use $\hat K$ here to distinguish 
    it from the Kretschmann scalar. Please refer to
    \cite[Section 3]{Scott1986CurzonII} for a discussion of the motivation
    behind these coordinates.
    
    The coordinate ranges are
    $T\in (-\pi,\pi)$ and 
    $(x,Y)\in \left(-\frac{\pi}{2},\frac{\pi}{2}\right)
    \times\left(0,\frac{\pi}{2}\right]\cup
    (-\pi,\pi)\times
    \left(\frac{\pi}{2},\pi\right)$.
    This defines a chart
    \begin{equation*}
      \beta(t,z,r,s)=(T(t,r,z),\,x(r,z),\,Y(t,r,z),\,\phi(s))
    \end{equation*}
    so that
    $\dom(\beta)=\dom(\alpha)$ and 
    \begin{equation}
    	\ran(\beta)=(-\pi,\pi)\times
    	\Biggl(
    	\left(-\frac{\pi}{2},\frac{\pi}{2}\right)\times\left(0,\frac{\pi}{2}\right]
    	\cup
    	(-\pi,\pi)\times\left(\frac{\pi}{2},\pi\right)
    	\Biggr)\times(0,2\pi).\label{eq:beta}
    \end{equation}
		Figure \ref{fig:4} shows the range of $\beta$ with the $\phi$ coordinate 
		suppressed.
		
		The construction of $\beta$ is based on Scott and Szekeres' detailed
		analysis of the behaviour the Kretschmann scalar along 
		certain sets of curves, \cite{Scott1986CurzonI,Scott1986CurzonII}.
		Hence their construction of this chart is an example of the technique
		used to study the global structure of a manifold mentioned in the 
		introduction to this paper.
		
		\begin{figure*}
		  \centering
		  \def\svgwidth{0.6\columnwidth}
		  \input{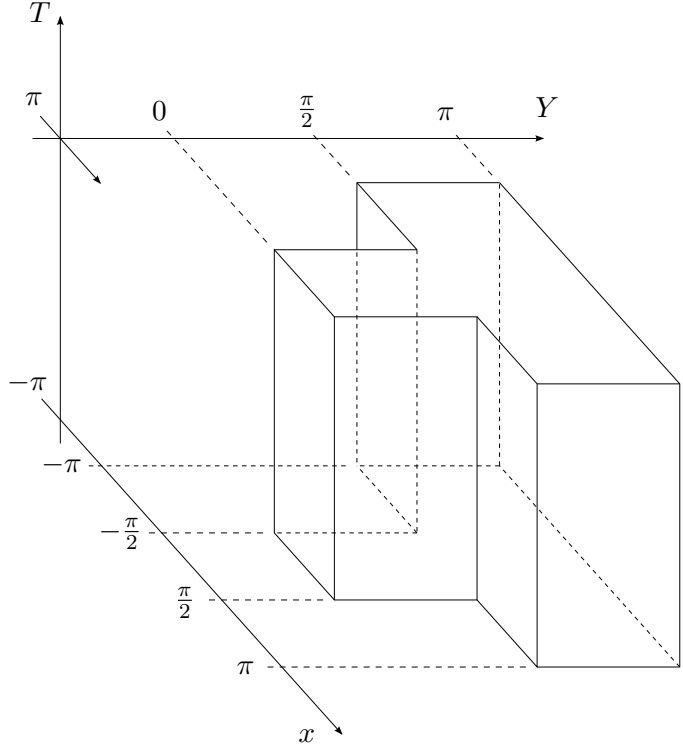}
		  \caption{The range of $\beta$ (Equation \eqref{eq:beta}), with 
        the $\phi$ coordinate suppressed, 
		    is the interior of the T shaped set.}\label{fig:4}
	  \end{figure*}

    I will now discuss the set of boundary points induced
    by an extension of $\beta$. Because it reduces the
    complexity of this discussion I will consider $\beta$ as a chart on
    the $z>0$ submanifold rather than as a chart on $M$ itself.
    The topological boundary of the range of $\beta$ can be divided
    into the eleven surfaces
    \begin{align*}
      S_1&=\Bigl\{\left(T,x,\pi,\phi\right):T,x\in\left(-\pi,\pi\right),\,\phi\in\left(0,2\pi\right)\Bigr\},\\
      S_2^+&=\left\{\left(T,\pi,Y,\phi\right):T\in\left(-\pi,\pi\right),\,Y\in\left(\frac{\pi}{2},\pi\right),\,\phi\in\left(0,2\pi\right)\right\},\\
      S_2^-&=\left\{\left(T,-\pi,Y,\phi\right):T\in\left(-\pi,\pi\right),\,Y\in\left(\frac{\pi}{2},\pi\right),\,\phi\in\left(0,2\pi\right)\right\},\\
      S_3^+&=\left\{\left(T,x,\frac{\pi}{2},\phi\right):T\in\left(-\pi,\pi\right),\,x\in\left(\frac{\pi}{2},\pi\right),\,\phi\in\left(0,2\pi\right)\right\},\\
      S_3^-&=\left\{\left(T,x,\frac{\pi}{2},\phi\right):T\in\left(-\pi,\pi\right),\,x\in\left(-\pi,-\frac{\pi}{2}\right),\,\phi\in\left(0,2\pi\right)\right\},\\
      S_4^+&=\left\{\left(T,\frac{\pi}{2},Y,\phi\right):T\in\left(-\pi,\pi\right),\,Y\in\left(0,\frac{\pi}{2}\right),\phi\in\left(0,2\pi\right)\right\},\\
      S_4^-&=\left\{\left(T,-\frac{\pi}{2},Y,\phi\right):T\in\left(-\pi,\pi\right),\,Y\in\left(0,\frac{\pi}{2}\right),\phi\in\left(0,2\pi\right)\right\},\\
      S_5^-&=\left\{\left(T,x,0,\phi\right):T\in\left(-\pi,0\right),x\in\left(-\frac{\pi}{2},\frac{\pi}{2}\right),\,\phi\in\left(0,2\pi\right)\right\},\\
      S_5^+&=\left\{\left(T,x,0,\phi\right):T\in\left(0,\pi\right),x\in\left(-\frac{\pi}{2},\frac{\pi}{2}\right),\,\phi\in\left(0,2\pi\right)\right\},\\
      S_6&=\biggl\{\left(\pi,x,Y,\phi\right):\phi\in\left(0,2\pi\right),\\
         &\hspace{1cm}\left(x,Y\right)\in\left(-\pi,\pi\right)\times\left(\frac{\pi}{2},\pi\right)\cup\left(-\frac{\pi}{2},\frac{\pi}{2}\right)\times\left(0,\frac{\pi}{2}\right]\biggr\},\\
		  S_{7}&=\biggl\{\left(-\pi,x,Y,\phi\right):\phi\in\left(0,2\pi\right),\\
        &\hspace{1cm}\left(x,Y\right)\in\left(-\pi,\pi\right)\times\left(\frac{\pi}{2},\pi\right)\cup\left(-\frac{\pi}{2},\frac{\pi}{2}\right)\times\left(0,\frac{\pi}{2}\right]\biggr\},
    \end{align*}
    and the lines 
    or points given by the intersections of the closures of the surfaces.
    See Figure \ref{fig:5} for a graphical representation.
    All of these surfaces and the points or lines
    given by intersections of the closures of the surface, consist
    of admissible boundary points. Let $X=\R^4$ then $(\beta, X)$
    is an extension. 
    Considering $\beta$ as a chart in $M$, rather than on the submanifold
    given by $z>0$, the elements of $S^\pm_3$ are no longer admissible. See 
    \cite[Section 5]{Scott1986CurzonI} or the third and 
    second to last sentences on
    page 578 of \cite{Scott1986CurzonII}.
    
    \begin{figure*}
		  \centering
		  \def\svgwidth{0.7\columnwidth}
		  \input{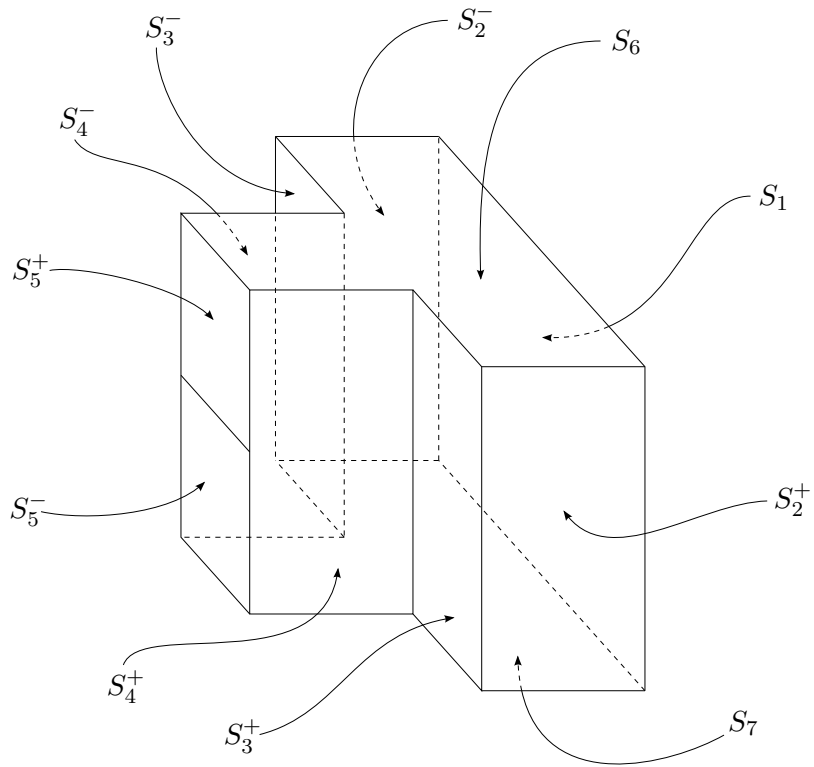}
		  \caption{A graphical representation of the eleven surfaces that make up
		  the topological boundary of the range of $\beta$, see 
      Equation \eqref{eq:beta}.}\label{fig:5}
	  \end{figure*}
    
    This second chart, $\beta$, also induces a boundary, $\sigma_\beta$,
    as the set of boundary points
    $[(\beta,X,\{p\})]$ where $p$ is an element of one of 
    the eleven surfaces or an element of one 
    of the intersections of the closures of 
    the surfaces given above. 
    
    Scott and Szekeres do not explicitly state all 
    the information needed for a complete classification 
    of the boundary points of $\sigma_\beta$
    in \cite{Scott1986CurzonI,Scott1986CurzonII}. 
    The following list contains those statements
    that can be derived from comments in their papers.
    Note that the analysis
    of the two papers is performed in different coordinate systems,
    $x,y,t,\phi$ versus $x, Y, T, \phi$. This difference, however, does not
    effect the conclusions drawn below (for justification refer to
    the second sentence on page 579 of \cite{Scott1986CurzonII});
    \begin{enumerate}
      \item the set $S_1\cup S_2^+\cup S_2^-$ and the lines 
				$\overline{S_1}\cap\overline{S_2^\pm}\setminus\overline{S_6\cup S_7}$ 
        correspond to 
				spacelike infinity for the 
        $(t,r,z,\phi)$ coordinates, i.e. as $R\to\infty$. That is, these
        surfaces are approached by spacelike geodesics and not
        approached by non-spacelike geodesics.
        See
        the sentence beginning ten lines from the top of page 575 of
        \cite{Scott1986CurzonII},
      \item the surfaces $S_3^\pm$ consist of $C^\infty$ regular points.
        See 
        \cite[Section 5]{Scott1986CurzonI} or 
        the third and second to last sentences on
        page 578 of \cite{Scott1986CurzonII}.
      \item the lines given by $\overline{S_3^\pm}\cap \overline{S_4^\pm}$
				are the singularity of the Curzon solution. The Kretschmann 
        limits to $\infty$ along every curve that limits to points
        in $\overline{S_3^\pm}\cap \overline{S_4^\pm}$.
        This is stated in several places
        in \cite{Scott1986CurzonII} the clearest is the last sentence
        of page 571,
      \item the set $S_4^+\cup S_4^-$ is not approached by any 
        spacelike geodesic. See
        item 3 on page 568 of \cite{Scott1986CurzonI} and
        the first paragraph of \cite[Section 1]{Scott1986CurzonII},
      \item the lines 
        $\left(\overline{S_4^\pm}\cap
          \overline{S_5^+\cup S_5^-}\right)\setminus\overline{S_6\cup S_7}$ 
        and $\overline{S_5^+}\cap\overline{S^-_5}$ are
        spacelike infinities, i.e.\ they are
        approach by spacelike geodesics and
        not approached by non-spacelike geodesics. See items 2 and 4 on page 568 of
        \cite{Scott1986CurzonI},
      \item the sets $S_5^+, S_5^-$
        are comprised of $C^\infty$ regular points. 
        On these surfaces copies of Minkowski
        spacetime can be smoothly attached. See the last paragraph on page 579
        of \cite{Scott1986CurzonII}.
    \end{enumerate}

    The following statements are inferred from
    \cite{Scott1986CurzonI,Scott1986CurzonII}
    but are not explicitly mentioned in either of the papers.
    Since $S_3^\pm$ consists of regular boundary points the
    elements of $\ABBPM{M}$ represented by points in
    $\left(\overline{S_3^\pm}\cap\overline{S_2^\pm}\right)
      \setminus\overline{S_6\cup S_7}$
    are implied to be in $\sigma$. 
    Nothing is said about future/past timelike/null infinity, though
    Figure 1 of \cite{Scott1986CurzonII} implies that these will correspond to
    $S_6$ and $S_{7}$ and the edges $\overline{S_6}\setminus S_6$ and
    $\overline{S_7}\setminus S_7$. Lastly, nothing is explicitly said about 
    the existence of non-spacelike
    geodesics that approach the $S_4^\pm$ surfaces. In principle this
    information can be determined from \cite[Section 2]{Scott1986CurzonII}
    and the equations for the coordinates given above. It is, however,
    implied that no non-spacelike geodesic approaches
    the $S_4^\pm$ surfaces in the discussion of
    \cite[Section 2]{Scott1986CurzonII} beginning in the last paragraph
    of page 573. Thus I shall assume that the elements of
    $S_4^\pm$ are non-approachable.
    These claims could be investigated
    thoroughly. This is beyond the scope of this paper,  however, so I
    will assume that the analysis, as described above, is accurate.
    
    With this in mind then, $[(\beta,X,\{p\})]\in\sigma_\beta$,
    is classified as;
    \begin{enumerate}
			\item an unapproachable point if $p\in S_4$,
			\item a $C^\infty$ indeterminate point if $p\in S_5^\pm$,
      \item a pure point at infinity if $p\in\overline{S_1},\overline{S_2^\pm},
        \overline{S_3^\pm}\cap\overline{S_2^\pm},
				\overline{S_6},\overline{S_{7}}, \overline{S_5^+}\cap\overline{S_5^-},
				\overline{S_4^\pm}\cap\left(\overline{S_5^+\cup S_5^-}\right)$,
			\item a pure singularity if 
        $p\in\overline{S_3^\pm}\cap \overline{S_4^\pm}$.
    \end{enumerate}
    Thus the material of \cite{Scott1986CurzonI,Scott1986CurzonII} 
    is sufficient to produce a complete
    boundary, $\sigma_\beta$, and, with some inferences,
    its classification.
    The boundary $\sigma_\beta$ can be considered to be a
    `good' boundary since none of the global structure is unresolved, i.e.\
    the boundary contains no mixed points.

		I now return to the first boundary, $\sigma\cup\sigma_\alpha$.
		By construction each boundary point of $\alpha$ must cover the surfaces
		$S_3^\pm,S_4^\pm,S_5^\pm$ and the intersections of their closures 
		(minus the closures of $S_6$ and $S_7$). Indeed, Figure 2 of 
    \cite{Scott1986CurzonII}
		implies that the function $T$ extends to a bijective function
		on the boundary of the range of 
		$\beta$ 
    (see also the last paragraph of page 571 of \cite{Scott1986CurzonII}).
		Hence, for all $[(\alpha,U,\{(t,0,0,\phi)\})]\in\sigma_\alpha$, there
		is
		a point $(T,x,Y,\phi)$ lying in any one of the surfaces 
    $S_3^\pm,S_4^\pm,S_5^\pm$,
		or the intersections of their closures 
    (minus the closures of $S_6$ and $S_7$),
		such that
		$
		  [(\alpha,U,\{(t,0,0,\phi)\})]\covers [(\beta,U,\{(T,x,Y,\phi)\})].
		$
		In particular each boundary point in $\sigma_\alpha$ covers a regular point 
		lying in $S_5^\pm$. This justifies the claim that 
    $\sigma_\alpha$ is composed of directional singularities.
    
  \subsection{Smooth Gowdy symmetric generalized 
    Taub-NUT spacetimes}\label{ssec.Gowdy}
  
		The last example comes from Beyer and Hennig's work on the existence,
		and global properties, of smooth Gowdy symmetric generalized Taub-NUT 
		spacetimes, \cite{Beyer2011Smooth}. A smooth Gowdy symmetric generalized
		Taub-NUT spacetime is a generalization of a spacetime 
    in Moncrief's class of generalized
		Taub-NUT solutions of the vacuum Einstein equations on 
    $(0,\pi)\times \Sph^3$
		with Gowdy, 
		$U(1)\times U(1)$, symmetry such that the `surface' at 
    $\{0\}\times\Sph^3$ is a 
		smooth past Cauchy horizon, \cite[Section 3.2]{Beyer2011Smooth}.
		
		Beyer and Hennig's analysis is based on the study of a set of differential 
		equations. They do not solve these equations to produce a closed form
		equation for the metric.   
    Nevertheless a great deal can still be deduced
		about the chart induced boundaries of this class of spacetimes. 
    Due to the lack of a
		closed form of the metric additional work would be required
		before other boundary constructions could be applied.
		
		Because of the symmetries of the 
		spacetimes, and because of the choices made with regards to 
		the killing vectors used \cite[Section 2.2]{Beyer2011Smooth}, 
		the following discussion can be reduced to the coordinates $t\in(0,\pi)$
		and $\theta\in(0,\pi)$ where $\theta$ is the coordinate on 
		$\Sph^3=\{(x_1,x_2,x_3,x_4)\in\R^4:x_1^2+x_2^2+x_3^2+x_4^2=1\}$
		implicitly determined by, \cite[Section 2.2]{Beyer2011Smooth},
		\begin{align*}
		  x_1&=\cos\frac{\theta}{2}\cos\lambda_1, 
         & x_2&=\cos\frac{\theta}{2}\sin\lambda_1,\\
		  x_3&=\sin\frac{\theta}{2}\cos\lambda_2, 
         & x_4&=\sin\frac{\theta}{2}\sin\lambda_2.
		\end{align*}
		Let $\alpha$ be the chart defined by these coordinates.
		Let $U=\R\times(0,\pi)$
		so that $\BP{\alpha}\cap U=\BP{\alpha}=\{0,\pi\}\times(0,\pi)$
		and $(\alpha,U)\in\EX{M}$.		
		
		Beyer and Hennig demonstrate the global existence of solutions
    to the differential equations
		by first proving local existence for the Fuchsian system, 
    \cite[Equations 35 and 36]{Beyer2011Smooth},
		\begin{align*}
		  D^2 S-t^2\Delta_{\Sph^2}S &= (1-t\cot t)DS-\exp(-2S)
      \left(\left(D\omega\right)^2-
        \left(t\partial_\theta\omega\right)^2\right),\\
		  D^2\omega - 4D\omega - t^2\Delta_{\Sph^2}\omega &= 
        (1-t\cot t)D\omega +2(D S-2)D\omega -2
        (t\partial_\theta S)(t\partial_\theta\omega),
		\end{align*}
		where $D=t\partial_t$ and
		$
		  \Delta_{\Sph^2}=
        \partial_\theta^2+\cot\theta\partial_\theta+
        \frac{1}{\sin^2\theta}\partial_\phi^2
		$
		is the Laplace operator on the unit sphere.
		The functions $S,\omega$ are related to the metric components,
		\cite[Section 2 and 3]{Beyer2011Smooth}.
		Global existence is then implied by 
    \cite[Theorem 6.3]{Chrusciel1990Spacetimes}.
		
		In Section 3 of \cite{Beyer2011Smooth}, Beyer and
		Hennig show that the solution to these equations can be extended to 
    $(-\epsilon,\pi)\times\Sph^3$ 
		for some $\epsilon>0$. Note, however, that on $(-\epsilon,0)\times\Sph^3$
		the extended solution is no longer guaranteed to be a solution of the 
		vacuum Einstein equations. In any case,
		the result is that the surface given by
		$t=0$ is composed of $C^\infty$ regular boundary points. That is
		the boundary points in
		$
		  \{[(\alpha,U,\{(0,\theta)\})]:\theta\in(0,\pi)\}
		$
		are indeterminate points, Definition \ref{prop.App&UnAppABBPM}.
    Since the elements $(0,0)$ and $(0,\pi)$ are 
		not boundary
		points of $\alpha$ it is not possible to conclude that they
		are also regular boundary points. 
    This problem has been caused by the coordinate singularity
    in the definition of $\alpha$.
		Beyer and Hennig's analysis does not suffer this restriction,
		hence it is clear that if the $\theta$ axis was rotated to define
		a new chart the points corresponding to $(0,0)$ and $(0,\pi)$
		would be $C^\infty$ regular.
		
		To study the global behaviour of solution's to the Fuchsian system
		Beyer and Hennig recast it using an Ernst potential, $\mathcal{E}$, that
		solves the equation, \cite[Equation 58]{Beyer2011Smooth},
		\[
		  \left(-\partial_t\mathcal{E}-\cot t\partial_t\mathcal{E}+\partial_\theta^2\mathcal{E}+\cot\theta\partial_\theta\mathcal{E}\right)f=-(\partial_t\mathcal{E})^2+(\partial_\theta\mathcal{E})^2,
		\]
		where $f$ is the real part of $\mathcal{E}$.
		This is equivalent to a linear partial differential system
		which reduces to a linear ordinary differential system on
		each of the surfaces $t=0$, $\theta=0$ and $\theta=\pi$, 
		\cite[Equation 75]{Beyer2011Smooth}. By solving this linear ODE 
		Beyer and Hennig are able to study the properties of the Ernst potential
		on the surface $t=\pi$. It turns out that the
		behaviour of the Ernst potential on the surface $t=\pi$ depends on
		two parameters $b_A$ and $b_B$ which are related to the initial data
		of the Ernst equation, \cite[Section 4.3.2]{Beyer2011Smooth}.
		Beyer and Hennig make conclusions regarding the
		$t=\pi$ surface by dividing the behaviour of the Ernst potential into four 
		cases, \cite[Section 4.4]{Beyer2011Smooth};
		\begin{description}
		  \item[$b_A=b_B$:] The surface $t=\pi$ is a 
        $C^\infty$ regular Cauchy horizon. 
				Thus the boundary points $(\alpha,U,\{(\pi,\theta)\})$, 
        $\theta\in(0,\pi)$, are $C^\infty$
				regular boundary points, \cite[Section 4.4.1]{Beyer2011Smooth}.
				As before if a new chart is introduced, by rotating the $\theta$ axis,
				the boundary points corresponding to $(\pi,0)$ and $(\pi,\pi)$
				would be $C^\infty$ regular.
				
			\item[$b_A\neq b_B$ and $b_B\neq b_A\pm 4$:] In this case one of the
				metric components diverges on the surface $t=\pi$. The 
        Ernst potential is, however,
				regular everywhere and the Kretschmann
				scalar is bounded on $t=\pi$. 
				To analyse this further Beyer and Hennig construct a new chart, which 
  			will be denoted by $\beta$. With respect to this new chart 
				the surface given by $t=\pi$ is a regular Cauchy horizon, see
				Equation (117), and the comments immediately before Equation 
				(117), of
				\cite{Beyer2011Smooth}. 
				Each boundary point $(\alpha,U,\{(\pi,\theta)\})$
				is therefore covered by the set of $C^\infty$ regular boundary points 
				$
				  \{(\beta,X,(0,\theta):\theta\in(0,\pi)\},
				$
				where $(\beta,X)$ is a suitable extension. The boundary
				points $[(\alpha,U,\{(\pi,\theta)\})]$ are therefore indeterminate
				boundary points, Definition \ref{prop.App&UnAppABBPM}.
				
				Again, Beyer and Hennig's analysis indicates that
				if the $\theta$ axis were rotated and the same analysis performed
				the points corresponding
				to $(\pi,0)$ and $(\pi,\pi)$ would be removable.
				
			\item[$b_B=b_A+4$:] In this case the Ernst potential diverges on 
				the surface $\theta=0$, \cite[Section 4.4.2]{Beyer2011Smooth}. 
				The Kretschmann scalar on 
				the surface $\theta=0$ behaves like $\frac{1}{(\pi-t)^{12}}$
				as $t\to\pi$, \cite[Section 4.4.2]{Beyer2011Smooth}.
				Because of this Beyer and Hennig are unable to use the 
				techniques used in the
				previous two cases. To avoid this issue they take a sequence of
				solutions with $b_B\neq b_A+4$ that converge to the
				$b_B=b_A+4$ case. The result is that the Ernst potential that is the
				limit of the Ernst potentials of the sequence of solutions
				is regular for $0<\theta\leq\pi$ and diverges for $\theta=0$,
				\cite[Section 4.4.2]{Beyer2011Smooth}. The 
				implication being that the boundary points
				$
				  \{(\alpha,U,\{(\pi,\theta)\}):\theta\in(0,\pi)\}
				$
				are $C^\infty$ regular boundary points. 
				
				Their analysis also shows that, if the 
				$\theta$ axis were rotated the point corresponding
				to $(\alpha,U,\{(\pi,\pi)\})$ would be a $C^\infty$ regular point
				and that the point corresponding to $(\alpha,U,\{(\pi,0)\})$ would
				be an essential singularity. Beyer and Hennig's analysis in 
				\cite{Beyer2011Smooth}
				does
				not show if $[(\alpha,U,\{(\pi,0)\})]$ is pure or directional.
								
		  \item[$b_B=b_A-4$:] The analysis of this case is exactly the same as
				for $b_B=b_A+4$ except that the divergence occurs in the limit to 
				the point
				$(\pi,\pi)$ along the surface $\theta=\pi$.
		\end{description}

\appendix

\section{Properties of the completion of a manifold with respect to sets
of extensions}\label{sec.topQ(M)}

  In Section \ref{sec.cicib} it was claimed that;
  \begin{enumerate}
    \item the quotient map $q$ is open, continuous 
      and such that its restriction, for all $(\alpha, U)\in S_Q$,
      to $N_{(\alpha, U)}$ is a homeomorphism,
    \item the space $Q(M)$ is a $T_1$, separable, first countable, 
      locally metrizable
      topological space, and, 
    \item that there exists a continuous map
       $\imath_Q:M\to Q(M)$ that is a homeomorphism onto its image so that
       $\overline{\imath_Q(M)}=Q(M)$.
  \end{enumerate}
  This section presents proofs of these claims. Throughout
  this section I assume that $Q\subset \EX{M}$ and that
  $N_Q, S_Q, Q(M)$ and $q:N_Q\to Q(M)$ are as given in Definition
    \ref{def.quotientmap}.

  \begin{proposition}
    The map
    $q$ is open.
  \end{proposition}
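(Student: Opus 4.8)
The plan is to show that for every open $W\subset N_Q$ the saturation $q^{-1}(q(W))$ is open in $N_Q$; since $Q(M)$ carries the quotient topology, this is exactly $q(W)$ being open. As $N_Q$ is the topological disjoint union of the summands $N_{(\alpha,U)}$ and $q$ commutes with unions, it suffices to treat $W=V$ an open subset of a single summand $N_{(\alpha,U)}$ and to verify, for each $(\beta,X)\in S_Q$, that $q^{-1}(q(V))\cap N_{(\beta,X)}$ is open in $N_{(\beta,X)}$. Write $V=V_\circ\sqcup V_\partial$ with $V_\circ=V\cap\ran(\alpha)$ (open in $\R^n$) and $V_\partial=V\cap\partial_U\ran(\alpha)$. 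By Definition \ref{def.quotientmap} a range point is identified only with range points and a boundary point only with boundary points, so $q^{-1}(q(V))\cap N_{(\beta,X)}$ is the disjoint union of a ``range part'' $R_\circ$ (points identified with something in $V_\circ$) and a ``boundary part'' $R_\partial\subset\partial_X\ran(\beta)$ (points identified with something in $V_\partial$).

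The range part is immediate: $z\in\ran(\beta)$ lies in $R_\circ$ exactly when $z=\beta\circ\alpha^{-1}(v)$ for some $v\in V_\circ\cap\alpha(\dom(\alpha)\cap\dom(\beta))$, so $R_\circ=\beta\circ\alpha^{-1}\big(V_\circ\cap\alpha(\dom(\alpha)\cap\dom(\beta))\big)$, which is open in $\ran(\beta)$ since $\beta\circ\alpha^{-1}$ is a diffeomorphism between open subsets of $\R^n$, and $\ran(\beta)$ is open in $N_{(\beta,X)}$. So the real content is the boundary part $R_\partial$, which is not open by itself and must be glued to $R_\circ$.

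The crux — and the step I expect to be the main obstacle — is a local statement to the effect that the coordinate transformation on the boundary is a homeomorphism, proved directly from the covering condition (Definition \ref{def:chartcoversing}): if $(\alpha,U,\{v\})\equiv(\beta,X,\{z\})$ with $v\in\partial_U\ran(\alpha)$, $z\in\partial_X\ran(\beta)$, then there is a ball $B(z,r)\subset\R^n$ with $\beta^{-1}\big(B(z,r)\cap\ran(\beta)\big)\subset\dom(\alpha)\cap\dom(\beta)$, with $\alpha\circ\beta^{-1}(z')\to v$ as $z'\to z$ inside $\ran(\beta)$, and — shrinking $r$ — such that every $z'''\in B(z,r)\cap\partial_X\ran(\beta)$ is single-point $\equiv$-related to a boundary point $v'''$ of $\alpha$ with $v'''\to v$ as $z'''\to z$. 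The first two assertions are the contrapositive of $(\alpha,U,\{v\})\covers(\beta,X,\{z\})$ applied to sequences $\beta^{-1}(z_k')$ with $z_k'\to z$: a sequence in $\ran(\beta)$ tending to $z$ whose preimages avoided $\dom(\alpha)$, or whose $\alpha$-images eventually stayed outside a fixed ball about $v$, would possess no subsequence of the type the covering condition demands, using also that such preimage sequences have no accumulation points in $M$ by admissibility (Definition \ref{def:BP}). The ``moreover'' clause is where most of the work lies: one uses both directions of $\equiv$ together with admissibility to see that for $z'''$ near $z$ the $\alpha$-images of sequences approaching $z'''$ are bounded and cannot cluster at two distinct points, so $z'''$ is again of the single-point type and is $\equiv$-related to that unique cluster point.

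Granting the lemma the boundary part is finished routinely. Given $z\in R_\partial$ pick $v\in V_\partial$ with $(\alpha,U,\{v\})\equiv(\beta,X,\{z\})$ and $\delta>0$ with $B(v,\delta)\cap N_{(\alpha,U)}\subset V$ (possible since $V$ is open in $N_{(\alpha,U)}$). Take $r$ as in the lemma, small enough that $\alpha\circ\beta^{-1}(z')\in B(v,\delta)$ for all $z'\in B(z,r)\cap\ran(\beta)$ and that every $z'''\in B(z,r)\cap\partial_X\ran(\beta)$ is $\equiv$-related to some boundary point $v'''\in B(v,\delta)$ of $\alpha$. Then every such $z'$ has $\alpha\circ\beta^{-1}(z')\in B(v,\delta)\cap\ran(\alpha)\subset V$, so $z'\in R_\circ$, and every such $z'''$ has $v'''\in B(v,\delta)\cap\partial_U\ran(\alpha)\subset V$, so $z'''\in R_\partial$; hence $B(z,r)\cap N_{(\beta,X)}\subset q^{-1}(q(V))$. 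Since points of $R_\circ\subset\ran(\beta)$ are already interior (with neighbourhood $R_\circ$), every point of $q^{-1}(q(V))\cap N_{(\beta,X)}$ is interior, so this set is open; letting $(\beta,X)$ range over $S_Q$ gives $q^{-1}(q(V))$ open in $N_Q$, hence $q(V)$ open in $Q(M)$. For a general open $W$, decompose it along the summands of $N_Q$ and take the union of the corresponding open images; thus $q$ is open.
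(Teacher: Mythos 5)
Your reduction to a single summand, the decomposition of the saturation into a range part $R_\circ$ and a boundary part $R_\partial$, and the openness of $R_\circ$ are all fine, and clauses (i) and (ii) of your key lemma do follow from Definition \ref{def:chartcoversing} by exactly the contrapositive/subsequence arguments you indicate. The gap is clause (iii), and it is not a missing technicality: the statement is false. The relation $(\alpha,U,\{v\})\equiv(\beta,X,\{z\})$ constrains only sequences whose $\beta$-images converge to $z$ itself; it gives no control over sequences converging to a \emph{nearby} boundary point $z'''$, and in particular nothing prevents their $\alpha$-images from clustering at two distinct points of $\partial_U\ran(\alpha)$ near $v$ (boundedness, which you correctly extract from (ii), is not the issue). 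Concretely, modify Example \ref{expointtoBviceversa}: take the boundary limit of the transition function to be a smooth non-decreasing surjection $G$ of the angular coordinate with flat intervals $I_n$ (on which $G\equiv\theta_n$) accumulating at a point $\psi_\infty$ at which $G$ is still injective, and set $g_r=(1-r)\,\mathrm{id}+rG$, which is a diffeomorphism for each $r<1$. Uniform convergence of $g_r$ to $G$ shows that the boundary point of $\alpha$ at angle $\psi_\infty$ is single-point equivalent to the boundary point of $\beta$ at angle $\theta_\infty=G(\psi_\infty)$; yet every neighbourhood of the latter in $\partial_X\ran(\beta)$ contains the boundary points at angles $\theta_n$, each of which is equivalent only to the nondegenerate boundary set $I_n$ and hence, under the relation of Definition \ref{def.quotientmap}, to no single boundary point of $\alpha$ at all. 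So no ball $B(z,r)$ of the kind your lemma requires exists, and your final step, which needs every $z'''\in B(z,r)\cap\partial_X\ran(\beta)$ to land in $R_\partial$, breaks down.

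For comparison, the paper's proof is a short contradiction argument: if $q^{-1}(q(W))\cap N_{(\beta,X)}$ were not open, first countability would give $x_i\to x$ with $x\in q^{-1}(q(W))$ and $x_i\notin q^{-1}(q(W))$; applying the covering condition to the sequence $(\beta^{-1}(x_i))$ produces a subsequence with $\alpha\circ\beta^{-1}(x_{i_j})\to y\in W$, whence some $x_{i_j}\in q^{-1}(q(W))$, a contradiction. Note, however, that this argument tacitly treats the $x_i$ as points of $\ran(\beta)$; the case in which the approximating sequence lies in $\partial_X\ran(\beta)$ is precisely the situation your clause (iii) was meant to handle, and the example above is problematic there as well. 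So you have correctly located where the real difficulty sits, but the proposed resolution does not work, and I do not see how to repair it without an extra hypothesis that forbids Geroch-type collapsing from accumulating at a single-point-equivalent boundary point --- pairwise compatibility in the sense of Definition \ref{def.compatible_extensions} is the paper's device of this kind, and restricting to compatible extensions (or reworking the lemma so that only the two correct clauses (i) and (ii) are used) is the direction I would pursue.
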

  \begin{proof}
    It is sufficient to show that $q^{-1}(q(W))$ is open
    for any open $W\subset N_{(\alpha,U)}$. Suppose
    otherwise. 
    Then there exists $(\beta,X)\in S_Q$
    so that $N_{(\beta, X)}\cap q^{-1}(q(W))$ is not open.
    Since $N_{(\beta, X)}\cap q^{-1}(q(W))$ is first countable
    there exists
    $x\in q^{-1}(q(W))\cap N_{(\beta,X)}$
    and $(x_i)\subset 
    N_{(\beta, X)}\setminus q^{-1}(q(W))$
    so that $x_i\to x$ in $N_{(\beta, X)}$.
    Since $q(x)\in q(W)$ there exists $y\in W$ so that $q(x)=q(y)$.
    By the definition of $Q(M)$ this implies that
    there exists a subsequence $(y_i)$ of $(x_i)$ so that
    $\alpha\circ\beta^{-1}(y_i)\to y$. Since $W$ is open
    there exists some $i$ so that $\alpha\circ\beta^{-1}(y_i)\in W$.
    By construction there exists some $x_j$ so that $x_j=y_i$.
    Thus $q(x_j)=q(\alpha\circ\beta^{-1}(y_i))\in q(W)$.
    This is a contradiction and therefore $q$ is an open map.
  \end{proof}
 
  \begin{corollary}
    The space $Q(M)$ is $T_1$.
  \end{corollary}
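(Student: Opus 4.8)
The plan is to show that every singleton $\{p\}\subset Q(M)$ is closed; by definition of the quotient topology this is the same as showing that $q^{-1}(\{p\})$ is closed in $N_Q$. Since $N_Q=\bigsqcup_{(\alpha,U)\in S_Q}N_{(\alpha,U)}$ carries the disjoint union topology, it suffices to check that $q^{-1}(\{p\})\cap N_{(\alpha,U)}$ is closed in $N_{(\alpha,U)}$ for every $(\alpha,U)\in S_Q$. Each $N_{(\alpha,U)}$ is a subspace of $\R^n$, hence metrizable and in particular $T_1$, so any subset of it with at most one point is automatically closed. Thus the whole argument reduces to the claim that $q$ is injective on each $N_{(\alpha,U)}$, i.e.\ that if $x,y\in N_{(\alpha,U)}$ and $q(x)=q(y)$ then $x=y$. (This is, in effect, the injectivity half of the statement that $q|_{N_{(\alpha,U)}}$ is a homeomorphism; one may quote that instead, but I would give the short direct argument.)

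First I would dispose of the easy cases of the injectivity claim using the definition of the equivalence relation on $N_Q$ in Definition \ref{def.quotientmap}. If $x,y\in\ran(\alpha)$ and $q(x)=q(y)$, then taking $\beta=\alpha$ in the relevant clause forces $\alpha\circ\alpha^{-1}(x)=y$, i.e.\ $x=y$. A point of $\ran(\alpha)$ and a point of $\partial_U\ran(\alpha)$ are never identified, again directly from the two mutually exclusive clauses of the relation. So the only remaining possibility is $x,y\in\partial_U\ran(\alpha)=\BP{\alpha}\cap U$ with $(\alpha,U,\{x\})\equiv(\alpha,U,\{y\})$, and this is the step I expect to be the main obstacle.

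To handle it I would show that two distinct admissible boundary points of the same extension cannot cover one another. Suppose $x\neq y$ lie in $\partial_U\ran(\alpha)$. Since $x\in\BP{\alpha}\subset\partial\ran(\alpha)$ and $\ran(\alpha)$ is open, there is a sequence $(r_i)\subset\ran(\alpha)$ with $r_i\to x$; putting $z_i=\alpha^{-1}(r_i)$ gives $(z_i)\subset\dom(\alpha)$ with $(\alpha(z_i))$ having $x$ as an accumulation point. If $(\alpha,U,\{y\})\covers(\alpha,U,\{x\})$ held, then Definition \ref{def:chartcoversing} would furnish a subsequence $(v_i)$ of $(z_i)$ with $(\alpha(v_i))$ accumulating at $y$; but $(\alpha(v_i))$ is a subsequence of the convergent sequence $(r_i)$, so it converges to $x$ and has no other accumulation point, contradicting $y\neq x$. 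Hence $(\alpha,U,\{y\})$ cannot cover $(\alpha,U,\{x\})$, so $(\alpha,U,\{x\})\not\equiv(\alpha,U,\{y\})$, i.e.\ $q(x)\neq q(y)$, which completes the injectivity claim.

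Putting the pieces together: for any $p\in Q(M)$ the set $q^{-1}(\{p\})\cap N_{(\alpha,U)}$ has at most one element and is therefore closed in $N_{(\alpha,U)}$; since this holds for every $(\alpha,U)\in S_Q$, the set $q^{-1}(\{p\})$ is closed in $N_Q$, and hence $\{p\}$ is closed in $Q(M)$, so $Q(M)$ is $T_1$. Openness of $q$ is not strictly needed on this route, but it provides an alternative finish consistent with the corollary's placement: once $q^{-1}(\{p\})$ is seen to be closed, its complement is an open saturated set, so $Q(M)\setminus\{p\}=q\bigl(N_Q\setminus q^{-1}(\{p\})\bigr)$ is open as the image of an open set under the open map $q$. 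I would present whichever of these two closings reads best in context.
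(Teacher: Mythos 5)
Your proof is correct, and it takes a somewhat different route from the paper's. The paper argues pointwise: given $[x]\neq[y]$ it produces, for each of the two points, an open set containing it and missing the other, using the openness of $q$ together with the Hausdorffness of a single piece $N_{(\alpha,U)}$ when both points are representable there, and the openness of the saturated sets $q(N_{(\alpha,U)})$, $q(N_{(\beta,X)})$ when they are not. You instead show singletons are closed by checking that each fibre $q^{-1}(\{p\})$ meets every $N_{(\alpha,U)}$ in at most one point, reducing the corollary to injectivity of $q$ on each piece; this route does not need openness of $q$ at all. The genuinely valuable difference is that you isolate and actually prove the one non-trivial ingredient --- that distinct points $x\neq y$ of $\partial_U\ran(\alpha)=\BP{\alpha}\cap U$ cannot satisfy $(\alpha,U,\{x\})\equiv(\alpha,U,\{y\})$ --- which the paper treats as ``clear'' both here and again in the later proposition that $q|_{N_{(\alpha,U)}}$ is a homeomorphism (which, in the paper's ordering, appears only after this corollary, so quoting it would be awkward). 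Your sequence argument via Definition \ref{def:chartcoversing} is exactly right: a sequence in $\ran(\alpha)$ converging to $x$ witnesses the failure of $(\alpha,U,\{y\})\covers(\alpha,U,\{x\})$, because any subsequence of a convergent sequence in $\R^n$ has no accumulation point other than the limit. The handling of the easy cases (two range points, or a range point against a boundary point) via the two mutually exclusive clauses of the relation in Definition \ref{def.quotientmap} is also correct. In short: the paper's proof is shorter because it leans on the openness of $q$ and asserts per-piece injectivity; yours is more self-contained and fills in the step the paper elides.
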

  \begin{proof}
    Let $[x],[y]\in Q(M)$. Then there exists $(\alpha, U)\in S_Q$
    so that
    $x\in N_{(\alpha, U)}$. If $[y]\in q(N_{(\alpha, U)})$
    then there exists $z\in N_{(\alpha, U)}$ so that
    $q(z)=q(y)$. Since $N_{(\alpha, U)}$ is Hausdorff and
    as $q$ is open it is clear that either $[x]=[y]$ or $[x]$ and $[y]$ are
    $T_1$ separated. So suppose that
    $[y]\not\in q(N_{(\alpha, U)})$. 
    There exists $(\beta, X)\in S_Q$ so that
    $y\in N_{(\beta, X)}$.
    By symmetry
    we can assume that $[x]\not\in q(N_{(\beta, X)})$.
    By construction $N_{(\alpha, U)}$ and $N_{(\beta, X)}$
    are open subsets of $N_Q$. Since $q$ is open
    we have the required open sets and $Q(M)$ is $T_1$.
  \end{proof}

  \begin{corollary}\label{cor.firstcoutnable}
    The space $Q(M)$ is first countable.
  \end{corollary}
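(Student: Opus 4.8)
The plan is to deduce first countability of $Q(M)$ from the preceding proposition, which tells us that $q$ is an open continuous surjection, together with the fact that its domain $N_Q$ is first countable. The standard topological fact in play is that the image of a first countable space under an open continuous surjection is again first countable; the whole proof is an application of this, so there is no genuine obstacle to overcome.

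First I would observe that each $N_{(\alpha, U)}$ is a subset of $\R^n$ with the relative topology, hence metrizable and in particular first countable. Since $N_Q$ is the disjoint union of the spaces $N_{(\alpha, U)}$ over $(\alpha, U)\in S_Q$, and first countability is a local property, $N_Q$ is first countable: a point $x\in N_Q$ lies in exactly one summand $N_{(\alpha, U)}$, which is open in $N_Q$, and a countable neighbourhood basis of $x$ in $N_{(\alpha, U)}$ is then a countable neighbourhood basis of $x$ in $N_Q$.

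Next I would carry out the transfer of first countability across $q$. Given $[x]\in Q(M)$, choose $x\in N_Q$ with $q(x)=[x]$, which is possible since $q$ is surjective onto $Q(M)$ by the definition of $Q(M)$. Let $\{B_i : i\in\N\}$ be a countable neighbourhood basis of $x$ in $N_Q$. Because $q$ is open, each $q(B_i)$ is an open neighbourhood of $[x]$. To see that these form a basis at $[x]$, let $V\subset Q(M)$ be any open neighbourhood of $[x]$; then $q^{-1}(V)$ is an open neighbourhood of $x$ in $N_Q$ since $q$ is continuous, so $B_i\subset q^{-1}(V)$ for some $i$, and hence $q(B_i)\subset q(q^{-1}(V))\subset V$. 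Thus $\{q(B_i) : i\in\N\}$ is a countable neighbourhood basis at $[x]$, and as $[x]$ was arbitrary, $Q(M)$ is first countable.

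The only subtlety worth flagging is that the index set $S_Q$ of the disjoint union defining $N_Q$ may be uncountable, so $N_Q$ is in general neither second countable nor metrizable; this is precisely why the argument must be routed through the openness of $q$ (a local-to-global device via the neighbourhood basis) rather than through any global countability assumption on $N_Q$. First countability, being purely local, is unaffected by the size of $S_Q$, so the argument goes through without any extra hypotheses.
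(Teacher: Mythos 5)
Your proposal is correct and follows exactly the route the paper takes: the paper's proof simply notes that $N_Q$ is first countable and cites the standard fact (Willard, Problem 16A.3) that the image of a first countable space under an open continuous map is first countable. You have merely written out the proof of that cited fact explicitly, which is fine.
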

  \begin{proof}
    The space $N_Q$ is clearly first countable. Since
    $Q(M)$ is the image of an open continuous map
    it is also first countable, \cite[Problem 16A.3]{citeulike:593505}.
  \end{proof}

  \begin{proposition}\label{prop.qrestishomeo}
    For all $(\alpha, U)\in S_Q$ the map
    $q|_{N_{(\alpha, U)}}$ is a homeomorphism.
  \end{proposition}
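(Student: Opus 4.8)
The plan is to verify the three defining properties of a homeomorphism onto its image: continuity, injectivity, and openness onto the image. Continuity is immediate, since $q$ is continuous (being a quotient map) and the restriction of a continuous map to a subspace is continuous. The bulk of the work is injectivity; openness onto the image follows from the openness of $q$ established above.

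For injectivity, suppose $x, y \in N_{(\alpha, U)}$ with $q(x) = q(y)$. By the definition of the equivalence relation on $N_Q$ either both $x, y \in \ran(\alpha)$ or both $x, y \in \partial_U\ran(\alpha)$. In the first case the relation $\beta\circ\alpha^{-1}(x) = y$ with $\beta = \alpha$ forces $x = y$. In the second case $(\alpha, U, \{x\}) \equiv (\alpha, U, \{y\})$, so in particular $(\alpha, U, \{x\})\covers(\alpha, U, \{y\})$. Since $y \in \partial_U\ran(\alpha) = \BP{\alpha}\cap U \subset \partial\ran(\alpha)$ there is a sequence $(y_i) \subset \dom(\alpha)$ with $\alpha(y_i) \to y$. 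By the covering relation there is a subsequence $(v_i)$ of $(y_i)$ with $(v_i)\subset\dom(\alpha)$ and $(\alpha(v_i))$ having an accumulation point at $x$; but $(\alpha(v_i))$ is a subsequence of the convergent sequence $(\alpha(y_i))$, so it converges to $y$ and accumulates only at $y$. Hence $x = y$, and $q|_{N_{(\alpha, U)}}$ is injective.

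Finally, for openness onto the image, observe that each $N_{(\alpha, U)}$ is clopen in the disjoint union $N_Q$, so a set $W$ open in $N_{(\alpha, U)}$ is open in $N_Q$; since $q$ is open, $q(W)$ is open in $Q(M)$, and therefore $q(W) = q(W) \cap q(N_{(\alpha, U)})$ is open in the subspace topology on $q(N_{(\alpha, U)})$. Combining continuity, injectivity, and openness onto the image shows that $q|_{N_{(\alpha, U)}}$ is a homeomorphism onto its image. The only step requiring any care is the injectivity argument on boundary points, but this reduces quickly to the injectivity of the diffeomorphism $\alpha$ together with the existence of a sequence in $\dom(\alpha)$ whose $\alpha$-image converges to the given admissible boundary point; everything else is formal.
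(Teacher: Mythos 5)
Your proposal is correct and follows essentially the same route as the paper: continuity from the quotient topology, openness of the restriction from the previously established openness of $q$, and injectivity from the structure of the equivalence relation together with uniqueness of limits (Hausdorffness) in $N_{(\alpha, U)}\subset\R^n$. The paper compresses the injectivity step into a single sentence; your unwinding of the covering relation on a convergent sequence is the right way to fill in that detail and introduces nothing genuinely different.
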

  \begin{proof}
    The map $q$ is continuous by definition of the topology on
    $Q(M)$.
    From above $q$ is open. Surjectivity of $q|_{N_{(\alpha, U)}}$
    follows as the image of this map is
    $q(N_{(\alpha, U)})$. Injectivity follows from the construction
    of $Q(M)$ and as each $N_{(\alpha, U)}$ is Hausdorff.
  \end{proof}

  \begin{corollary}
    The space $Q(M)$ is locally metrizable.
  \end{corollary}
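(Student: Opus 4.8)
The plan is to exhibit, around each point of $Q(M)$, an explicit open neighbourhood that is homeomorphic to a subspace of $\R^n$ and hence metrizable. The two ingredients needed are already established: the quotient map $q$ is open, and for every $(\alpha, U) \in S_Q$ the restriction $q|_{N_{(\alpha, U)}}$ is a homeomorphism onto its image, Proposition \ref{prop.qrestishomeo}.

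First I would fix $[x] \in Q(M)$ and choose a representative $x \in N_Q$, say $x \in N_{(\alpha, U)}$ for some $(\alpha, U) \in S_Q$. Since $N_Q$ is a disjoint union, each summand $N_{(\alpha, U)}$ is open in $N_Q$; because $q$ is open, $q(N_{(\alpha, U)})$ is then an open subset of $Q(M)$, and it contains $[x] = q(x)$. Thus $q(N_{(\alpha, U)})$ is an open neighbourhood of $[x]$.

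Next I would invoke Proposition \ref{prop.qrestishomeo} to conclude that $q|_{N_{(\alpha, U)}} : N_{(\alpha, U)} \to q(N_{(\alpha, U)})$ is a homeomorphism. By Definition \ref{def.quotientmap}, $N_{(\alpha, U)} = \ran(\alpha) \cup \partial_U\ran(\alpha) \subset \R^n$ carries the subspace topology inherited from $\R^n$, which is metrizable. A space homeomorphic to a metrizable space is metrizable, so $q(N_{(\alpha, U)})$ is metrizable. Since $[x]$ was arbitrary, every point of $Q(M)$ has a metrizable open neighbourhood, i.e.\ $Q(M)$ is locally metrizable.

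There is no substantial obstacle: the corollary simply packages the openness of $q$ together with the fact that each $q|_{N_{(\alpha, U)}}$ is a homeomorphism onto an open set. The only points that require care are confirming that the images $q(N_{(\alpha, U)})$ are genuinely open (which is precisely where openness of $q$, rather than mere continuity, is used) and that the relative topology on $N_{(\alpha, U)}$ as a subset of $\R^n$ coincides with the one used to form the quotient, which is built into Definition \ref{def.quotientmap}.
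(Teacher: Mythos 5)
Your proposal is correct and follows essentially the same route as the paper: both arguments combine the openness of $q$ with the fact that $q|_{N_{(\alpha, U)}}$ is a homeomorphism onto its image and that $N_{(\alpha, U)}\subset\R^n$ is metrizable in the relative topology. The only cosmetic difference is that the paper writes down the transported metric $d([x],[y])=d_{(\alpha,U)}\bigl((q|_{N_{(\alpha,U)}})^{-1}(x),(q|_{N_{(\alpha,U)}})^{-1}(y)\bigr)$ explicitly, whereas you invoke the general fact that a space homeomorphic to a metrizable space is metrizable.
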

  \begin{proof}
    Let
    $d_{(\alpha, U)}:N_{(\alpha, U)}\times N_{(\alpha, U)}\to\R$
    be the distance on $N_{(\alpha, U)}$ induced 
    by the euclidean distance on $\R^n$
    and the inclusion $N_{(\alpha, U)}\subset\R^n$.
    Since $N_{(\alpha, U)}$ was given the relative topology with
    respect to this inclusion the distance $d$ is compatible
    with the topology on $N_{(\alpha, U)}$.
    Define $d:q(N_{(\alpha, U)})\times q(N_{(\alpha, U)})\to\R$ by
    \[
      d([x], [y]) = d_{(\alpha, U)}\left(
        \left(q|_{N_{(\alpha, U)}}\right)^{-1}(x),
        \left(q|_{N_{(\alpha, U)}}\right)^{-1}(y)\right).
    \]
    It can easily be checked that $d$ is a distance on 
    $q(N_{(\alpha, U)})$. Since
    $q$ is open $d$ is compatible with the
    topology on $Q(M)$.
    Since $Q(M)$ is covered by 
    the set $\{q(N_{(\alpha, U)}):(\alpha, U)\in S_Q\}$, 
    $Q(M)$ is locally metrizable.
  \end{proof}

  \begin{proposition}\label{prop.imathconstruct}
    There exists an injective continuous function $\imath_Q:M\to Q(M)$
    that is a homeomorphism onto its image. The image of $M$ under
    $\imath_{Q}$ is an open dense subset of $Q(M)$.
  \end{proposition}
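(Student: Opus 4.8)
\emph{Proof proposal.} The plan is to assemble the statement from the properties of $q$ already established, exploiting the identification of $M$ with the ``interior part'' of $N_Q$. Four things must be checked: that $\imath_Q$ is well defined (this is already recorded in Definition~\ref{def.quotientmap}, since $\beta\circ\alpha^{-1}(\alpha(x))=\beta(x)$ whenever $x\in\dom(\alpha)\cap\dom(\beta)$, so $[\alpha(x)]=[\beta(x)]$), that it is injective, that it is a homeomorphism onto its image, and that its image is open and dense.

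First I would record the local picture. For every $\alpha\in\Atlas{M}$ the pair $(\alpha,\ran(\alpha))$ lies in $P_Q\subset S_Q$, and since $\ran(\alpha)$ is open the relative boundary $\partial_{\ran(\alpha)}\ran(\alpha)$ is empty, so $N_{(\alpha,\ran(\alpha))}=\ran(\alpha)$. On $\dom(\alpha)$ the map $\imath_Q$ factors as $q|_{N_{(\alpha,\ran(\alpha))}}\circ\alpha$, the composite of the homeomorphism $\alpha:\dom(\alpha)\to\ran(\alpha)$ with the homeomorphism $q|_{N_{(\alpha,\ran(\alpha))}}$ of Proposition~\ref{prop.qrestishomeo}; hence $\imath_Q|_{\dom(\alpha)}:\dom(\alpha)\to q(\ran(\alpha))$ is a homeomorphism. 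Since the domains $\dom(\alpha)$ form an open cover of $M$, this already yields continuity of $\imath_Q$. Moreover $\ran(\alpha)=N_{(\alpha,\ran(\alpha))}$ is one of the summands of $N_Q$, hence open in $N_Q$, so by openness of $q$ the set $q(\ran(\alpha))$ is open in $Q(M)$; consequently $\imath_Q$ carries every open $W\subset M$ to the open set $\bigcup_\alpha\imath_Q(W\cap\dom(\alpha))$. In particular $\imath_Q(M)=\bigcup_\alpha q(\ran(\alpha))$ is open in $Q(M)$.

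Next I would prove injectivity. Suppose $\imath_Q(x)=\imath_Q(y)$ and pick charts $\alpha,\beta$ with $x\in\dom(\alpha)$, $y\in\dom(\beta)$; then $\alpha(x)$ and $\beta(y)$ are equivalent in $N_Q$. Both points lie in ranges of charts, which are open, so neither lies in a relative boundary $\partial_U\ran(\cdot)$; therefore only the first clause of the equivalence relation of Definition~\ref{def.quotientmap} can apply, forcing $\alpha(x)\in\alpha(\dom(\alpha)\cap\dom(\beta))$ and $\beta\circ\alpha^{-1}(\alpha(x))=\beta(y)$, i.e.\ $\beta(x)=\beta(y)$, whence $x=y$ since $\beta$ is injective. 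Combined with the previous paragraph, $\imath_Q$ is an injective continuous open map, hence a homeomorphism onto its image.

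Finally, density. Every point of $Q(M)$ has the form $q(z)$ with $z\in N_{(\alpha,U)}$ for some $(\alpha,U)\in S_Q$. If $z\in\ran(\alpha)$ then $q(z)=\imath_Q(\alpha^{-1}(z))\in\imath_Q(M)$. Otherwise $z\in\partial_U\ran(\alpha)\subset\partial\ran(\alpha)$, so $z\in\overline{\ran(\alpha)}$ in $\R^n$; given any open neighbourhood $V$ of $q(z)$ in $Q(M)$, the set $q^{-1}(V)\cap N_{(\alpha,U)}$ is a relatively open neighbourhood of $z$ in $N_{(\alpha,U)}$, hence of the form $O\cap N_{(\alpha,U)}$ with $O$ open in $\R^n$, and $O$ must meet $\ran(\alpha)$ at some point $p$. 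Then $q(p)\in V$ and $q(p)=\imath_Q(\alpha^{-1}(p))\in\imath_Q(M)$, so $V\cap\imath_Q(M)\neq\varnothing$; as $V$ was arbitrary, $q(z)\in\overline{\imath_Q(M)}$. The only points requiring care throughout are keeping track of which clause of the equivalence relation is in force — interior points are identified only with interior points — and noticing that each $N_{(\alpha,\ran(\alpha))}$ sits in $N_Q$ as a full summand, so that its $q$-image is open; there is no substantive obstacle, the proposition being essentially a repackaging of the openness and continuity of $q$ together with Proposition~\ref{prop.qrestishomeo}.
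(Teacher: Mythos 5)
Your proof is correct and rests on the same underlying facts as the paper's: openness of $q$, the homeomorphism property of $q|_{N_{(\alpha,U)}}$ (Proposition \ref{prop.qrestishomeo}), and the observation that only the first clause of the equivalence relation in Definition \ref{def.quotientmap} can identify a point of $\ran(\alpha)$ with a point of $\ran(\beta)$, so that identified interior points must come from the same point of $M$. Your injectivity and density arguments coincide with the paper's almost verbatim. The one organisational difference is in how continuity and openness of $\imath_Q$ are obtained: you factor $\imath_Q|_{\dom(\alpha)}$ as $q|_{N_{(\alpha,\ran(\alpha))}}\circ\alpha$, a composite of two homeomorphisms, glue over the open cover $\{\dom(\alpha)\}$, and use that each $N_{(\alpha,\ran(\alpha))}=\ran(\alpha)$ is a full open summand of $N_Q$ so that its $q$-image is open. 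The paper instead verifies the same two properties by computing directly from the quotient topology that $\imath_Q^{-1}(V)=\bigcup_{(\alpha,U)\in S_Q}\alpha^{-1}\left(N_{(\alpha,U)}\cap q^{-1}(V)\right)$ and $q^{-1}(\imath_Q(V))=\bigcup_{(\alpha,U)\in S_Q}\alpha(\dom(\alpha)\cap V)$. Your route is slightly shorter because it reuses Proposition \ref{prop.qrestishomeo}, which immediately precedes this result in the appendix; the paper's set computations are more explicit but establish nothing you have not also established. I see no gap.
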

  \begin{proof}
    For each $x\in M$ choose some pair
    $(\alpha_x, U_x)\in S_Q$ so that
    $x\in\dom(\alpha_x)$, then $\alpha_x(x)\in N_Q$.
    Define $\imath_Q(x)=q(\alpha_x(x))$.
    The definition of $Q(M)$ implies that $\imath_Q$ is well defined
    and independent of the choice of $\alpha_x$. In particular, due to the
    definition of $Q(M)$,
    if $x\in M$ then $\imath_Q(x)=[\alpha(x)]$ 
    for any $\alpha\in\Atlas{M}$
    so that $x\in\dom(\alpha)$.

    Suppose that $x,y\in M$ are such that $\imath_Q(x)=\imath_Q(y)$.
    That is $[\alpha_x(x)]=[\alpha_y(y)]$, by definition this implies
    that $\alpha_y\circ\alpha_x^{-1}(\alpha_x(x))=\alpha_y(y)$.
    Thus $x=y$. Hence $\imath_Q$ is injective.

    I now show that $\imath_Q$ is continuous. 
    Let $V\subset Q(M)$ be open. By definition
    $q^{-1}(V)$ is open, hence for each 
    pair $(\alpha,U)\in S_Q$ the set
    $\alpha^{-1}(N_{(\alpha, U)}\cap q^{-1}(V))\subset M$ is open.
    I claim that
    \[
      \imath_Q^{-1}(V)=\bigcup_{{(\alpha,U)}\in S_Q}
      \alpha^{-1}(N_{(\alpha,U)}\cap q^{-1}(V)).
    \]
    If  this is true then $\imath_Q^{-1}(V)$ is open
    and therefore $\imath_Q$ is continuous.

    I now prove the claim. Let $x\in \imath_{Q}^{-1}(V)$,
    then $x\in M$ hence there exists $(\alpha, U)\in S_Q$ so that
    $x\in\dom(\alpha)$. Therefore $\imath_{Q}(x)=[\alpha(x)]\in V$.
    This implies that $\alpha(x)\in q^{-1}(V)$ and hence
    $x=\alpha^{-1}(\alpha(x))\in 
      \alpha^{-1}(N_{(\alpha, U)}\cap q^{-1}(V))$.
    That is $\imath_{Q}^{-1}(V)\subset
    \bigcup_{(\alpha, U)\in S_Q}
    \alpha^{-1}(N_{(\alpha, U)}\cap q^{-1}(V))$.
    Let $x\in
    \bigcup_{(\alpha, U)\in S_Q}
    \alpha^{-1}(N_{(\alpha, U)}\cap q^{-1}(V))$.
    Then there is some $\alpha$ so that
    $\alpha(x)\in N_{(\alpha, U)}\cap q^{-1}(V)$ thus
    $[\alpha(x)]\in V$. Since $\imath_{Q}$ is independent of the
    choice of chart, $\alpha_x$,
    $\imath_{Q}(x)=[\alpha(x)]$. Thus
    $x=\imath_{Q}^{-1}([\alpha(x)])\in\imath_{Q}^{-1}(V)$.
    Hence the claim holds and $\imath_{Q}$ is continuous.
    
    I now show that $\imath_{Q}$ is a homeomorphism
    onto its image. Let $V\subset M$ be open. 
    In order for $\imath_{Q}$ to be a homeomorphism onto
    its image it is sufficient to show that $\imath_{Q}(V)$ is open.
    By definition this requires that $q^{-1}(\imath_{Q}(V))$ is open.
    I claim that
    \[
      q^{-1}(\imath_{Q}(V))=\bigcup_{(\alpha, U)\in S_Q}
      \alpha(\dom(\alpha)\cap V).
    \]
    If the claim holds, then as
    $\alpha(\dom(\alpha)\cap V)$ is open $q^{-1}(\imath_{Q}(V))$
    must be open. Hence $\imath_{Q}$ would be an open map.

    I now prove the claim. Let $x\in q^{-1}(\imath_{Q}(V))$.
    Then $x\in N_{(\alpha, U)}$ for some 
    $(\alpha, U)\in S_Q$.
    Thus either $x\in\ran(\alpha)$ or $x\in\partial_{U}\ran(\alpha)$.
    By assumption $q(x)=[x]\in\imath_{Q}(V)$ and therefore
    $\imath_{Q}^{-1}([x])\in V$. Since $V\subset M$ the definition
    of the equivalence relation defining $q$ implies that
    $x\in\ran(\alpha)$. Let $y\in\dom(\alpha)$ be such that
    $\alpha(y)=x$. Then $q(x)=[\alpha(y)]\in\imath_{Q}(V)$. Since
    $\imath_{Q}$ is independent of the choice of $\alpha_y$
    this implies that $y=\imath_{Q}^{-1}([\alpha(y)])\in V$.
    Thus $y\in\dom(\alpha)\cap V$ and so $x\in\alpha(\dom(\alpha)\cap V)$.
    That is, $
    q^{-1}(\imath_{Q}(V))\subset\bigcup_{(\alpha, U)\in S_Q}
      \alpha(\dom(\alpha)\cap V)$.
    Let $x\in\alpha(\dom(\alpha)\cap V)$ for some 
    $(\alpha, U)\in S_Q$,
    then $\alpha^{-1}(x)\in V$. Since $\imath_{Q}$ is independent of the
    choice of $\alpha_x$, $\imath_{Q}(\alpha^{-1}(x))=[x]\in\imath_{Q}(V)$.
    Thus $x\in q^{-1}(\imath_{Q}(V))$ and
    hence the claim holds.

    Since $\imath_{Q}$ is continuous it is clear that $\imath_{Q}(M)$ is open.
    It remains to prove that $\overline{\imath_{Q}(M})=Q(M)$.
    Let $[x]\in Q(M)$ so that $x\not\in\imath_{Q}(M)$.
    By definition there exists a pair 
    $(\alpha, U)\in S_Q$
    so that $x\in N_{(\alpha, U)}$. If $x\in\ran(\alpha)$ then
    $\imath_{Q}(\alpha^{-1}(x))=[x]\in\imath_{Q}(M)$ which is a contradiction.
    Therefore $x\in\partial_{U}\ran(\alpha)$.
    Let $V\subset Q(M)$ be an open neighbourhood
    of $[x]$. Then $q^{-1}(V)\cap N_{(\alpha, U)}$ is open in
    $N_{(\alpha, U)}$
    and hence intersects $\ran(\alpha)$. Let $y\in\ran(\alpha)\cap
    q^{-1}(V)$. Then $\imath_{Q}(\alpha^{-1}(y))=q(y)\in V$.
    Hence every open neighbourhood of $x$ in $Q(M)$
    intersects $\imath_{Q}(M)$. Therefore $\overline{\imath_{Q}(M)}=Q(M)$
    as required.
  \end{proof}

  \begin{corollary}
    The space $Q(M)$ is separable.
  \end{corollary}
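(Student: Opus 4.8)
The plan is to exhibit an explicit countable dense subset of $Q(M)$ built from a countable dense subset of $M$. First I would recall that $M$ is, by definition, a second-countable space and hence separable; fix a countable dense set $D\subset M$. By Proposition \ref{prop.imathconstruct} the map $\imath_Q:M\to Q(M)$ is a homeomorphism onto its image, so $\imath_Q(D)$ is a countable set that is dense in $\imath_Q(M)$ equipped with its subspace topology.

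Next I would propagate this density up to all of $Q(M)$ using the other half of Proposition \ref{prop.imathconstruct}, namely that $\imath_Q(M)$ is (open and) dense in $Q(M)$. Density is transitive: if $A\subset B\subset X$ with $A$ dense in $B$ and $B$ dense in $X$, then $\overline{A}^X\supseteq\overline{B}^X=X$, so $A$ is dense in $X$. Applying this with $A=\imath_Q(D)$, $B=\imath_Q(M)$ and $X=Q(M)$ shows that $\imath_Q(D)$ is dense in $Q(M)$. Since $\imath_Q(D)$ is countable, $Q(M)$ is separable, as claimed.

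There is essentially no obstacle here; the only point requiring a moment's care is the interplay between the subspace-topology statement and the transitivity of density. If one prefers, the same conclusion can be reached by a direct open-set argument: any nonempty open $V\subset Q(M)$ meets the open dense set $\imath_Q(M)$, and $V\cap\imath_Q(M)$ is then a nonempty open subset of $\imath_Q(M)$, which therefore meets $\imath_Q(D)$. (First countability of $Q(M)$, established in Corollary \ref{cor.firstcoutnable}, is not needed for this argument but is available should one want a sequential formulation instead.)
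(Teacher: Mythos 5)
Your proposal is correct and is essentially the paper's own argument, which simply notes that $M$ is separable and $\overline{\imath_{Q}(M)}=Q(M)$; you have merely spelled out the implicit steps (pushing a countable dense set through the homeomorphism $\imath_Q$ and invoking transitivity of density). No issues.
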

  \begin{proof}
    The manifold $M$ is separable and $\overline{\imath_{Q}(M)}=Q(M)$.
    Therefore $Q(M)$ is separable.
  \end{proof}

\section{Properties of $Q(M)$ with respect to
  a compatible set of extensions}\label{sec.compcompexprop}

  This section contains the proofs of
  Proposition \ref{prop.Q(M)fromphi}
  and Corollary \ref{cor.Q(M)isamanifoldifenvboundaryman} which
  are restated here as Proposition \ref{app.prop.Q(M)fromphi}
  and Corollary \ref{app.cor.Q(M)isamanifoldifenvboundaryman}.

  \begin{proposition}[{Proposition \ref{prop.Q(M)fromphi}}]
    \label{app.prop.Q(M)fromphi}
    Let $\phi:M\to M_\phi$ be an envelopment.
    Then, in the notation of Proposition \ref{prop.inverseenvtoext},
    $
      Q=\{(\alpha\circ\phi,\ran(\alpha)):\alpha\in\Atlas{M_\phi},\,
        \dom(\alpha)\cap\partial\phi(M)\neq\varnothing\}
    $
    is a pairwise compatible set of extensions and
    there exists a homeomorphism, in the notation
    of Definition \ref{def.quotientmap},
    $f:Q(M)\to\overline{\phi(M)}$       
    so that $f\circ\imath_Q=\phi$,
    $f(q(N_{(\alpha\circ\phi)}))=\dom(\alpha)\cap\overline{\phi(M)}$
    and $\overline{\beta\circ\phi\circ\phi^{-1}\circ\alpha^{-1}}=
    \beta\circ\alpha^{-1}$.
  \end{proposition}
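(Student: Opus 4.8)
The plan is to build the homeomorphism $f$ directly from the local homeomorphisms $q|_{N_{(\alpha\circ\phi,\ran(\alpha))}}$ furnished by Proposition \ref{prop.qrestishomeo}, glued together using the fact that $\alpha_{\ran(\alpha)}\circ\phi$ is a chart of the pasted manifold $M_{(\alpha\circ\phi,\ran(\alpha))}$ from Proposition \ref{lem:embedding given by an extension}. First I would verify that $Q$ is a set of extensions: Proposition \ref{prop.inverseenvtoext} gives $(\alpha\circ\phi,\ran(\alpha))\in\EX{M}$ whenever $\dom(\alpha)\cap\partial\phi(M)\neq\varnothing$, so $Q\subset\EX{M}$. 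Pairwise compatibility is the first substantive point: given $p\in\BP{\alpha\circ\phi}\cap\ran(\alpha)$ and $q\in\BP{\beta\circ\phi}\cap\ran(\beta)$ with $(\alpha\circ\phi,\ran(\alpha),\{p\})\perp(\beta\circ\phi,\ran(\beta),\{q\})$, there is a sequence $(x_i)\subset\dom(\alpha\circ\phi)\cap\dom(\beta\circ\phi)$ with $\alpha\circ\phi(x_i)\to p$ and $\beta\circ\phi(x_i)\to q$. Applying $\alpha^{-1}$ and $\beta^{-1}$ (diffeomorphisms on $M_\phi$) gives $\phi(x_i)\to\alpha^{-1}(p)$ and $\phi(x_i)\to\beta^{-1}(q)$ in $M_\phi$; since $M_\phi$ is Hausdorff, $\alpha^{-1}(p)=\beta^{-1}(q)=:z\in\partial\phi(M)$. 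Then \emph{every} sequence whose $\alpha\circ\phi$-image converges to $p$ has $\phi$-image converging to $z$, and likewise for $q$, so the two boundary points are equivalent — indeed $(\alpha\circ\phi,\ran(\alpha),\{p\})\equiv(\beta\circ\phi,\ran(\beta),\{q\})$, which is exactly compatibility (Definition \ref{def.compatible_extensions}).

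Next I would define $f$. On each piece $N_{(\alpha\circ\phi,\ran(\alpha))}=\ran(\alpha\circ\phi)\cup\partial_{\ran(\alpha)}\ran(\alpha\circ\phi)$, observe that this set equals $\alpha(\dom(\alpha)\cap\phi(M))\cup(\dom(\alpha)\cap\partial\phi(M)) = \alpha(\dom(\alpha)\cap\overline{\phi(M)})$, using Proposition \ref{lem:embedding given by an extension} (which identifies $\partial_{\ran(\alpha)}\ran(\alpha\circ\phi) = \BP{\alpha\circ\phi}\cap\ran(\alpha)$ with $\partial\phi(M)\cap\dom(\alpha)$ transported by $\alpha$). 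So $\alpha^{-1}$ maps $N_{(\alpha\circ\phi,\ran(\alpha))}$ homeomorphically onto $\dom(\alpha)\cap\overline{\phi(M)}$, an open subset of $\overline{\phi(M)}$. Define $f$ on $q(N_{(\alpha\circ\phi,\ran(\alpha))})$ by $f = \alpha^{-1}\circ\left(q|_{N_{(\alpha\circ\phi,\ran(\alpha))}}\right)^{-1}$. I must check this is well defined on overlaps: if $q(x)=q(y)$ with $x\in N_{(\alpha\circ\phi,\ran(\alpha))}$, $y\in N_{(\beta\circ\phi,\ran(\beta))}$, then either $\beta\circ\alpha^{-1}(x)=y$ (on the manifold part, whence $\alpha^{-1}(x)=\beta^{-1}(y)$ trivially) or the boundary points are $\equiv$, and the compatibility argument above shows $\alpha^{-1}(x)=\beta^{-1}(y)$ in that case too. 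Hence $f:Q(M)\to\overline{\phi(M)}$ is well defined; it is continuous and open because each $q|_{N}$ and each $\alpha^{-1}$ is a homeomorphism and $q$ is open (Appendix \ref{sec.topQ(M)}), and it is bijective since the sets $\dom(\alpha)\cap\overline{\phi(M)}$ cover $\overline{\phi(M)}$ (every point of $\partial\phi(M)$ lies in some chart domain of $M_\phi$) and $f$ is injective by the overlap check. The identities $f\circ\imath_Q=\phi$, $f(q(N_{(\alpha\circ\phi,\ran(\alpha))}))=\dom(\alpha)\cap\overline{\phi(M)}$, and $\overline{\beta\circ\phi\circ\phi^{-1}\circ\alpha^{-1}}=\beta\circ\alpha^{-1}$ then follow by unwinding the definitions: for the last one, $\overline{\beta\circ\alpha^{-1}} = \left(q|_{N_{(\beta\circ\phi,\ran(\beta))}}\right)^{-1}\circ q|_{N_{(\alpha\circ\phi,\ran(\alpha))}}$ by Proposition \ref{prop.coordtransonboundarydef}, and composing with the $\alpha^{-1},\beta$ identifications collapses this to the genuine transition map $\beta\circ\alpha^{-1}$ on $M_\phi$.

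The main obstacle I expect is bookkeeping around the trivial extensions in $S_Q$ versus the extensions in $Q$: Definition \ref{def.quotientmap} forms $S_Q = P_Q\cup Q$, so $N_Q$ includes pieces $N_{(\gamma,\ran(\gamma))}$ for \emph{all} charts $\gamma\in\Atlas{M}$, not just those of the form $\alpha\circ\phi$. I need to confirm that these extra trivial pieces contribute nothing new to $Q(M)$ beyond $\imath_Q(M)$ — i.e. that every point of $\bigcup_{\gamma\in\Atlas{M}}\ran(\gamma)$ is $q$-equivalent to a point of some $\ran(\alpha\circ\phi)$ — which holds because $\phi$ is an envelopment so $\bigcup\dom(\alpha\circ\phi) = \phi^{-1}(\bigcup_{\alpha}\dom(\alpha)\cap\phi(M)) = M$. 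The second delicate point is surjectivity of $f$ onto the topological boundary $\partial\phi(M)$ rather than just onto $\phi(M)$: this is where one genuinely uses that charts of $M_\phi$ reach every boundary point, which is automatic since $M_\phi$ is a manifold and $\partial\phi(M)\subset M_\phi$. Once these two points are pinned down, everything else is routine verification of continuity/openness from the corresponding properties of $q$ established in Appendix \ref{sec.topQ(M)}, so I would relegate those checks to a sentence each.
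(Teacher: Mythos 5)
Your proposal follows essentially the same route as the paper: pairwise compatibility via the Hausdorffness of $M_\phi$, the local formula $f=\alpha^{-1}\circ\bigl(q|_{N_{(\alpha\circ\phi,\ran(\alpha))}}\bigr)^{-1}$ (the paper packages this as a single map $\hat f$ on $N_Q$ and shows it descends through $q$), well-definedness and injectivity both reduced to the same sequence-plus-Hausdorff argument, and the three stated identities by unwinding definitions. The one place you genuinely diverge is the treatment of the trivial extensions $P_Q$, and there your justification does not hold as written: you claim every point of $M$ lies in some $\dom(\alpha\circ\phi)$ with $(\alpha\circ\phi,\ran(\alpha))\in Q$ because $\bigcup_{\alpha\in\Atlas{M_\phi}}\dom(\alpha)\supset\phi(M)$, but $Q$ only admits charts satisfying $\dom(\alpha)\cap\partial\phi(M)\neq\varnothing$, so the relevant union is over a restricted family and need not cover $\phi(M)$ (it fails outright if $M_\phi$ has a component disjoint from $\partial\phi(M)$, and even in the connected case it requires the separate fact that two points of a connected manifold share a chart domain). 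Since your $f$ is only defined on the images of the $Q$-pieces, this leaves $f$ undefined on part of $\imath_Q(M)$ and also undercuts surjectivity onto $\phi(M)$. The repair is immediate and is exactly what the paper does: define $f$ on the trivial pieces directly by $\phi\circ\gamma^{-1}\circ\bigl(q|_{N_{(\gamma,\ran(\gamma))}}\bigr)^{-1}$ (which is forced by $f\circ\imath_Q=\phi$ in any case) and add the two easy extra overlap checks, trivial-with-trivial and trivial-with-$Q$-piece, corresponding to the paper's Cases 1 and 2. With that adjustment your argument is complete and matches the paper's.
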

  \begin{proof}
    I use the notation of Definition \ref{def.quotientmap} throughout 
    this proof.

    Let $(\alpha\circ\phi, \ran(\alpha)),
    (\beta\circ\phi, \ran(\beta))\in Q$, $x\in \BP{\alpha}\cap \ran(\alpha)$
    and $y\in\BP{\beta}\cap \ran(\beta)$ such that
    $(\alpha, U, \{x\})\perp(\beta, V, \{y\})$.
    Then there exists $(z_i)\subset M$
    so that $\alpha\circ\phi(z_i)\to x$ and $\beta\circ\phi(z_i)\to y$.
    Therefore $\phi(z_i)\to \alpha^{-1}(x)$ and
    $\phi(z_i)\to\beta^{-1}(y)$. Since $M_\phi$ is Hausdorff
    $\alpha^{-1}(x)=\beta^{-1}(y)$ which implies that
    $(\alpha\circ\phi,\ran(\alpha),\{x\})\equiv
    (\beta\circ\phi,\ran(\beta),\{y\})$. Thus $Q$ consists
    of pairwise compatible extensions.

    Define $\hat{f}:N_Q\to\overline{\phi(M)}$ by
    \[
      \hat{f}(x)=\left\{\begin{aligned}
        \phi\circ\alpha^{-1}(x) && 
          x\in N_{(\alpha, U)},\, (\alpha, U)\in P_Q\\
        \alpha^{-1}(x) && 
          x\in N_{(\alpha\circ\phi,\ran(\phi))},\,
          (\alpha\circ\phi, \ran(\alpha))\in Q.\\
          \end{aligned}
       \right.
    \]
    It is clear that $\hat{f}$ is well defined. 
    
    I now show that $\hat{f}$ descends to $Q(M)$. 
    Let $x,y\in N_Q$ so that
    $[x]=[y]$, I need to show that $\hat{f}(x)=\hat{f}(y)$.
    We have three cases to check. 
    \textbf{Case 1.} Let $x\in N_{(\alpha, U)}$
    with $(\alpha, U)\in P_Q$ and $y\in N_{(\beta, V)}$
    with $(\beta, V))\in P_Q$. Since $[x]=[y]$, by construction,
    $\beta\circ\alpha^{-1}(x)=y$
    Thus $\hat{f}(x)=\hat{f}(y)$.
    \textbf{Case 2.} Let $x\in N_{(\alpha, U)}$
    with $(\alpha, U)\in P_Q$ and $y\in N_{(\beta\circ\phi, \ran(\beta))}$
    with $(\beta\circ\phi, \ran(\beta)))\in Q$. Since $[x]=[y]$,
    by construction,
    $\beta\circ\phi\circ\alpha^{-1}(x)=y$. Thus
    $\hat{f}(x)=\phi\circ\alpha^{-1}(x)=\beta^{-1}(y)=\hat{f}(y)$.
    \textbf{Case 3.} Let $x\in N_{(\alpha\circ\phi, \ran(\alpha))}$
    with $(\alpha\circ\phi, \ran(\alpha))\in Q$ 
    and $y\in N_{(\beta\circ\phi, \ran(\beta))}$
    with $(\beta\circ\phi, \ran(\beta)))\in Q$. If
    $x\in\dom(\alpha\circ\phi)$
    then as $[x]=[y]$, by construction,
    $\beta\circ\phi\circ\phi^{-1}\circ\alpha^{-1}(x)=y$
    and thus $\hat{f}(x)=\alpha^{-1}(x)=\beta^{-1}(y)=\hat{f}(y)$.
    If $x\not\in\dom(\alpha\circ\phi)$
    then $x\in \partial_{\ran(\alpha)}\ran(\alpha\circ\phi)$.
    Choose $(x_i)\subset\ran(\alpha\circ\phi)$ so that
    $x_i\to x$. Since $[x]=[y]$ there exists a subsequence
    $(y_i)\subset(\phi^{-1}\circ\alpha^{-1}(x_i))$
    so that $(y_i)\subset\dom(\beta\circ\phi)$ and $\beta\circ\phi(y_i)\to
    y$. Thus there exists a sequence $(\alpha^{-1}(x_i))$ of $M_\phi$
    so that $\alpha^{-1}(x_i)\to \alpha^{-1}(x)$ 
    and so that $\beta^{-1}(y)$ is a limit point of $(\alpha^{-1}(x_i))$.
    Since $M_\phi$ is Hausdorff this implies that
    $\alpha^{-1}(x)=\beta^{-1}(y)$. Thus $\hat{f}(x)=\hat{f}(y)$.
    Hence $\hat{f}$ descends to a function $f:Q(M)\to\overline{\phi(M)}$,
    so that $f\circ q = \hat{f}$.

    It is clear that $\hat{f}$ is continuous, thus $f$ is continuous.
    To check that $f$ is open it suffices to prove that
    $f(q(V))$ is open for any $V\subset N_{(\alpha, U)}$ an open subset with
    $(\alpha, U)\in S_Q$. But $f\circ q|_{N_{(\alpha, U)}}=
    \hat{f}|_{N_{(\alpha, U)}}$ which is
    either $\phi\circ\alpha^{-1}$ or $\alpha^{-1}$ both of which are open
    maps. Therefore $f(q(V))$ is open in the subspace topology
    of $\hat{f}(N_{(\alpha, U)})$. Since $\hat{f}(N_{(\alpha, U)})$ is open
    in $\overline{\phi(M)}$ the set $f(q(V))$ is open in 
    $\overline{\phi(M)}$. Hence $f$ is open.
    Suppose that $f([x])=f([y])$, where $x\in N_{(\alpha,U)}$
    and $y\in N_{(\beta, V)}$. Then there exists
    $(z_i)\subset \phi(M)$ so that $z_i\to f([x])$
    and $z_i\to f([y])$.
    By construction of $f$, $f\circ q=\hat{f}$. Thus
    as both $f$ and $q$
    are open $\hat{f}$ is open. Since $\hat{f}$ is onto
    there exists $(x_i)\subset \dom(\alpha)$ 
    and $(y_i)\subset \dom(\beta)$ so that
    $x_i\to x$,
    $y_i\to y$, $\hat{f}(x_i)=\hat{f}(y_i)$ and
    $(\hat{f}(x_i))\subset(z_i)$.
    This is sufficient to show that $(\alpha, U, \{x\})\perp(\beta, V, \{y\})$.
    By assumption
    on $Q$ this implies that $[x]=[y]$ so that $f$ is injective.
    Let $x\in \overline{\phi(M)}$. If there exists $y\in M$
    so that $\phi(y)=x$ then there exists $\alpha\in \Atlas{M}$
    so that $\alpha(y)\in N_{(\alpha,\ran(\alpha))}\subset N_Q$.
    Thus $f(y)=\phi\circ\alpha^{-1}(\alpha(y))=x$.
    Otherwise
    there exists $\alpha\in\Atlas{M_\phi}$
    so that $x\in\dom(\alpha)$. Thus $(\alpha\circ\phi, \ran(\alpha))\in Q$
    and $\alpha(x)\in N_Q$. By construction
    $f([\alpha(x)])=\alpha^{-1}\circ\alpha(x)=x$.
    Hence $f$ is bijective.

    I now show that $f\circ\imath_Q=\phi$. Let $x\in M$ then 
    there
    exists $\alpha\in\Atlas{M}$ so that $x\in\dom(\alpha)$.
    Hence
    $f\circ\imath_Q(x)=f(q(\alpha(x)))=\hat{f}(\alpha(x))
      =\phi\circ\alpha^{-1}(\alpha(x))=\phi(x)$.
    
    I now show that
    $f(q(N_{(\alpha\circ\phi,\ran(\alpha))}))=
    \dom(\alpha)\cap\overline{\phi(M)}$
    for $(\alpha\circ\phi,\ran(\alpha))\in Q$. By definition
    $f(q(N_{(\alpha\circ\phi,\ran(\alpha))}))
      =\alpha^{-1}(N_{(\alpha\circ\phi,\ran(\alpha))})$.
    But, by definition,
    $N_{(\alpha\circ\phi,\ran(\alpha))}=\alpha(\dom(\alpha)\cap\phi(M))\cup
      \partial_{\ran(\alpha)}\alpha(\dom(\alpha)\cap\phi(M))$.
    Since $\ran(\alpha)$ and $\alpha(\dom(\alpha)\cap\phi(M))$
    are open
    $\partial_{\ran(\alpha)}\alpha(\dom(\alpha)\cap\phi(M))
    =\ran(\alpha)\cap\partial\alpha(\dom(\alpha)\cap\phi(M))$.
    Thus to give the set equivalence it is necessary and sufficient
    to show that
    $\alpha(\dom(\alpha)\cap\partial\phi(M))=
    \ran(\alpha)\cap \partial\alpha(\dom(\alpha)\cap\phi(M))$.
    This set equality follows directly from the definitions.

    Lastly, I show that $\overline{\beta\circ\phi\circ\phi^{-1}\circ\alpha}=
    \beta\circ\alpha^{-1}$. Suppose that
    $\overline{\beta\circ\phi\circ\phi^{-1}\circ\alpha}(x)=y$,
    where $x\in N_{(\alpha\circ\phi, \ran(\alpha))}$ and 
    $y\in N_{(\beta\circ\phi, \ran(\beta))}$.
    By definition $q(x)=q(y)$.     
    Thus $\alpha^{-1}(x)=\hat{f}(x)=f(q(x))=f(q(y))=\hat{f}(y)=\beta^{-1}(y)$
    and so $\beta\circ\alpha^{-1}(x)$.
    The reverse implication,
    $\beta\circ\alpha^{-1}(x)=y$ implies $q(x)=q(y)$, holds by definition
    of $Q(M)$. Hence
    $\overline{\beta\circ\phi\circ\phi^{-1}\circ\alpha^{-1}}(x)=y$,
    as required.
  \end{proof}

  \begin{corollary}[{Corollary \ref{cor.Q(M)isamanifoldifenvboundaryman}}]
    \label{app.cor.Q(M)isamanifoldifenvboundaryman}
    Let $\phi:M\to M_\phi$ be an envelopment and let $Q(M)$
    be as given in Definition \ref{def.quotientmap}.
    If $\overline{\phi(M)}$ is a manifold
    with boundary then $Q(M)$ is a manifold with boundary and
    $f$ is a diffeomorphism, where $f$ is given in Proposition 
    \ref{prop.Q(M)fromphi}.
  \end{corollary}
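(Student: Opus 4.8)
The plan is to obtain the differentiable structure on $Q(M)$ by transporting that of $\overline{\phi(M)}$ across the homeomorphism $f$ produced by Proposition~\ref{prop.Q(M)fromphi}, and then to check that this transported structure is precisely the one carried by the chart-like sets $N_{(\alpha\circ\phi,\ran(\alpha))}$, so that $f$ is in fact a diffeomorphism. I would first record what Proposition~\ref{prop.Q(M)fromphi} already supplies for $Q=\{(\alpha\circ\phi,\ran(\alpha)):\alpha\in\Atlas{M_\phi},\ \dom(\alpha)\cap\partial\phi(M)\neq\varnothing\}$: a homeomorphism $f:Q(M)\to\overline{\phi(M)}$ with $f\circ\imath_Q=\phi$, with $f(q(N_{(\alpha\circ\phi,\ran(\alpha))}))=\dom(\alpha)\cap\overline{\phi(M)}$, and with $\overline{\beta\circ\phi\circ\phi^{-1}\circ\alpha^{-1}}=\beta\circ\alpha^{-1}$. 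Since $Q$ is pairwise compatible, Proposition~\ref{lem.compatimpliehausdor} gives that $Q(M)$ is Hausdorff; together with second countability of $\overline{\phi(M)}\subset M_\phi$ carried back through $f$, this shows $Q(M)$ is a second countable Hausdorff space homeomorphic to a topological manifold with boundary, hence itself a topological manifold with boundary. Note this argument uses $f$ rather than the countability hypothesis of Proposition~\ref{prop.topmanB}, in accordance with the remark closing Section~\ref{sec.compatiblecharts}.

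Next I would place the smooth structure. Let $\mathcal{B}$ be the maximal $C^\infty$ atlas of the manifold with boundary $\overline{\phi(M)}$, chosen compatibly with $\Atlas{M_\phi}$ so that the restrictions $\alpha|_{\dom(\alpha)\cap\overline{\phi(M)}}$ of ambient charts, after shrinking domains if necessary, belong to $\mathcal{B}$. I would then declare $\{\psi\circ f:\psi\in\mathcal{B}\}$ to be an atlas on $Q(M)$: it covers $Q(M)$ because $\mathcal{B}$ covers $\overline{\phi(M)}$ and $f$ is a homeomorphism, and its transition maps are the $\psi_2\circ\psi_1^{-1}$, which are $C^\infty$; each chart maps into $\R^{n-1}\times\{x\in\R:x\ge 0\}$. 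Hence $Q(M)$, equipped with the maximal atlas generated by these maps, is a manifold with boundary. With respect to this atlas one has $\psi\circ f\circ f^{-1}\circ\psi^{-1}=\mathrm{id}$ for every $\psi\in\mathcal{B}$, so $f$ and $f^{-1}$ are $C^\infty$; therefore $f$ is a diffeomorphism.

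Finally I would verify that this structure is the one the construction of Section~\ref{sec.cicib} intends, i.e.\ that the maps $q|_{N_{(\alpha\circ\phi,\ran(\alpha))}}$ become charts. Combining $f(q(N_{(\alpha\circ\phi,\ran(\alpha))}))=\dom(\alpha)\cap\overline{\phi(M)}$ with the identity $f\circ q|_{N_{(\alpha\circ\phi,\ran(\alpha))}}=\alpha^{-1}$ extracted from the proof of Proposition~\ref{prop.Q(M)fromphi}, one gets $q|_{N_{(\alpha\circ\phi,\ran(\alpha))}}=f^{-1}\circ\alpha^{-1}$, so that $\alpha\circ f$ restricted to $q(N_{(\alpha\circ\phi,\ran(\alpha))})$ equals $\bigl(q|_{N_{(\alpha\circ\phi,\ran(\alpha))}}\bigr)^{-1}$; since $\alpha$ is a chart of $M_\phi$ and $f$ is a diffeomorphism, these are charts of $Q(M)$, and their transition maps are $\beta\circ\alpha^{-1}=\overline{\beta\circ\phi\circ\phi^{-1}\circ\alpha^{-1}}$, so the coordinate transformations on the boundary are genuine coordinate transformations, as claimed. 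The main point to be careful about is the compatibility clause in the choice of $\mathcal{B}$: the hypothesis ``$\overline{\phi(M)}$ is a manifold with boundary'' must be understood as asserting that its differentiable structure is the one induced from $M_\phi$ (so ambient charts restrict to charts), since otherwise $f$ is only a homeomorphism and the differentiable conclusion can genuinely fail — thus the hypothesis cannot be relaxed to a purely topological one.
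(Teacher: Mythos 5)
Your proposal is correct and takes essentially the same approach as the paper: both transport the smooth structure of $\overline{\phi(M)}$ to $Q(M)$ by declaring $\{\psi\circ f\}$ an atlas, note the transition maps reduce to those of $\overline{\phi(M)}$, and conclude $f$ is a diffeomorphism tautologically with respect to this structure. Your closing verification that the $q|_{N_{(\alpha\circ\phi,\ran(\alpha))}}$ become charts and your caveat about the hypothesis being the induced differentiable structure are sound elaborations of the paper's own brief remark to the same effect.
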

  \begin{proof}
    [{Proof of Corollary \ref{cor.Q(M)isamanifoldifenvboundaryman}}]
    I use the notation of Proposition \ref{prop.Q(M)fromphi}.
    Since $f$ is a homeomorphism $Q(M)$ is second countable and
    Hausdorff as $\overline{\phi(M)}$ is second countable and Hausdorff.
    Let $\Atlas{Q(M)}=\{\beta\circ f:\beta\in\Atlas{\overline{\phi(M)}}\}$.
    The transition maps between charts are
    $\beta\circ f\circ f^{-1}\circ\alpha^{-1}=\beta\circ\alpha^{-1}$ and
    therefore $Q(M)$ is a manifold with boundary. Note that this
    atlas is essentially that given by the extensions in $Q$
    due to the definition of $f$ in Proposition \ref{prop.Q(M)fromphi}.
    In order for $f$ to be a diffeomorphism it must be the case that
    all $\alpha\in\Atlas{Q(M)}$ and all $\beta\in\Atlas{\overline{\phi(M)}}$,
    $\beta\circ f\circ\alpha^{-1}$ must be a diffeomorphism.
    By definition there exists $\gamma\in \Atlas{\overline{\phi(M)}}$
    so that $\alpha=\gamma\circ f$.
    Thus
    $\beta\circ f\circ\alpha^{-1}=\beta\circ f\circ
    f^{-1}\circ\gamma=\beta\circ\gamma$ which is a diffeomorphism
    by assumption on $\overline{\phi(M)}$.
  \end{proof}

\section{Acknowledgements}
The author was supported by Marsden grant UOO-09-022. He would like to 
thank J\"org Frauendiener, Florian Beyer and J\"org Hennig for 
helpful comments during review
of the paper.

\end{document}